\newcounter{MYtempeqncnt}
\newtheorem{ex}{\textbf{Example}}
\newtheorem{theorem}{\textbf{Theorem}}
\newtheorem{lemma}[theorem]{\textbf{Lemma}}
\newtheorem{proposition}[theorem]{\textbf{Proposition}}
\newtheorem{remark}[theorem]{\textbf{Remark}}
\newcommand{\secref}[1]{Section~\ref{#1}}
\newcommand{\figref}[1]{Figure~\ref{#1}}
\newcommand{\theoref}[1]{Theorem~\ref{#1}}
\newcommand{\proref}[1]{Proposition~\ref{#1}}
\newcommand{\lemref}[1]{Lemma~\ref{#1}}
\newcommand{\remref}[1]{Remark~\ref{#1}}
\newcommand{\procref}[1]{Procedure~\ref{#1}}
\newcommand{\subparagraph}{}
\def\Tr{\mathrm{Tr}}
\title{Power-Constrained Sparse Gaussian Linear Dimensionality Reduction over Noisy Channels}
\author{Amirpasha Shirazinia, \textit{Member, IEEE} and Subhrakanti Dey, \textit{Senior Member, IEEE}}
\begin{document}
\maketitle
\thispagestyle{empty}
\begin{abstract}
In this paper, we investigate power-constrained sensing matrix design in a sparse Gaussian linear dimensionality reduction framework. Our study is carried out in a single--terminal setup as well as in a multi--terminal setup consisting of orthogonal or coherent multiple access channels (MAC). We adopt the mean square error (MSE) performance criterion for sparse source reconstruction in a system where source-to-sensor channel(s) and sensor-to-decoder communication channel(s) are noisy. Our proposed sensing matrix design procedure relies upon minimizing a lower-bound on the MSE in single-- and multiple--terminal setups. We propose a three-stage sensing matrix optimization scheme that combines semi-definite relaxation (SDR) programming, a low-rank approximation problem and power-rescaling. Under certain conditions, we derive closed-form solutions to the proposed optimization procedure. Through numerical experiments, by applying practical sparse reconstruction algorithms, we show the superiority of the proposed scheme by comparing it with other relevant methods. This performance improvement is achieved at the price of higher computational complexity. Hence, in order to address the complexity burden, we present an equivalent stochastic optimization method to the problem of interest that can be solved approximately, while still providing a  superior performance over the popular methods. \let\thefootnote\relax\footnote{A. Shirazinia (email: amirpasha.shirazinia@signal.uu.se) and S. Dey (email: subhrakanti.dey@signal.uu.se) are with Signals and Systems Division, Department of Engineering Sciences, Uppsala University, Uppsala Sweden.} 
\let\thefootnote\relax\footnote{This paper was partially presented at IEEE International Conference in Communications (ICC) 2015.}
\end{abstract}

\begin{IEEEkeywords}
	Compressed Sensing, Sparse Gaussian, Sensing Matrix, Low Rank, Convex Optimization, MSE, MAC.
\end{IEEEkeywords}

\section{Introduction} \label{sec:intro}
Sensor networks have recently attracted much research interest due to their practical popularity in accomplishing autonomous tasks, such as monitoring, sensing, computation and communication. Diverse applications of sensor networks motivate the deployment of new techniques and algorithms due to systems' limited resources, computational complexity and power consumption. In this regard, compressed sensing (CS) \cite{06:Donoho,06:Candes,08:Candes} can be considered as an emerging tool for signal compression and acquisition that significantly reduces costs due to sampling, leading to low-power consumption and low-bandwidth communication.

CS is a framework for simultaneous signal acquisition and compression, which is based on linear dimensionality reduction. The CS framework guarantees accurate (or, even exact) signal recovery from far fewer number of acquired measurements, under the condition that the source signal can be represented by a sparse form. Indeed, CS builds upon the fact that many types of physically-observed signals (such as voice, image, etc.) can be represented by only \text{a few} few non-zero components in a known basis, where these few components convey the most informative portion the signal. 

In order to clarify the concept of CS in relation to the objectives of our work, let us consider the linear reduction model $\mathbf{y = A x + n}$, where $\mathbf{x} \in \mathbb{R}^N$ is a sparse signal (in a known basis\footnote[1]{In a more precise manner, the CS measurement vector is written as $\mathbf{y = A x + n}$, where $\mathbf{x}$ is a non-sparse input vector. We assume that $\mathbf{x}$ has a sparse representation $\boldsymbol{\theta}$ in a known basis $\mathbf{\Psi}$ such that $\mathbf{x} = \mathbf{\Psi} \boldsymbol{\theta}$. Then, the CS measurement equation can be written as $\mathbf{y} = \mathbf{A \Psi} \boldsymbol{\theta} + \mathbf{n}$. Hence, if $\mathbf{\Psi}$ is known at the time of reconstruction, the original non-sparse vector $\mathbf{x}$ can be recovered from the reconstruction of the sparse vector $\boldsymbol{\theta}$ directly. In this paper, for simplicity of presentation, and without loss of generality, we assume that $\mathbf{\Psi}$ is equivalent to the identity transform, and therefore $\mathbf{x}$ is sparse.}) vector with a size higher than that of the measurement vector $\mathbf{y} \in \mathbb{R}^{M}$. Further, $\mathbf{A} \in \mathbb{R}^{M \times N}$ is a \textit{fat} sensing matrix (i.e., $M < N$), and $\mathbf{n} \in \mathbb{R}^{M}$ is the measurement noise vector. 

For the purpose of reconstructing the sparse vector from the CS measurements, several techniques have been developed based on convex
optimization methods (see e.g.~\cite{06:Candes2,07:Candes}), iterative greedy search algorithms (see
e.g. \cite{07:Tropp,08:Blumensath,09:Dai,12:Saikat}) and Bayesian estimation approaches (see
e.g. \cite{07:Larsson,08:Ji,09:Elad,10:Protter,12:Kun}). It should be mentioned that a careful design of the sensing matrix $\mathbf{A}$ is crucial in order to achieve good performance of sparse reconstruction algorithms. Moreover, as shown in \cite{14:Shirazi,13:Pasha-icassp2}, the sensing matrix has an important role not only  in determining the amount of estimation error, but  also in deciding  the amount of distortion due to quantization and transmission of CS measurements over digital communication channels. Therefore, in this paper, we are interested in the optimized design of the sensing matrix $\mathbf{A}$ with respect to an appropriate performance criterion. Regarding the theory and applications of CS, sensing matrices are generally divided into two main groups: deterministic or random. Although most early work in CS was based on stochastic sensing matrix generation, such matrices are often not feasible in practice for hardware implementations \cite{11:Duarte}. Motivated by this fact, we focus on deterministic sensing matrices, and show that an optimized design of a sensing matrix can substantially improve the performance of CS. 


\subsection{Background} \label{sec:background}
In the literature, available approaches for designing deterministic sensing matrices for estimation purposes can be divided into (but not limited to) three broad categories as described below.

	1) In the first category, the sensing matrix design is linked to a fundamental feature of the sensing matrix $\mathbf{A}$, called mutual coherence \cite{01:Donoho}, which is defined as follows
\begin{equation} \label{eq:mutual co}
    \mu \triangleq  \underset{i \neq j}{\max} \hspace{0.2cm} \frac{|\mathbf{A}_i^\top \mathbf{A}_j|}{\|\mathbf{A}_i\|_2 \|\mathbf{A}_j\|_2}, \hspace{0.2cm} 1 \leq i,j \leq N,
\end{equation}
where $\mathbf{A}_i$ denotes the $i^{th}$ column of $\mathbf{A}$. For a sensing matrix, a smaller value of the mutual coherence is desired in order for the matrix to behave similar to an orthogonal transform. The notion of mutual coherence is important since many worst-case performance guarantee bounds developed for sparse reconstruction algorithms often build upon its quantity (see e.g., \cite{10:Haim}). One of the early works within this category is \cite{07:Elad} that studied the optimal design of sensing matrix in the sense of reducing the mutual coherence (or average mutual coherence for average signal recovery performance).
	
	2) In the second category, in order to analytically address the sensing matrix design problem in a more  tractable manner, the sensing matrix $\mathbf{A}$ is optimized by minimizing the Frobenous--norm distance  between the Gram matrix of the sensing matrix (or, the product of the sensing matrix and a given matrix) and an identity matrix. This method, indeed, reveals how far the sensing matrix can be from an orthogonal transform. Formally, in this line of work, the following optimization problem is posed under relevant constraints: 
	\begin{equation} \label{eq:2nd cat}
	\begin{aligned}
		&\underset{\mathbf{A}}{{\text{minimize}}} \hspace{0.25cm} \| \mathbf{\Psi}^\top \mathbf{A}^\top \mathbf{A} \mathbf{\Psi}  - \mathbf{I}_N \|_F ,& \\
	\end{aligned}
	\end{equation}
 where $\| \cdot \|_F$ denotes the Frobenius norm and $\mathbf{\Psi}$ is a known matrix (e.g., a sparsifying dictionary) with appropriate dimension. Although the optimal sensing matrix with respect to minimizing \eqref{eq:2nd cat} does not necessarily minimize the mutual coherence, it has been shown that, using this method, the mutual coherence of the sensing matrix can be considerably reduced. Some examples within this category are \cite{11:Zelnik,10:vahid,10:vahid2,13:Gang}. Further, in \cite{09:Duarte}, simultaneous optimization of sensing matrix and sparsifying dictionary has been studied which follows the ideas in \cite{11:Aharon}. 
	
	3) While in the first and second categories, the sensing matrix is designed to address the worst-case performance of sparse reconstruction, the actual performance, such as estimation error or mean square error (MSE) of sparse source reconstruction, can be typically far less. Exploiting randomness in the sparse source vector, one might consider minimizing
\begin{equation} \label{eq:MSE cat3}
	\mathrm{MSE} \triangleq \mathbb{E}[\|\mathbf{x} - \widehat{\mathbf{x}}\|_2^2],
\end{equation}
under relevant constraints. Here, $\| \cdot \|_2$ denotes the $\ell_2$--norm, and $\widehat{\mathbf{x}}$ represents the output of decoder (e.g., a linear or non-linear estimator, a sparse reconstruction algorithm, etc.) at the receiving end. MSE is one of the most commonly-used criteria of accuracy for estimation and reconstruction purposes. Adopting the MSE as a targeted performance criterion in CS systems has called for redeveloping classical Bayesian methods for sparse reconstruction which have been extensively studied recently in \cite{08:Shihao,09:Elad,10:Protter,11:Turek,12:Peleg,14:Turek,11:Zhilin}. Optimizing sensing matrix with respect to minimizing the MSE is not only effective in improving the performance of Bayesian-based sparse reconstruction algorithms, but also of other types of sparse reconstruction algorithms, such as greedy search or convex algorithms. In \cite{12:Chen}, the authors proposed a two-stage optimization procedure in order to design a sensing matrix with respect to minimizing a lower-bound on the reconstruction MSE of a sparse source with known statistical properties. In the context of linear dimensionality reduction models with linear decoding, the authors in \cite{08:Jin,07:Schizas} have investigated optimized design of sensing matrices in a decentralized (multi--terminal) setting, where reconstruction MSE of a given (not necessarily sparse) source with known covariance matrix is considered subject to an average transmit power constraint. Also, Yuan \textit{et. al.} in \cite{14:Yuan} has studied the same optimization problem, in a single--terminal setup, under linear decoding, but by constraining the volume of error covariance matrix instead of a total power constraint.
	
\subsection{Contributions} \label{sec:contributions}
Our contributions, in this paper, lie in the third category described above. In particular, they are as described below: 

	\textit{i. Single--terminal Scenario:} We consider a correlated Gaussian sparse source vector (i.e., the non-zero components of the source signal are correlated Gaussian random variables), that is scaled linearly and subsequently corrupted by additive noise before compression/encoding via a CS-based sensing matrix. The resulting CS measurements are transmitted over a noisy (analog) communication channel, modeled by channel gain and additive noise, under an available average transmit power constraint. At the receiving-end, the source signal is decoded using an estimator (e.g., linear or non-linear estimator, sparse reconstruction algorithm, etc.)  to reconstruct the sparse source. 
	
	\textit{ii. Multi--terminal Scenario:} We consider a correlated Gaussian sparse source vector that is scaled linearly and  corrupted by additive noise, via separate terminals prior to compression/encoding via CS-based sensing matrices. The CS measurement vectors are transmitted over orthogonal or coherent multiple access channels (MAC), under an available average transmit power constraint. The fusion center (FC), at the receiving-end, decodes the sparse source signal.

In the above scenarios, we aim at optimizing the sensing matrix (or, matrices) by minimizing a \textit{lower-bound} on the MSE incurred by using the MMSE estimator (which by definition minimizes the MSE) of a sparse source signal. We adopt the MSE of the oracle MMSE estimator as the lower-bound on the MSE to be minimized under an average transmit power constraint. We propose a three-stage sensing matrix optimization procedure that combines semi-definite relaxation (SDR) programming, a low-rank approximation problem and power-rescaling. The solution to the low-rank approximation problem can be derived analytically, and the SDR programming problem can be solved using convex optimization techniques. Further, in the multi--terminal settings with orthogonal and coherent MAC, we formulate and solve convex optimization problems in order to optimally rescale the power. Under certain conditions, we derive closed-form solutions to the proposed optimization procedure. For example, in the single-terminal scenario, we analytically show that if the non-zero components of the sparse source are uncorrelated, and the source-to-sensor channels are perfect, then the optimal solutions to the three-stage optimization procedure are tight frames\footnote[1]{Formally, a frame is defined as a sequence of column vectors $\mathbf{A}_i$ of a matrix $\mathbf{A}$, and the frame is said to be tight if the associated  matrix $\mathbf{A}$ has a singular-value decomposition (SVD) of the form $\mathbf{U}_a [\mathbf{I}_M \;  \; \mathbf{0}_{M \times (N-M)}] \mathbf{V}_a^\top$, where $\mathbf{U}_a$ and $\mathbf{V}_a$ are unitary matrices with appropriate dimensions.} \cite{08:Kovac}, which are easy to construct, and play important roles in signal processing, denoising, coding, etc. Through numerical experiments, by applying practical sparse reconstruction algorithms, we compare our proposed scheme with other relevant methods. Experimental results show that the proposed approach improves the MSE performance by a large margin compared to other methods. This performance improvement is achieved at the price of higher computational complexity which arises from the fact that the objective function, i.e., the lower-bound, sweeps over all possible sparsity patterns of the source. In order to tackle the complexity issue, we develop an equivalent stochastic optimization method to the problem of interest, which can be approximately solved, while still providing a superior performance over the competing methods.

Our sensing matrix design for the oracle estimator is different with that of \cite{12:Chen} in the sense that we minimize the oracle MMSE estimator under a power constraint, while in \cite{12:Chen} the oracle least-square (LS) estimator is minimized. Further, we propose our design in a more general framework (single-- as well as multi--terminal settings) where observations before compression/encoding are scaled and subject to noise which is often the case in practice. Also, our optimization approach is different with those of \cite{08:Jin,07:Schizas} in the sense that we deal with sparse-structured sources, and formulate an objective function which takes into account the sparsity pattern of the source. Moreover, while the works \cite{08:Jin,07:Schizas} consider linear estimation for source reconstruction, we mainly deal with non-linear estimation for sparse source reconstruction.


\subsection{Organization} \label{sec:org}
The rest of the paper is organized as follows. In \secref{sec:problem}, we describe the single--terminal system model, and provide some preliminary analysis. Our optimization procedure for the single-terminal scenario is proposed in \secref{sec:design}, and closed-form solutions to the optimization procedure in some special cases are derived in \secref{sec:special}. We study sensing matrix design in multi--terminal systems for orthogonal MAC and coherent MAC in \secref{sec:multi_sys}. We discuss computational complexity of the proposed design procedure in \secref{sec:complexity}. The performance comparison of the proposed optimization schemes with other competing methods are made in \secref{sec:sim}, and conclusions are drawn in \secref{sec:conclusions}. All proofs are relegated  to the Appendix.

\subsection{Notations} \label{sec:notation}
We will denote vectors and matrices by bold lower-case and upper-case letters, respectively. The cardinality of a set will be denoted by $|\cdot|$. The square identity matrix and the square all-zero matrix of dimension $n$ will be denoted by $\mathbf{I}_n$ and $\mathbf{0}_n$, respectively. The matrix operators trace and Frobenius norm will be denoted by $\mathrm{Tr}(\cdot)$, $\| \cdot \|_F$, respectively, and matrix/vector transpose by $(\cdot)^\top$. The maximum and minimum eigenvalue of a matrix are denoted by $\lambda_{\max}(\cdot)$ and $\lambda_{\min}(\cdot)$, respectively. For a vector $\mathbf{x}$ of size $n$, $\mathrm{diag}(\mathbf{x})$ denotes an $n \times n$ diagonal matrix whose diagonal elements are specified by the entries of $\mathbf{x}$. Further, $\mathrm{blkdiag}(\mathbf{X}_1,\ldots, \mathbf{X}_N)$ denotes a matrix whose diagonal blocks consist of matrices $\mathbf{X}_1,\ldots, \mathbf{X}_N$, and off-diagonal blocks are filled with zero. We will use $\mathbb{E}[\cdot]$ to denote the expectation operator. The $\ell_2$-norm of a vector $\mathbf{x}$ of size $n$ will be denoted by $\|\mathbf{x}\|_2$. The notation $\mathbf{X} \succeq \mathbf{0}$ means that the matrix $\mathbf{X}$ is positive semi-definite. Also, the optimality in some sense is shown by $(\cdot)^\star$.


\section{Single--terminal System Model} \label{sec:problem}
We study the single--terminal setup shown in \figref{fig:diagram}.
\begin{figure} [!ht]
  \centering
  \psfrag{E}[][][0.75]{CS encoder}
  \psfrag{C}[][][0.75]{Channel}
   \psfrag{D}[][][0.75]{Decoder}
  \psfrag{x}[][][0.9]{$\mathbf{x}$}
  \psfrag{A}[][][0.9]{$\mathbf{A}$}
  \psfrag{y}[][][0.8]{$\mathbf{y}$}
  \psfrag{z}[][][0.8]{$\mathbf{z}$}
   \psfrag{v}[][][0.9]{$\mathbf{v}$}
  \psfrag{w}[][][0.9]{$\mathbf{w}$}
  \psfrag{H}[][][0.9]{$\mathbf{H}$}
  \psfrag{G}[][][0.9]{$\mathbf{G}$}
  \psfrag{h}[][][0.9]{$\widehat{\mathbf{x}}$}
    \includegraphics[width=9cm]{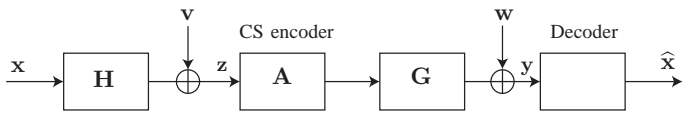}
  \caption{System model for a single--terminal system.}\label{fig:diagram}
  \centering
\end{figure}

\subsection{System Model and Key Assumptions} \label{sec:sys model}
We consider a $K$-sparse (in a known basis) vector $\mathbf{x} \in \mathbb{R}^N$ which is comprised of exactly $K$ random non-zero components ($K \ll N$). We define the support set, i.e., the locations of the non-zero components for the vector $\mathbf{x} \triangleq [x_1,\ldots,x_N]^\top$ as $\mathcal{S} \triangleq \{n \in \{1,2,\ldots,N\}: x_n \neq 0 \}$ with $|\mathcal{S}| =  K$. We assume that the non-zero components of the source vector $\mathbf{x}$ are distributed according to a Gaussian distribution $\mathcal{N}(\mathbf{0},\mathbf{R})$, where $\mathbf{R} = \mathbb{E}[\mathbf{x}_\mathcal{S} \mathbf{x}_\mathcal{S}^\top] \in \mathbb{R}^{K \times K}$ is the known covariance matrix of the $K$ non-zero components of $\mathbf{x}$, and $\mathbf{x}_\mathcal{S} \in \mathbb{R}^K$ denotes the components of $\mathbf{x}$ indexed by the support set $\mathcal{S}$. Note that the Gaussian sparse signal is compressible in nature. That is to say, the sorted amplitudes of a Gaussian sparse vector's entries, in descending order, decay fast with respect to sorted indices. Note that $\mathbf{R}$ is a positive definite matrix which is not necessarily scaled identity, i.e., the nonzero off-diagonal elements of $\mathbf{R}$ allow the non-zero components of $\mathbf{x}$ to be correlated. The elements of the support set $\mathcal{S}$ are drawn uniformly at random from the set of all ${N \choose K}$ possibilities, denoted by $\Omega$, i.e., $|\Omega|= {N \choose K}$. In other words, $p(\mathcal{S}) = 1/{N \choose K}$, where $p(\mathcal{S})$ represents the probability that a support set $\mathcal{S}$ is chosen from the set $\Omega$. The uniform distribution is chosen for simplicity of presentation, however, extensions to other types of distributions  are straightforward. We also denote the known covariance matrix of the entire sparse source vector by $\mathbf{R}_x \triangleq \mathbb{E}[\mathbf{xx}^\top] \in \mathbb{R}^{N \times N}$.

We model the uncertainty or mismatch in some physical aspect via a source-to-sensor channel described as following. The source is linearly scaled via a fixed matrix $\mathbf{H} \in \mathbb{R}^{L \times N}$ whose output is corrupted by an additive white noise $\mathbf{v} \in \mathbb{R}^L$ uncorrelated with the source, where $\mathbf{v} \sim \mathcal{N}(\mathbf{0},\sigma_v^2 \mathbf{I}_L)$. For transmission over a noisy channel, the noisy observations should be compressed and then encoded. Here, we assume that the bandwidth of the noisy observation $\mathbf{z \triangleq Hx + v} \in \mathbb{R}^L$ is compressed via a full row-rank compressed sensing transformation matrix $\mathbf{A} \in \mathbb{R}^{M \times L}$, where $M < L$. We also assume that $M < N$. The compressed measurements are simultaneously encoded under an available average transmit power constraint, and then transmitted over a channel, represented by a fixed channel matrix $\mathbf{G} \in \mathbb{R}^{M \times M}$ and additive white noise $\mathbf{w} \in \mathbb{R}^M$. We assume that the channel matrix is given by $\mathbf{G} = g \mathbf{I}_M$, and we let the additive channel noise be distributed as $\mathbf{w} \sim \mathcal{N}(\mathbf{0},\sigma_w^2 \mathbf{I}_M)$, which is uncorrelated with the source $\mathbf{x}$ and source-to-sensor noise $\mathbf{v}$. The rationale behind the scaled identity assumption of the channel matrix is that there is no inter-symbol interference between message transmissions over the communication link, and the channel is assumed to remain constant during each observation period \cite{08:Jin}. This technical assumption also makes our design procedure tractable. Now, the received vector at the decoder becomes
\begin{equation} \label{eq:measurement}
\begin{aligned}
	\mathbf{y} &= \mathbf{GAz + w} = g\mathbf{AH x} + \underbrace{g\mathbf{Av + w}}_{\triangleq \mathbf{n}}.& 
\end{aligned}
\end{equation}
Denoting the total noise in the system by $\mathbf{n} \triangleq g\mathbf{Av + w} \in \mathbb{R}^M$, then the covariance matrix associated with the total noise $\mathbf{n}$, denoted by $\mathbf{R}_n \in \mathbb{R}^{M \times M}$, can be calculated as 
\begin{equation} \label{eq:cov nosie pre}
\begin{aligned}
	\mathbf{R}_n \triangleq \mathbb{E}\{\mathbf{nn}^\top\} = g^2 \sigma_v^2 \mathbf{AA}^\top + \sigma_w^2 \mathbf{I}_M.
\end{aligned}
\end{equation}

Finally, at the receiving-end, the decoder which is characterized by a (potentially non-linear) mapping $\mathbb{R}^M \rightarrow \mathbb{R}^N$ provides the estimate of the source from corrupted measurements. We discuss the functionality of the decoder next. 

\subsection{Developing MMSE Estimation} \label{sec:perf}

Based on the aforementioned assumptions in \secref{sec:sys model}, it would be possible (see e.g. \cite{09:Elad}) to find a closed-form expression for the MMSE estimation of the source given the received signal vector $\mathbf{y}$. The MMSE estimator, denoted by $\widehat{\mathbf{x}}^\star \in \mathbb{R}^N$, minimizes the MSE by definition, and inherits the following structure (see e.g. \cite{09:Elad,10:Protter})
\begin{equation} \label{eq:struct MMSE}
	\widehat{\mathbf{x}}^\star = \sum_{\mathcal{S} \subset \Omega} \beta (\mathcal{S},\mathbf{y}) \mathbb{E}[\mathbf{x}| \mathbf{y},\mathcal{S}],
\end{equation}
where $\Omega$ represents the set of all ${N \choose K}$ support set possibilities, and $\beta(\mathcal{S},\mathbf{y})$'s are the weighting coefficients (non-linear in $\mathbf{y}$) such that $\sum_\mathcal{S} \beta(\mathcal{S},\mathbf{y}) = 1$. Further, $\mathbb{E}[\mathbf{x}| \mathbf{y},\mathcal{S}] \in \mathbb{R}^N$ is the conditional mean of the source given a possible support set $\mathcal{S}$ and observation $\mathbf{y}$. The conditional mean in \eqref{eq:struct MMSE} given a possible support set $\mathcal{S}$ can be expressed as (see  \cite{09:Elad}) $\mathbb{E}[\mathbf{x}| \mathbf{y},\mathcal{S}] = $
\begin{equation} \label{eq:oracle MMSE est}
	g \left(\mathbf{R}^{-1} \! + g^2   \left(\mathbf{H}^\top \mathbf{A}^\top \right)_\mathcal{S} \; \mathbf{R}_n^{-1} \; \left(\mathbf{A} \mathbf{H} \right)_\mathcal{S} \right)^{-1} \left(\mathbf{H}^\top \mathbf{A}^\top \right)_\mathcal{S} \; \mathbf{R}_n^{-1} \mathbf{y},
\end{equation}
where $(\cdot)_\mathcal{S}$ denotes the columns of a matrix indexed by the support set $\mathcal{S}$, and $\mathbf{R}_n$ is shown by \eqref{eq:cov nosie pre}. The MMSE estimator \eqref{eq:struct MMSE} gives the lowest possible MSE for a sparse source in the system of \figref{fig:diagram}. However, the MSE, itself, does not have a closed-form expression, and typically it is not straightforward to optimize the sensing matrix. In such situations, stochastic  optimization \cite{03:Spall}  based on gradient estimation methods (also known as {\em simulation based optimization methods}) can be an approach to address the optimization problem. However, this is beyond the scope of the current paper. Thus, we propose an alternative sensing matrix optimization method by minimizing a lower-bound on the MSE.

\subsection{Developing a Lower-bound on MSE} \label{sec:pre_analysis}

In order to analytically tackle the sensing matrix design problem, we consider a lower-bound on the MSE, and adopt the bound as the objective for the  design optimization procedure.

We bound the MSE of the MMSE estimator by that of the \textit{oracle} MMSE estimator, i.e., an \textit{ideal} estimator which has  perfect knowledge of the support set \textit{a priori}. By definition, the oracle estimator is calculated as the conditional expectation $\widehat{\mathbf{x}}^{(or)} \triangleq \mathbb{E}[\mathbf{x}| \mathbf{y},\mathcal{S}]$, as shown in \eqref{eq:oracle MMSE est}, given \textit{a priori} known (but random) support set $\mathcal{S}$ and noisy observations $\mathbf{y}$. Notice that the conditional expectation given the support set is Gaussian distributed, resulting in  the following MSE
\begin{equation} \label{eq:oracle_MSE}
\begin{aligned}
	&\mathrm{MSE}^{(lb)} \triangleq \mathbb{E}[\| \mathbf{x} - \widehat{\mathbf{x}}^{(or)} \|_2^2] =  \mathbb{E}[\| \mathbf{x}_\mathcal{S} - \widehat{\mathbf{x}}_\mathcal{S}^{(or)} \|_2^2] &\\
	&\overset{(a)}{=} \sum_{\mathcal{S} \subset \Omega} p(\mathcal{S})  \Tr\left\{ \left(\mathbf{R}^{-1} + g^2 (\mathbf{H}^\top \mathbf{A}^\top)_\mathcal{S} \;\mathbf{R}_n^{-1} \; (\mathbf{A} \mathbf{H})_\mathcal{S} \right)^{-1} \right\},&
\end{aligned}
\end{equation}
where $(a)$ follows by averaging over all random supports sets, and the results in Bayesian estimation (see, e.g., \cite{93:Kay}). Further, $p(\mathcal{S})= 1 / {N \choose K}$ represents the probability of random selection of the support set from the set of all possibilities $\Omega$.

To be able to formulate the MSE in \eqref{eq:oracle_MSE} in terms of the sensing matrix $\mathbf{A}$, we define, as in \cite{12:Chen}, the matrix $\mathbf{E}_\mathcal{S} \in \mathbb{R}^{N \times K}$ which is formed by taking an  identity matrix of order $N \times N$ and deleting the columns indexed by the set $\mathcal{S}$. Then, we rewrite 
\begin{equation} \label{eq:oracle_MSE_2}
	\mathrm{MSE}^{(lb)} \! \!= \! \sum_{\mathcal{S}} \! \frac{1}{{N \choose K}} \! \Tr \left\{ \left(\mathbf{R}^{-1} \!+\! g^2 \mathbf{E}_\mathcal{S}^\top \mathbf{H}^\top \mathbf{A}^\top \mathbf{R}_n^{\! -1} \mathbf{A} \mathbf{H} \mathbf{E}_\mathcal{S} \right)^{\!-1} \right\}.
\end{equation}

It should be mentioned that the sparsity level $|\mathcal{S}|$ is typically estimated in practice \cite{12:Lopes}. However,  throughout this paper, it is assumed to be \textit{perfectly} known. This is, of course, a generic trend in the theory of CS due to the analytical simplicity it offers \cite{08:Candes}. For example, several important greedy-search sparse reconstruction (see, e.g., OMP \cite{07:Tropp}, Subspace Pursuit \cite[Algo. 1]{09:Dai}, CoSamp \cite[Algo. 1]{10:Needell}) and Bayesian-based sparse reconstruction techniques (see, e.g., MAP, MMSE, RandOMP \cite{09:Elad}) have been developed based on the assumption of perfect knowledge of the sparsity level. Furthermore, performance guarantee bounds of several sparse reconstruction algorithms have also been studied based on this fact \cite{10:Haim}. If the sparsity level is not exactly known, but follows some statistical behavior with a known probability density function (pdf), the formulation in \eqref{eq:oracle_MSE_2} can be extended as follows. As suggested in \cite[Chap. 11]{10:Elad_book}, as opposed to $p(|\mathcal{S}|) = \delta(|\mathcal{S}| - K)$ which is the key assumption in our studied system model, i.e., $|\mathcal{S}| = K$ with probability $1$, one might assume that $p(|\mathcal{S}|) \propto  1/|\mathcal{S}|$ or $p(|\mathcal{S}|)  \propto \exp(-|\mathcal{S}|)$ in order to promote sparsity, where $p(|\mathcal{S}|)$ is the probability that the size of support set is $|\mathcal{S}|$.  Under this assumption, by marginalizing over the cardinality of the support set, it follows that

{\small\begin{equation} \label{eq:modified_MSE_lb}
\begin{aligned}
	&\mathrm{MSE}^{(lb)} = & \\ 
	&\sum_{|\mathcal{S}|=1}^{K'} \! \! p(|\mathcal{S}|) \! \! \! \sum_{\mathcal{S} \subset \Omega_{|\mathcal{S}|}'} \! \! \! p(\mathcal{S} \big| |\mathcal{S}|) \Tr \left\{ \left(\mathbf{R}^{-1} \!+\! g^2 \mathbf{E}_\mathcal{S}^\top \mathbf{H}^\top \mathbf{A}^\top \mathbf{R}_n^{\! -1} \mathbf{A} \mathbf{H} \mathbf{E}_\mathcal{S} \right)^{\!-1} \right\} &
\end{aligned}
	\end{equation}}

\noindent where $1 \leq K' \leq M$ is an integer denoting an upper-bound on the sparsity level, and $\Omega_{|\mathcal{S}|}'$ is a set of all possible support sets with cardinality $|\mathcal{S}|$. Further, $p(\mathcal{S} \big| |\mathcal{S}|)$ denotes the conditional probability of selection of the support $\mathcal{S}$  given cardinality $|\mathcal{S}|$ from the set of all possibilities $\Omega'_{|\mathcal{S}|}$. Our results, developed in this paper, can be easily extended under the new formulation in \eqref{eq:modified_MSE_lb}. However, for the sake of brevity and simplicity of presentation, we will use $\mathrm{MSE}^{(lb)}$  expressed by \eqref{eq:oracle_MSE_2} for our subsequent analysis.

\subsection{Relation to Mutual Coherence} \label{sec:rel_mu_coh}
As discussed earlier, our design goal is to optimize the sensing matrix $\mathbf{A}$ with respect to minimizing $\mathrm{MSE}^{(lb)}$. It should be mentioned that given a sensing matrix, the task of the decoder is to \textit{estimate} the sparse source with high accuracy by employing sparse reconstruction algorithms. For this purpose, sparse reconstruction algorithms need to \textit{detect} the support set precisely. Precision in support detection and accuracy in estimation of sparse reconstruction algorithms are typically determined with the help of mutual coherence $\mu$, shown by \eqref{eq:mutual co}. Let us denote by $\mathbf{S} \in \mathbb{R}^{N \times N}$ a diagonal matrix which makes the columns of the matrix $\mathbf{A}$ normalized to unit $\ell_2$-norm. This is done using the transformation $\widetilde{\mathbf{A}} = \mathbf{A} \mathbf{S}$, where $\widetilde{\mathbf{A}}$ is a sensing matrix with normalized columns. We also note that both matrices $\mathbf{A}$ and $\widetilde{\mathbf{A}}$ have the same mutual coherence.

In the following, we show a relation between $\mathrm{MSE}^{(lb)}$ and $\mu$ through a lower-bound and an upper-bound. We use a simplified measurement equation by assuming that $\mathbf{v=0}$ and $\mathbf{H} = \mathbf{I}_N$ in (\ref{eq:measurement}), that yields $\mathrm{MSE}^{(lb)} =  \sum_{\mathcal{S}} \! \frac{1}{{N \choose K}} \! \Tr \left\{ \left(\mathbf{R}^{-1} \!+\! \frac{g^2}{\sigma_w^2}  (\widetilde{\mathbf{A}} \mathbf{S}^{-1})_\mathcal{S}^\top (\widetilde{\mathbf{A}} \mathbf{S}^{-1})_\mathcal{S} \right)^{\!-1} \right\}$. We denote by $s_1$ and $s_2$ the maximum and minimum diagonal elements of $\mathbf{S}^{-1}$, respectively, then by the Gershgorin disc theorem, all the eigenvalues of $ \widetilde{\mathbf{A}}_\mathcal{S}^\top \widetilde{\mathbf{A}}_\mathcal{S}$ lie in the range $[1\!-\!(K-1)\mu,1\!+\!(K+1)\mu]$  \cite[Chapter 5.2.3]{10:Elad_book}, where it follows, using mathematical manipulations, that

\begin{equation} \label{eq:relation_mu}
\frac{K}{\lambda_{\max}(\mathbf{R}^{-1}) \!+\! \frac{g^2 s_1}{\sigma_w^2} (1+K\mu)} \leq \mathrm{MSE}^{(lb)} \! \leq \! \frac{K}{\lambda_{\min}(\mathbf{R}^{-1}) \!+\! \frac{g^2 s_2}{\sigma_w^2} (1\!-\!K \mu)}.
\end{equation}

Notice that the bounds in \eqref{eq:relation_mu} become tight when $\mu$ is small, and loose when it is large. In order to shed some light into the meaning of \eqref{eq:relation_mu}, we show the following example.

\begin{ex}
In this example, we show a comparison between the Gaussian sensing matrix (a standard approach in generating sensing matrices), and our proposed sensing matrix design (described in details in the subsequent sections) which is based on minimization of $\mathrm{MSE}^{(lb)}$. Basically, we numerically demonstrate how the proposed design affects the mutual coherence of a sensing matrix. The comparison, reported in \figref{fig:MSE_mu}(a), is demonstrated in terms of mutual coherence $\mu$ and number of measurements $M$. We set $N = 48$ and $K=2$, and vary $M$ from 12 to 48 in a step size 4. The covariance matrix $\mathbf{R}$ is generated according to the exponential model with correlation $\rho=0.5$ (cf. \secref{sec:setup}). Further, $g^2/\sigma_w^2 = 25$, and the total power (shown later by \eqref{eq:power}) is fixed at $P = 10$ dB. As can be seen, $\mu$ decreases by increasing $M$ since the sensing matrix behaves similar to an orthogonal transform. Moreover, the proposed design, which aims at minimizing $\mathrm{MSE}^{(lb)}$, provides a lower $\mu$ than that of the Gaussian sensing matrix. The efficiency of the proposed sensing matrix in lowering the mutual coherence can be seen from another angle by interpreting the bounds in \eqref{eq:relation_mu}. In \figref{fig:MSE_mu}(b), we plot the upper- and lower-bounds  in \eqref{eq:relation_mu}, as well as the value of $\mathrm{MSE}^{(lb)}$. We observe that when the number of measurements are sufficiently large for a sensing matrix to have a small $\mathrm{MSE}^{(lb)}$, then the upper- and lower-bounds become tight, i.e., $\mu$ becomes small. Thus, in this regime, since the proposed design is based on minimization of $\mathrm{MSE}^{(lb)}$, the optimized sensing matrix has a smaller $\mu$ compared to other types of sensing matrices. Note that, as mentioned earlier, a smaller $\mu$ generally improves the performance of sparse reconstruction algorithms in terms of, e.g., sparse reconstruction accuracy, support set detection, etc. In our numerical studies, later in \secref{sec:sim}, we will show how the proposed design will improve MSE performance as well as probability of support set recovery via numerical studies. 
A rigorous and general analysis of probability of support recovery with our proposed sensing matrix design and a specific sparse reconstruction algorithm is clearly difficult and will be pursued in future work.

\begin{figure} [!ht]
\hspace{-1.1cm}
    \includegraphics[width=1.2\columnwidth,height=4.5cm]{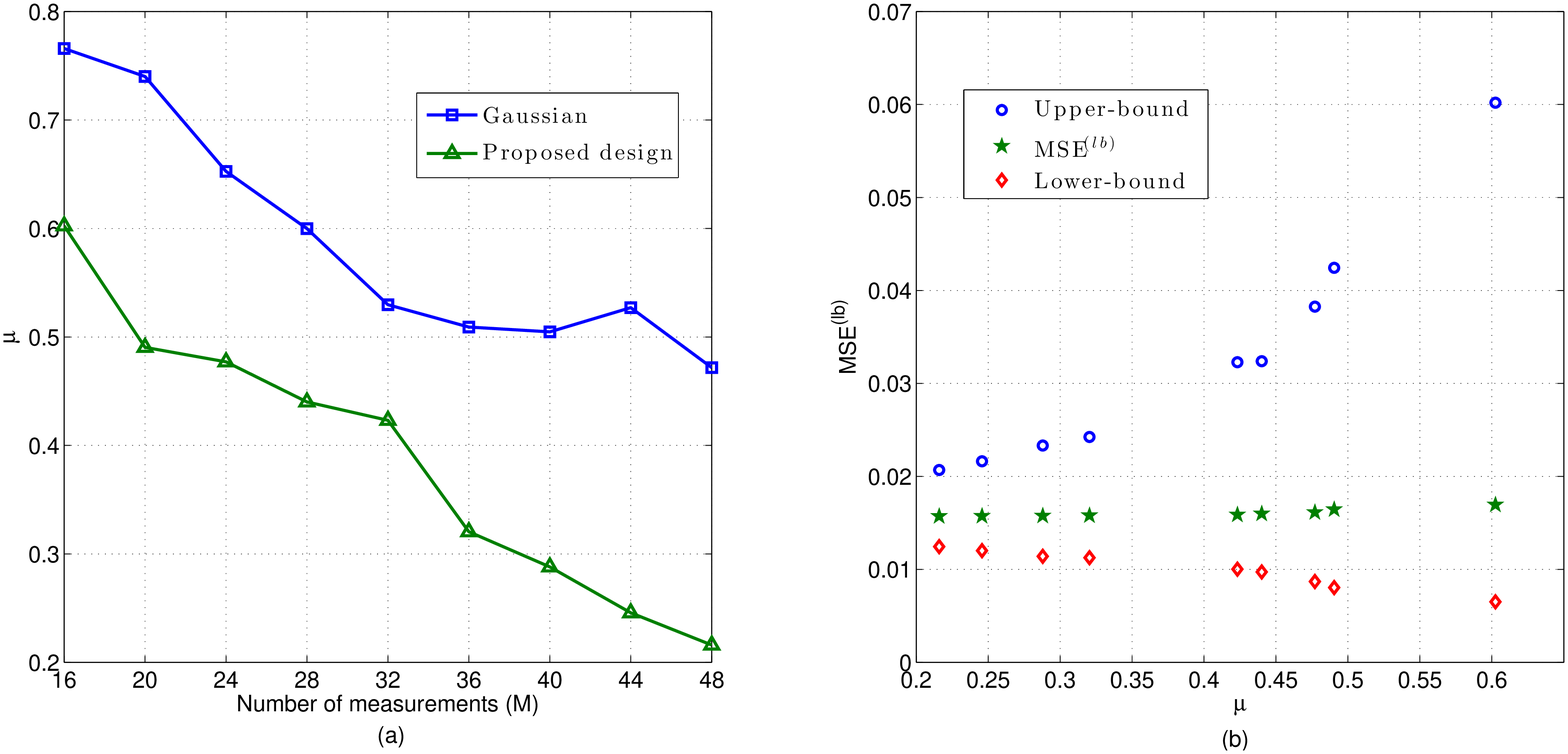}
  \caption{(a) A comparison between Gaussian sensing matrix and proposed sensing matrix design in terms of mutual coherence $\mu$ and number of measurements $M$. (b) The lower-bound and upper-bound on $\mathrm{MSE}^{(lb)}$. The lowest (or largest) $\mu$ corresponds to $M = 48$ (or $12$). }\label{fig:MSE_mu}
\end{figure}
\end{ex}
\section{Design Methodology for Single--terminal Case}  \label{sec:design}
In this section, we offer a design procedure for optimization of the sensing matrix $\mathbf{A}$ with the objective of minimizing the lower-bound \eqref{eq:oracle_MSE_2}. The optimization is performed at the decoder, and we assume that the decoder knows the sensor observation models and the source-to-sensor and sensor-to-decoder channels. 

We assume that the bandwidth is constrained, i.e., we have $M < N$ number of observations. Further, let $P$ be total available power, then the average transmit power constraint can be written as
\begin{equation} \label{eq:power}
\begin{aligned}
	\mathbb{E}[\|\mathbf{AHx + Av} \|_2^2] &= \mathbb{E}[ \Tr \{(\mathbf{AHx + Av})(\mathbf{AHx + Av})^\top\} ]& \\
	&= \Tr \{\mathbf{AH} \mathbf{R}_x \mathbf{H}^\top \mathbf{A}^\top + \sigma_v^2 \mathbf{AA}^\top\} \leq P.&
\end{aligned}
\end{equation}

Minimizing the lower-bound \eqref{eq:oracle_MSE_2} subject to the average power constraint \eqref{eq:power} yields 
\begin{equation} \label{eq:opt 1}
\begin{aligned}
	&\underset{\mathbf{A}}{\text{minimize}} \hspace{0.25cm} \mathrm{MSE}^{(lb)}& \\
	& \text{subject to} \hspace{0.25cm} \Tr \{\mathbf{A} (\mathbf{HR}_x \mathbf{H}^\top + \sigma_v^2 \mathbf{I}_N) \mathbf{A}^\top\} \leq P .& 
\end{aligned}
\end{equation}
The optimal solution of the optimization problem \eqref{eq:opt 1} is equivalent to that of the optimization problem given by the following theorem.
  
\begin{theorem} \label{theo:sing_ter}
	Let $\mathbf{Q} \triangleq \mathbf{A}^\top \mathbf{A} \in \mathbb{R}^{L \times L}$, then the optimization problem \eqref{eq:opt 1} can be equivalently solved by 

{\small \begin{equation} \label{eq:opt 1_final}
\begin{aligned}
	&\underset{\mathbf{Q},\mathbf{X}_\mathcal{S},\mathbf{Y}}{\text{minimize}} \hspace{0.25cm} \sum_\mathcal{S} \Tr \{\mathbf{X}_\mathcal{S}\}	& \\
	&\text{subject to} \hspace{0.25cm} \left[
\begin{array}{c c}
   \mathbf{R}^{-1} + \frac{g^2}{\sigma_w^2} \mathbf{D}_\mathcal{S}^\top  \mathbf{Q} \mathbf{D}_\mathcal{S}  - \mathbf{D}_\mathcal{S}^\top \mathbf{Y} \mathbf{D}_\mathcal{S} & \mathbf{I}_K \\ 
  \mathbf{I}_K  &   \mathbf{X}_\mathcal{S}   \\
\end{array}
\right] \succeq \mathbf{0} &\\
	&\hspace{1.5cm}  \left[
	\begin{array}{c c}
	   \mathbf{Y} & \frac{g}{\sigma_w}\mathbf{Q} \\ 
	  \frac{g}{\sigma_w}  \mathbf{Q}  &  \frac{\sigma_w^2}{g^2 \sigma_v^2} \mathbf{I}_L  + \mathbf{Q} \\
	\end{array}
	\right] \succeq \mathbf{0} , \; \forall  \mathcal{S}  & \\
	 &\hspace{1.5cm} \Tr \{(\mathbf{HR}_x \mathbf{H}^\top \!+\! \sigma_v^2 \mathbf{I}_L)  \mathbf{Q}\} \leq P ,\hspace{0.1cm} \mathbf{Q} \succeq \mathbf{0}, \hspace{0.1cm} \mathrm{rank}(\mathbf{Q}) \!=\! M,&
\end{aligned}
\end{equation}}
where $\mathbf{D}_\mathcal{S} \triangleq \mathbf{H} \mathbf{E}_\mathcal{S}$, and the matrices  $\mathbf{Q}$, $\mathbf{X}_\mathcal{S} \in \mathbb{R}^{K \times K}$ and $\mathbf{Y} \in \mathbb{R}^{L \times L}$ are optimization variables. 
\end{theorem}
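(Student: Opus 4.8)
The plan is to substitute $\mathbf{Q}\triangleq\mathbf{A}^\top\mathbf{A}$, observe that the objective and both constraints of \eqref{eq:opt 1} depend on $\mathbf{A}$ only through $\mathbf{Q}$, and then recast the resulting $\mathbf{Q}$-problem as the SDR \eqref{eq:opt 1_final} via an epigraph reformulation and two Schur-complement manipulations, with the auxiliary matrices $\mathbf{X}_\mathcal{S}$ and $\mathbf{Y}$ introduced only to linearize a matrix-inverse term and then minimized out. For the first part, write $\alpha\triangleq\sigma_w^2/(g^2\sigma_v^2)>0$ so that $\mathbf{R}_n=g^2\sigma_v^2(\mathbf{A}\mathbf{A}^\top+\alpha\mathbf{I}_M)$, and apply the push-through identity $\mathbf{A}^\top(\mathbf{A}\mathbf{A}^\top+\alpha\mathbf{I}_M)^{-1}=(\mathbf{A}^\top\mathbf{A}+\alpha\mathbf{I}_L)^{-1}\mathbf{A}^\top$ to get $g^2\mathbf{A}^\top\mathbf{R}_n^{-1}\mathbf{A}=\tfrac{1}{\sigma_v^2}(\mathbf{Q}+\alpha\mathbf{I}_L)^{-1}\mathbf{Q}$, a function of $\mathbf{Q}$ alone; each summand of \eqref{eq:oracle_MSE_2} then equals $\Tr\{\mathbf{M}_\mathcal{S}(\mathbf{Q})^{-1}\}$ with $\mathbf{M}_\mathcal{S}(\mathbf{Q})\triangleq\mathbf{R}^{-1}+\tfrac{1}{\sigma_v^2}\mathbf{D}_\mathcal{S}^\top(\mathbf{Q}+\alpha\mathbf{I}_L)^{-1}\mathbf{Q}\,\mathbf{D}_\mathcal{S}$, which is $\succeq\mathbf{R}^{-1}\succ\mathbf{0}$. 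The power constraint becomes $\Tr\{(\mathbf{HR}_x\mathbf{H}^\top+\sigma_v^2\mathbf{I}_L)\mathbf{Q}\}\le P$ by cyclicity of the trace. Since $g^2\mathbf{A}^\top\mathbf{R}_n^{-1}\mathbf{A}$ and $\mathbf{A}^\top\mathbf{A}$ are both invariant under $\mathbf{A}\mapsto\mathbf{U}\mathbf{A}$ for orthogonal $\mathbf{U}$, and every $\mathbf{Q}\succeq\mathbf{0}$ with $\mathrm{rank}(\mathbf{Q})=M$ equals $\mathbf{A}^\top\mathbf{A}$ for some full-row-rank $\mathbf{A}\in\mathbb{R}^{M\times L}$ (obtained from an eigendecomposition of $\mathbf{Q}$), solving \eqref{eq:opt 1} is equivalent to minimizing $\sum_\mathcal{S}\Tr\{\mathbf{M}_\mathcal{S}(\mathbf{Q})^{-1}\}$ over $\mathbf{Q}\succeq\mathbf{0}$ with $\mathrm{rank}(\mathbf{Q})=M$ and the above power constraint.

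Next I would pass to an epigraph form: introduce $\mathbf{X}_\mathcal{S}\succeq\mathbf{M}_\mathcal{S}(\mathbf{Q})^{-1}$ for each $\mathcal{S}$ and minimize $\sum_\mathcal{S}\Tr\{\mathbf{X}_\mathcal{S}\}$; because $\mathbf{M}_\mathcal{S}(\mathbf{Q})\succ\mathbf{0}$, a Schur complement converts $\mathbf{X}_\mathcal{S}\succeq\mathbf{M}_\mathcal{S}(\mathbf{Q})^{-1}$ into the first LMI of \eqref{eq:opt 1_final}, and conversely that LMI forces its top-left block to be positive definite. To make the top-left block affine in the variables, I linearize the inverse term as $\tfrac{g^2}{\sigma_w^2}\mathbf{Q}-\mathbf{Y}$: the second LMI, by a Schur complement with respect to its positive-definite block $\alpha\mathbf{I}_L+\mathbf{Q}$, is precisely $\mathbf{Y}\succeq\tfrac{g^2}{\sigma_w^2}\mathbf{Q}(\alpha\mathbf{I}_L+\mathbf{Q})^{-1}\mathbf{Q}$, and the elementary identity $\tfrac{g^2}{\sigma_w^2}\mathbf{Q}-\tfrac{g^2}{\sigma_w^2}\mathbf{Q}(\alpha\mathbf{I}_L+\mathbf{Q})^{-1}\mathbf{Q}=\tfrac{\alpha g^2}{\sigma_w^2}(\alpha\mathbf{I}_L+\mathbf{Q})^{-1}\mathbf{Q}=\tfrac{1}{\sigma_v^2}(\alpha\mathbf{I}_L+\mathbf{Q})^{-1}\mathbf{Q}$ (using $\alpha g^2/\sigma_w^2=1/\sigma_v^2$) shows that this amounts to $\tfrac{g^2}{\sigma_w^2}\mathbf{Q}-\mathbf{Y}\preceq g^2\mathbf{A}^\top\mathbf{R}_n^{-1}\mathbf{A}$, with equality at the minimal feasible $\mathbf{Y}$.

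It then remains to minimize out $\mathbf{Y}$ and $\mathbf{X}_\mathcal{S}$. For any feasible $\mathbf{Y}$, the top-left block $\mathbf{R}^{-1}+\mathbf{D}_\mathcal{S}^\top(\tfrac{g^2}{\sigma_w^2}\mathbf{Q}-\mathbf{Y})\mathbf{D}_\mathcal{S}$ of the first LMI is $\preceq\mathbf{M}_\mathcal{S}(\mathbf{Q})$; since that LMI forces it to be positive definite and matrix inversion reverses the order on the positive-definite cone, $\Tr\{\mathbf{X}_\mathcal{S}\}\ge\Tr\{\mathbf{M}_\mathcal{S}(\mathbf{Q})^{-1}\}$ for each $\mathcal{S}$, and both bounds are met simultaneously by choosing $\mathbf{Y}=\tfrac{g^2}{\sigma_w^2}\mathbf{Q}(\alpha\mathbf{I}_L+\mathbf{Q})^{-1}\mathbf{Q}$ (feasible, and the same for all $\mathcal{S}$) together with $\mathbf{X}_\mathcal{S}=\mathbf{M}_\mathcal{S}(\mathbf{Q})^{-1}$. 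Hence, after eliminating these variables, \eqref{eq:opt 1_final} reduces to minimizing $\sum_\mathcal{S}\Tr\{\mathbf{M}_\mathcal{S}(\mathbf{Q})^{-1}\}={N\choose K}\,\mathrm{MSE}^{(lb)}$ over the same $\mathbf{Q}$-feasible set as in the first step; since ${N\choose K}$ is a positive constant, the two problems have the same optimal $\mathbf{Q}^\star$, hence the same optimal sensing matrix up to the inconsequential orthogonal factor.

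I expect the main difficulty to lie in the last two steps: handling the push-through identity and the ensuing algebra carefully, verifying the two Schur-complement equivalences (in particular that the first LMI by itself already forces its top-left block to be positive definite, so that the equivalence with $\mathbf{X}_\mathcal{S}\succeq(\cdot)^{-1}$ is valid), and, most delicately, arguing that the LMI relaxation of the nonlinear matrix-inverse term through $\mathbf{Y}$ is \emph{tight at the optimum} — not merely an upper bound — and that this tightness holds simultaneously for every support set $\mathcal{S}$ because the optimal $\mathbf{Y}$ does not depend on $\mathcal{S}$.
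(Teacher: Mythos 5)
Your proposal is correct and follows essentially the same route as the paper's proof: eliminate $\mathbf{A}$ in favor of $\mathbf{Q}=\mathbf{A}^\top\mathbf{A}$ (your push-through identity yields exactly the same expression as the paper's Woodbury/matrix-inversion-lemma step), then epigraph reformulation with $\mathbf{X}_\mathcal{S}$ and two Schur complements with the slack $\mathbf{Y}$. Your explicit argument that the $\mathbf{Y}$-relaxation is tight at the optimum (by minimizing out $\mathbf{Y}$ and $\mathbf{X}_\mathcal{S}$ and using order reversal of matrix inversion) is a welcome refinement of a step the paper leaves implicit, but it is not a different proof.
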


\begin{remark}
	The last two constraints in \eqref{eq:opt 1_final} appear due to the variable transformation $\mathbf{Q} = \mathbf{A}^\top \mathbf{A}$ which is a rank-$M$ positive semi-definite matrix. The difficulty of \eqref{eq:opt 1_final} is due to the rank constraint which makes the optimization problem non-convex in general. However, the constraint can be relaxed, and the remaining problem becomes convex -- a technique known as semi-definite relaxation (SDR). Note that the optimal value of the SDR problem can  only be used to give a lower-bound on the optimal cost of the original problem.
\end{remark}

Next, we develop a three-stage optimization procedure, shown in \procref{proc}, in order to approximately solve for $\mathbf{A}$ in the non-convex optimization problem \eqref{eq:opt 1_final}. 
\begin{algorithm}
\caption{Three-stage optimization procedure for solving \eqref{eq:opt 1_final}}\label{proc}
\begin{algorithmic}[1]
\STATE{\textbf{input:} measurement vector: $\mathbf{y}$, sparsity level $K$, covariance matrices $\mathbf{R}_x$ and $\mathbf{R}$, channel gain $g$ and noise variances $\sigma_v^2$ and $\sigma_w^2$.}
\STATE{\textbf{Semi-definite relaxation (SDR): } Solve \eqref{eq:opt 1_final} by dropping the rank constraint for the optimal $\mathbf{Q}^\star$.}
\STATE{\textbf{Low-rank reconstruction: } Solve
\begin{equation} \label{eq:opt_rec_A_appx}
	\begin{aligned}
		&\mathbf{A}^\star = \underset{\mathbf{A}}{{\text{arg min}}} \hspace{0.25cm} \|   \mathbf{A}^\top \mathbf{A} - \mathbf{Q}^\star \|_F^2.&
	\end{aligned}
\end{equation}
\STATE{\textbf{Power-rescaling: } Scale $\mathbf{A}^\star$ to satisfy the power constraint by equality.}}
\end{algorithmic}
\end{algorithm}

The following remarks can be considered for implementation of \procref{proc}. 
\begin{itemize}
\item  The SDR problem in step (2) is convex in $\mathbf{Q}$, and can be solved using, for example, the interior point method \cite{04:Boyd_book}. Further, in some cases, closed-form solutions exist which we discuss later in the next section. 

\item Step (3) gives an approximate solution to the sensing matrix design problem. It can be shown that the optimal $\mathbf{A}^\star$ (with respect to \eqref{eq:opt_rec_A_appx}) has a closed-form solution. Let the eigen-value decomposition (EVD) of $\mathbf{Q}^\star$ be 
\begin{equation} \label{eq:evd_Q}
	\mathbf{Q}^\star = \mathbf{U}_q \mathbf{\Gamma}_q \mathbf{U}_q^\top,
\end{equation}
where $\mathbf{\Gamma}_q = \mathrm{diag}(\gamma_{q_1},\ldots,\gamma_{q_N})$, with $\gamma_{q_1} \geq \ldots, \geq \gamma_{q_N}$, and $\mathbf{U}_q \in \mathbb{R}^{L \times L}$ is a unitary matrix whose columns are the eigen-vectors associated with the eigen-values of $\mathbf{Q}^\star$. Then,  $\mathbf{A}^\star$ has the following structure \cite{11:Kokiopoulou}
\begin{equation} \label{eq:recstr A single}
	\mathbf{A}^\star = \mathbf{U}_a \left[\mathrm{diag}(\sqrt{\gamma_{q_1}} , \ldots , \sqrt{\gamma_{q_M}}) \;\; \mathbf{0}_{M \times (L-M)}\right] \mathbf{U}_q^\top,
\end{equation}
where $\mathbf{U}_a \in \mathbb{R}^{M�\times M}$ is an arbitrary unitary matrix. 

\item We note that the resulting $\mathbf{A}^\star$ does not generally satisfy the power constraint by equality since the eigen-values $\gamma_{q_{M+1}},\ldots,\gamma_{q_N}$ are dropped in \eqref{eq:recstr A single}. Therefore, in step (4) of \procref{proc}, we rescale the resulting $\mathbf{A}^\star$ by the constant $\sqrt{P / \Tr\{(\mathbf{H} \mathbf{R}_x \mathbf{H}^\top+ \sigma_v^2 \mathbf{I}_L)\mathbf{A}^{\star \top} \mathbf{A}^{\star}\}}$ in order to satisfy the power constraint by equality. 
\end{itemize}

\begin{ex}
In order to offer insights into the effect of the rank constraint (in the optimization problem \eqref{eq:opt 1_final}) on the performance, we illustrate, in \figref{fig:illustration}, the value of the lower-bound $\mathrm{MSE}^{(lb)}$ in \eqref{eq:oracle_MSE_2} as a function of number of measurements $M$ by comparing three methods. In the \textit{first ideal} method, labeled by `full-rank optimization', we only solve the SDR problem in step (2) of \procref{proc}, and evaluate the value of  $\mathrm{MSE}^{(lb)}$. Therefore, the optimization variable $\mathbf{Q}$ is \textit{ideally} assumed to be full rank, and the value of $\mathrm{MSE}^{(lb)}$ using the resulting SDR gives a lower-bound on the optimal cost provided the rank constraint is applied. In the \textit{second} method, labeled by `rank-constrained optimization (\procref{proc})', we exploit the proposed \procref{proc}, where rank constraint is taken into consideration. In the \textit{third} method, we use the randomization technique \cite{10:Zhi} instead of step (3) in \procref{proc} which is labeled by `rank-constrained optimization (randomization)'. More precisely, using this method, we assume that the resulting sensing matrix is given by $\mathbf{A} = \mathbf{V} \mathbf{\Gamma}^{1/2} \mathbf{U}_q^\top$, where $\mathbf{V} \in \mathbb{R}^{M \times L}$ is a random matrix whose element $[\mathbf{V}]_{ij}$ is drawn from $\mathcal{N}(0, 1/ \sqrt{M})$ such that $\mathbb{E}[\mathbf{A}^\top \mathbf{A}] = \mathbf{Q}$. Note that we rescale each realization of $\mathbf{A}$ to meet the power constraint, and choose the one which gives the lowest $\mathrm{MSE}^{(lb)}$. 

In this illustration, we assume that $\mathbf{H} = \mathbf{I}_N$ and $\mathbf{v=0}$, and use the parameters: $N = 24, K = 3, \sigma_w = 0.1, g = 0.5, P = 10 \text{ dB}$, and $\rho = 0.5$ (i.e., correlation coefficient, see later in \secref{sec:setup}.). Further, in the third method, we use $1000$ randomizations. 

It is observed that the proposed method (i.e., \procref{proc}) provides a lower MSE than the randomization technique. Moreover, \procref{proc} has a lower-complexity in step (3) since the randomization technique compares all possible values of $\mathrm{MSE}^{(lb)}$ due to the random realizations of the sensing matrix. The gap between the curves labeled by `rank-constrained optimization (\procref{proc})' and `full-rank optimization', which is not a large margin, shows the loss due to imposing the rank constraint. As can be seen the loss reduces as $M$ increases. One reason is that the approximation of the sensing matrix $\mathbf{A}$ from the variable $\mathbf{Q}$ in the optimization problem \eqref{eq:opt_rec_A_appx} becomes more accurate. As a final remark, we note that if the optimization problem \eqref{eq:opt 1_final} with the rank constraint is exactly solved using some technique, then the minimum cost would lie between these two curves.  

\begin{figure}  
  \begin{center}
  \includegraphics[width=0.9\columnwidth,height=7cm]{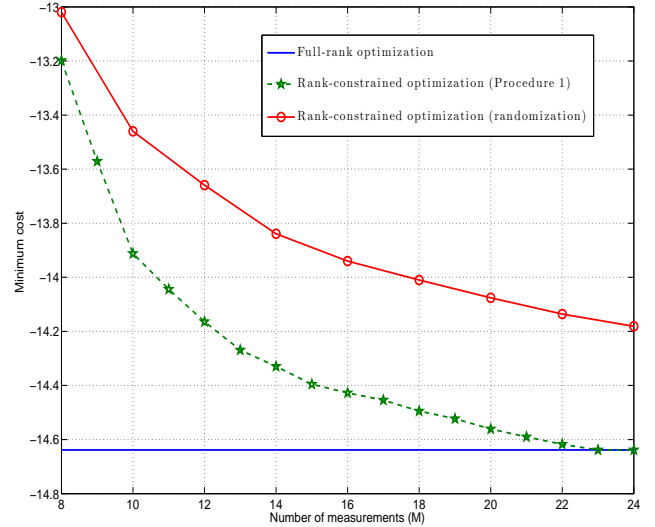}\\
  \caption{A comparison between the minimum cost of the objective in \eqref{eq:opt 1_final} with and without rank constraint.}
  \label{fig:illustration}
  \end{center}
\end{figure}

\end{ex}

\section{Special Cases} \label{sec:special}
Here, we investigate the optimization problem \eqref{eq:opt 1_final} and \procref{proc} for several special cases. 

\subsection{Special Case I ($\mathbf{R} = \sigma_x^2 \mathbf{I}_K$, $\mathbf{H} = \mathbf{I}_N$)} \label{sec:special case1}

Here, the motivation is to study a scenario where the non-zero components of the sparse source are uncorrelated, i.e., $\mathbf{R} = \sigma_x^2 \mathbf{I}_K$ and the observations before encoding are only subject to additive noise, i.e., $\mathbf{H} = \mathbf{I}_N$. Under these conditions, we have the following result.
\begin{proposition} \label{prop:sepc_1}
	Let $\mathbf{R} = \sigma_x^2 \mathbf{I}_K$ and $\mathbf{H} = \mathbf{I}_N$, then the solution to \procref{proc} is given by
	\begin{equation} \label{eq:closed_spec1_4_proof}
		\mathbf{A}^\star = \sqrt{\frac{KP}{M(\sigma_x^2 + K \sigma_v^2)}} \; \mathbf{U}_a [\mathbf{I}_M \; \; \mathbf{0}_{M \times 		(N-M)}] \mathbf{V}_a^\top, 
	\end{equation}
	where $\mathbf{U}_a \in \mathbb{R}^{M \times M}$ and $\mathbf{V}_a \in \mathbb{R}^{N \times N}$ are arbitrary unitary matrices. 
\end{proposition}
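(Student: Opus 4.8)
The plan is to follow the three stages of \procref{proc} in order; essentially all the work lies in identifying the output $\mathbf{Q}^\star$ of the semi-definite relaxation (SDR) stage and proving that it is a scaled identity. As a preliminary reduction I would eliminate $\mathbf{A}$ in favour of $\mathbf{Q} = \mathbf{A}^\top \mathbf{A}$: the push-through identity $\mathbf{A}^\top(g^2 \sigma_v^2 \mathbf{A}\mathbf{A}^\top + \sigma_w^2 \mathbf{I}_M)^{-1}\mathbf{A} = (g^2 \sigma_v^2 \mathbf{Q} + \sigma_w^2 \mathbf{I}_N)^{-1}\mathbf{Q}$, combined with the substitutions $\mathbf{R} = \sigma_x^2 \mathbf{I}_K$ and $\mathbf{H} = \mathbf{I}_N$ (so $L = N$ and $\mathbf{D}_\mathcal{S} = \mathbf{E}_\mathcal{S}$), turns \eqref{eq:oracle_MSE_2} into $\mathrm{MSE}^{(lb)} = \binom{N}{K}^{-1}\sum_\mathcal{S}\Tr\{(\sigma_x^{-2}\mathbf{I}_K + \mathbf{\Theta}_\mathcal{S})^{-1}\}$, where $\mathbf{\Theta} \triangleq g^2(g^2 \sigma_v^2 \mathbf{Q} + \sigma_w^2 \mathbf{I}_N)^{-1}\mathbf{Q} \succeq \mathbf{0}$ and $\mathbf{\Theta}_\mathcal{S} = \mathbf{E}_\mathcal{S}^\top \mathbf{\Theta} \mathbf{E}_\mathcal{S}$ is the $K \times K$ principal submatrix of $\mathbf{\Theta}$ indexed by $\mathcal{S}$. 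By construction of the SDP in \theoref{theo:sing_ter}, dropping the rank constraint makes Stage~(2) of \procref{proc} exactly the convex problem of minimizing this expression over $\mathbf{Q} \succeq \mathbf{0}$ with $\Tr\{(\mathbf{R}_x + \sigma_v^2 \mathbf{I}_N)\mathbf{Q}\} \le P$; and since $\mathbf{R} = \sigma_x^2 \mathbf{I}_K$ forces $\mathbf{R}_x = \mathbb{E}[\mathbf{x}\mathbf{x}^\top]$ to be a scaled identity (by the permutation symmetry of the uniform support prior), we may write $\mathbf{R}_x + \sigma_v^2 \mathbf{I}_N = c_0 \mathbf{I}_N$ and the constraint becomes simply $\Tr\{\mathbf{Q}\} \le P/c_0$.

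The heart of the argument is to show this relaxed problem is solved by a scaled identity. Because $t \mapsto (\sigma_x^{-2} + t)^{-1}$ is convex on $[0,\infty)$, Jensen's inequality applied to the eigenvalues of $\mathbf{\Theta}_\mathcal{S}$ gives $\Tr\{(\sigma_x^{-2}\mathbf{I}_K + \mathbf{\Theta}_\mathcal{S})^{-1}\} \ge K(\sigma_x^{-2} + K^{-1}\Tr\{\mathbf{\Theta}_\mathcal{S}\})^{-1}$, with equality iff $\mathbf{\Theta}_\mathcal{S}$ is a scaled identity; averaging over $\mathcal{S}$, invoking the counting identity $\sum_\mathcal{S}\Tr\{\mathbf{\Theta}_\mathcal{S}\} = \binom{N-1}{K-1}\Tr\{\mathbf{\Theta}\}$, and applying Jensen once more yields $\mathrm{MSE}^{(lb)} \ge K(\sigma_x^{-2} + N^{-1}\Tr\{\mathbf{\Theta}\})^{-1}$, a strictly decreasing function of $\Tr\{\mathbf{\Theta}\}$. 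It then suffices to maximize $\Tr\{\mathbf{\Theta}\} = \sum_i g^2 q_i / (g^2 \sigma_v^2 q_i + \sigma_w^2)$ over the eigenvalues $q_i \ge 0$ of $\mathbf{Q}$ subject to $\sum_i q_i \le P/c_0$: since $q \mapsto g^2 q/(g^2 \sigma_v^2 q + \sigma_w^2)$ is increasing and concave, the maximum is attained by spending the entire budget evenly, i.e. at $\mathbf{Q}^\star = (P/(N c_0))\mathbf{I}_N$. This $\mathbf{Q}^\star$ makes $\mathbf{\Theta}$ — hence every $\mathbf{\Theta}_\mathcal{S}$ — a scaled identity, so it attains the derived lower bound with equality and is therefore a global minimizer of the Stage-(2) problem.

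Finally I would push $\mathbf{Q}^\star$ through Stages~(3)--(4). Since $\mathbf{Q}^\star$ is a scaled identity, all its eigenvalues equal $\gamma_q := P/(N c_0)$ and every orthonormal basis is an eigenbasis; substituting $\mathbf{\Gamma}_q = \gamma_q \mathbf{I}_N$ and an arbitrary unitary $\mathbf{U}_q =: \mathbf{V}_a$ into \eqref{eq:recstr A single} collapses it to $\mathbf{A}^\star = \sqrt{\gamma_q}\, \mathbf{U}_a [\mathbf{I}_M \ \ \mathbf{0}_{M \times (N-M)}] \mathbf{V}_a^\top$ with $\mathbf{U}_a$ an arbitrary unitary — exactly the stated structure. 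Stage~(4) rescales $\mathbf{A}^\star$ by $\sqrt{P / \Tr\{(\mathbf{R}_x + \sigma_v^2 \mathbf{I}_N)(\mathbf{A}^\star)^\top \mathbf{A}^\star\}}$; with $(\mathbf{A}^\star)^\top \mathbf{A}^\star$ equal to $\gamma_q$ times a rank-$M$ projection and $\mathbf{R}_x + \sigma_v^2 \mathbf{I}_N = c_0 \mathbf{I}_N$, this reduces to an elementary scalar computation that reproduces the coefficient in \eqref{eq:closed_spec1_4_proof}.

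The main obstacle is the middle step: one must verify that the $\mathbf{Q}$ maximizing $\Tr\{\mathbf{\Theta}\}$ is the \emph{same} $\mathbf{Q}$ that makes both Jensen steps tight, so that the derived lower bound is genuinely attained rather than merely a bound. One must also dispatch the degenerate case $K = 1$ separately — there $\mathbf{\Theta}_\mathcal{S}$ is a scalar, the first Jensen step is vacuous, $\mathrm{MSE}^{(lb)}$ is insensitive to the off-diagonal entries of $\mathbf{Q}$, and the scaled-identity solution is only one optimizer among many (consistent with the ``arbitrary unitary'' freedom in the statement) — and check that the implicit eigenvalue ceiling $\mathbf{\Theta} \prec \sigma_v^{-2}\mathbf{I}_N$ is never active, so it can be ignored. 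The remaining pieces — the push-through identity, the binomial counting identity, and the Stage-(3)/(4) bookkeeping — are routine.
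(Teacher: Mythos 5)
Your proof is correct and follows essentially the same route as the paper's: pass to $\mathbf{Q}=\mathbf{A}^\top\mathbf{A}$, reduce each term to a trace of an inverse of $\sigma_x^{-2}\mathbf{I}_K$ plus a principal submatrix, lower-bound it (your first Jensen step is exactly the paper's $\Tr\{\mathbf{B}^{-1}\}\ge K^2/\Tr\{\mathbf{B}\}$ inequality), conclude that a scaled identity $\mathbf{Q}^\star$ is optimal, and read off \eqref{eq:closed_spec1_4_proof} from \eqref{eq:recstr A single} after power rescaling. Your version is actually more complete at precisely the point you flag as the main obstacle: the paper asserts that $\mathbf{Q}=\alpha\mathbf{I}_N$ minimizes the objective because it achieves equality in the per-support bound, without verifying that the ($\mathbf{Q}$-dependent) right-hand side of that bound is simultaneously minimized; your explicit maximization of $\Tr\{\mathbf{\Theta}\}$ over the eigenvalues of $\mathbf{Q}$, followed by the check that the maximizer also renders every $\mathbf{\Theta}_\mathcal{S}$ a scaled identity, closes that gap. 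One caveat: your (correct) counting identity $\sum_\mathcal{S}\mathbf{E}_\mathcal{S}\mathbf{E}_\mathcal{S}^\top=\binom{N-1}{K-1}\mathbf{I}_N$ disagrees with \lemref{lem:prop}, which states the constant as $\binom{N}{K}/K$; since this constant determines $\mathbf{R}_x$ and hence your $c_0$, carrying your identity through Stage (4) yields the coefficient $\sqrt{NP/(M(K\sigma_x^2+N\sigma_v^2))}$ rather than the one printed in \eqref{eq:closed_spec1_4_proof} --- a discrepancy that originates in the paper's lemma, not in your argument.
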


\begin{remark}
The scaling factor on the right-hand side in \eqref{eq:closed_spec1_4_proof} is to satisfy the power constraint. Further, the structure of the sensing matrix in \eqref{eq:closed_spec1_4_proof} is normally referred to as \textit{`tight frame'} \cite{08:Kovac}, which is easy to construct, and plays important roles in signal processing, denoising, coding, etc. Such structure is also optimal in certain cases, for example, the optimality of a tight frame-structured sensing matrix has been shown in \cite{12:Chen} with respect to minimizing the LS-based oracle estimator. 
\end{remark}

\subsection{Special Case II ($\mathbf{R} = \sigma_x^2 \mathbf{I}_K$, $\mathbf{v} = \mathbf{0}, \mathbf{H}:$ square full rank)} \label{sec:special case1}
Following the assumptions in this case, we have the proposition below.
\begin{proposition} \label{prop:sepc_2}
	Let $\mathbf{R} = \sigma_x^2 \mathbf{I}_K$ and $\mathbf{v} = \mathbf{0}$, and consider that $\mathbf{H}$ is a square full-rank matrix such that its SVD can be written as $\mathbf{H} = \mathbf{U}_h \mathbf{\Gamma}_h  \mathbf{V}_h^\top$, where $ \mathbf{U}_h$ and  $\mathbf{V}_h$ are $N \times N$ unitary matrices and $\mathbf{\Gamma}_h = \mathrm{diag}(\gamma_{h_1},\gamma_{h_2}, \ldots, \gamma_{h_N} )$ is a diagonal matrix containing singular values $\gamma_{h_1}<\gamma_{h_2}< \ldots < \gamma_{h_N}$. Then, the solution to \procref{proc} is given by
	\begin{equation} \label{eq:closed_spec2_4}
		\mathbf{A}^\star = \sqrt{\frac{KP}{M\sigma_x^2 }} \; \mathbf{U}_a [\mathbf{\Gamma}_a \; \; \mathbf{0}_{M \times (N-M)}] 		\mathbf{U}_h^\top, 
	\end{equation}
	where $\mathbf{U}_a \in \mathbb{R}^{M \times M}$ is an arbitrary unitary matrix, and $\mathbf{\Gamma}_a = \mathrm{diag}(\gamma_{h_1}^{-1},\ldots,\gamma_{h_M}^{-1})$. 
\end{proposition}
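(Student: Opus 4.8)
\emph{Proof proposal.} The plan is to reduce Proposition~\ref{prop:sepc_2} to Proposition~\ref{prop:sepc_1} by a linear change of variables that absorbs $\mathbf{H}$, and then to push that substitution through the three stages of Procedure~\ref{proc}. Observe that we need only trace what Procedure~\ref{proc} outputs, not establish global optimality, since the procedure is itself a heuristic.

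First I would specialize Theorem~\ref{theo:sing_ter}. With $\mathbf{v}=\mathbf{0}$ we have $\sigma_v^2=0$, so $\mathbf{R}_n=\sigma_w^2\mathbf{I}_M$, the auxiliary variable $\mathbf{Y}$ in \eqref{eq:opt 1_final} drops out, and (using $\mathbf{R}=\sigma_x^2\mathbf{I}_K$) the SDR step reduces to
\begin{equation}
\begin{aligned}
&\underset{\mathbf{Q}\succeq\mathbf{0}}{\text{minimize}}\quad \sum_\mathcal{S}\tfrac{1}{\binom{N}{K}}\,\Tr\!\left\{\!\left(\tfrac{1}{\sigma_x^2}\mathbf{I}_K+\tfrac{g^2}{\sigma_w^2}\mathbf{E}_\mathcal{S}^\top\mathbf{H}^\top\mathbf{Q}\mathbf{H}\mathbf{E}_\mathcal{S}\right)^{\!-1}\right\}\\
&\text{subject to}\quad \Tr\{\mathbf{H}\mathbf{R}_x\mathbf{H}^\top\mathbf{Q}\}\le P .
\end{aligned}
\end{equation}
Because $\mathbf{H}$ is square and full rank, $\mathbf{Q}\mapsto\widetilde{\mathbf{Q}}\triangleq\mathbf{H}^\top\mathbf{Q}\mathbf{H}$ is a rank-preserving bijection of the PSD cone; substituting $\mathbf{Q}=\mathbf{H}^{-\top}\widetilde{\mathbf{Q}}\mathbf{H}^{-1}$ turns the objective into $\sum_\mathcal{S}\tfrac{1}{\binom{N}{K}}\Tr\{(\tfrac{1}{\sigma_x^2}\mathbf{I}_K+\tfrac{g^2}{\sigma_w^2}\mathbf{E}_\mathcal{S}^\top\widetilde{\mathbf{Q}}\mathbf{E}_\mathcal{S})^{-1}\}$ and, by the cyclic property of the trace, the constraint into $\Tr\{\mathbf{R}_x\widetilde{\mathbf{Q}}\}\le P$. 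Since $\mathbf{R}=\sigma_x^2\mathbf{I}_K$ together with a uniform support makes $\mathbf{R}_x$ a scaled identity, this is exactly the SDR problem that arises in the proof of Proposition~\ref{prop:sepc_1} (with $\sigma_v=0$).

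From there I would reuse the argument behind Proposition~\ref{prop:sepc_1}: the reduced problem is convex and monotone decreasing in $\widetilde{\mathbf{Q}}$, and it is invariant under conjugation by any coordinate permutation; averaging an optimizer over all permutations and using convexity together with tightness of the power constraint forces $\widetilde{\mathbf{Q}}^\star=\lambda\mathbf{I}_N$ for an explicit $\lambda$. Mapping back, $\mathbf{Q}^\star=\lambda(\mathbf{H}\mathbf{H}^\top)^{-1}=\lambda\mathbf{U}_h\mathbf{\Gamma}_h^{-2}\mathbf{U}_h^\top$. Feeding this into the low-rank reconstruction step, \eqref{eq:recstr A single} keeps the $M$ largest eigenvalues of $\mathbf{Q}^\star$ and their eigenvectors; these eigenvalues are $\lambda\gamma_{h_i}^{-2}$, and because $\gamma_{h_1}<\cdots<\gamma_{h_N}$ the $M$ largest correspond to the $M$ \emph{smallest} singular values of $\mathbf{H}$, giving $\mathbf{A}^\star=\sqrt{\lambda}\,\mathbf{U}_a[\mathbf{\Gamma}_a\;\;\mathbf{0}_{M\times(N-M)}]\mathbf{U}_h^\top$ with $\mathbf{\Gamma}_a=\mathrm{diag}(\gamma_{h_1}^{-1},\dots,\gamma_{h_M}^{-1})$. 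Finally, the power-rescaling step multiplies $\mathbf{A}^\star$ by $\sqrt{P/\Tr\{\mathbf{H}\mathbf{R}_x\mathbf{H}^\top\mathbf{A}^{\star\top}\mathbf{A}^\star\}}$; since $\mathbf{H}\mathbf{R}_x\mathbf{H}^\top$ and $\mathbf{A}^{\star\top}\mathbf{A}^\star$ are simultaneously diagonalized by $\mathbf{U}_h$, this trace is a one-line computation and fixes the scalar prefactor to the value in \eqref{eq:closed_spec2_4}.

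The main obstacle is the step just recalled: showing the SDR optimizer in the transformed coordinates is genuinely a scaled identity rather than some other permutation-invariant matrix (which a priori could also carry a rank-one $\mathbf{1}\mathbf{1}^\top$ component); this is precisely the content supplied by the proof of Proposition~\ref{prop:sepc_1}, so I would not redo it. The only genuinely new bookkeeping is (i) checking the change of variables is clean --- in particular that it returns the objective to the exact Special Case~I form and the power budget to a plain trace constraint --- and (ii) tracking the spectral order reversal under $(\cdot)^{-1}$, which is what pins $\mathbf{\Gamma}_a$ to the $M$ smallest singular values of $\mathbf{H}$. A minor but worth-noting point is that the low-rank reconstruction must be performed in the \emph{original} $\mathbf{Q}$-coordinates: $\widetilde{\mathbf{Q}}^\star$ is isotropic, so truncating it would give an arbitrary tight frame, but $\mathbf{Q}^\star=\lambda(\mathbf{H}\mathbf{H}^\top)^{-1}$ is not isotropic, and its truncation is forced into the $\mathbf{U}_h$-aligned form of \eqref{eq:closed_spec2_4}.
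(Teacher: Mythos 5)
Your proposal is correct and follows essentially the same route as the paper: the paper's proof likewise reduces the SDR objective to the form where the trace inequality $\Tr\{\mathbf{B}^{-1}\}\geq K^2/\Tr\{\mathbf{B}\}$ forces $\mathbf{H}^\top\mathbf{Q}\mathbf{H}=\alpha\mathbf{I}_N$ (your change of variables $\widetilde{\mathbf{Q}}=\mathbf{H}^\top\mathbf{Q}\mathbf{H}$ is just a cleaner packaging of that step), obtains $\mathbf{Q}^\star=\alpha(\mathbf{H}\mathbf{H}^\top)^{-1}$, and then truncates to the $M$ largest eigenvalues, which correspond to the $M$ smallest singular values of $\mathbf{H}$, before power rescaling. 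Your explicit remarks on the rank-preserving bijection and on performing the truncation in the original $\mathbf{Q}$-coordinates are accurate and, if anything, make the argument slightly more careful than the paper's.
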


\begin{remark}
The scaling factor on the right-hand side in \eqref{eq:closed_spec2_4} is to satisfy the power constraint. According to \eqref{eq:closed_spec2_4} in \proref{prop:sepc_2}, the effective received measurement matrix at the decoder, i.e., $g \mathbf{AH}$, has a tight-frame structure. Interestingly, it can be also shown (see e.g. \cite{11:Duarte}) that the optimized sensing matrix derived in \eqref{eq:closed_spec2_4}, without the scaling factor, coincides with the optimal solution to the optimization problem
	\begin{equation*}
	\begin{aligned}
		&\underset{\mathbf{A}}{{\text{minimize}}} \hspace{0.25cm} \| \mathbf{H}^	\top \mathbf{A}^\top \mathbf{A} \mathbf{H}  - \mathbf{I}_N \|_F ,& \\
	\end{aligned}
	\end{equation*}
which belongs to the second category of sensing matrix design problems introduced in \secref{sec:background}. Therefore, the proposed design is capable of reducing the mutual coherence of the effective measurement matrix which, in general, improves the performance of sparse reconstruction algorithms. Also, notice that the optimal sensing matrix in \eqref{eq:closed_spec2_4} (without the power scaling factor) is the closest design -- in the Frobenius distance -- to the identity transform.
\end{remark}

\subsection{Special Case III ($\mathbf{w= 0} $, $\mathbf{H} = \mathbf{I}_N$, $\mathbf{R} = \sigma_x^2 \mathbf{I}_K$)} \label{sec:special case3}
Here, we investigate a case where the additive channel noise in the system is negligible, i.e., $\mathbf{w=0}$,  the observations before encoding are only subject to additive noise, i.e., $\mathbf{H} = \mathbf{I}_N$, and the non-zero components of the sparse source vector are uncorrelated, i.e., $\mathbf{R} = \sigma_x^2 \mathbf{I}_K$. In this case, the optimal sensing matrix to the original problem \eqref{eq:opt 1} can be derived which is given by the following proposition. 
\begin{proposition} \label{prop:sepc_3}
	Let $\mathbf{w= 0} $, $\mathbf{H} = \mathbf{I}_N$, $\mathbf{R} = \sigma_x^2 \mathbf{I}_K$. Then, the solution to the optimization problem \eqref{eq:opt 1} is given by
	\begin{equation} \label{eq:spec_case_3}
	\mathbf{A}^\star = \sqrt{\frac{KP}{M( \sigma_x^2 + K\sigma_v^2)}} \mathbf{U}_a [\mathbf{I}_{M} \: \: \mathbf{0}_{M \times (N-M)}],
\end{equation}
where $\mathbf{U}_a \in \mathbb{R}^{M \times M}$ is an arbitrary unitary matrix.  
\end{proposition}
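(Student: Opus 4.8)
\textbf{Proof proposal for \proref{prop:sepc_3}.}

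The plan is to specialize the lower bound \eqref{eq:oracle_MSE_2} under the three hypotheses and show that it collapses to a problem solvable in closed form without invoking the full SDR machinery of \procref{proc}. First I would set $\sigma_w^2=0$, so that $\mathbf{R}_n = g^2\sigma_v^2\mathbf{A}\mathbf{A}^\top$ by \eqref{eq:cov nosie pre}; since $\mathbf{A}$ has full row rank, $\mathbf{A}\mathbf{A}^\top$ is invertible, and with $\mathbf{H}=\mathbf{I}_N$ the information term of \eqref{eq:oracle_MSE_2} becomes $g^2(\mathbf{A}\mathbf{H})_\mathcal{S}^\top\mathbf{R}_n^{-1}(\mathbf{A}\mathbf{H})_\mathcal{S} = \sigma_v^{-2}\,\mathbf{A}_\mathcal{S}^\top(\mathbf{A}\mathbf{A}^\top)^{-1}\mathbf{A}_\mathcal{S}$, in which $g$ cancels entirely. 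Introducing the orthogonal projector $\mathbf{\Pi}\triangleq\mathbf{A}^\top(\mathbf{A}\mathbf{A}^\top)^{-1}\mathbf{A}\in\mathbb{R}^{N\times N}$ onto the row space of $\mathbf{A}$ (symmetric, idempotent, rank $M$), every summand of $\mathrm{MSE}^{(lb)}$ reduces to $\Tr\{(\sigma_x^{-2}\mathbf{I}_K + \sigma_v^{-2}\,\mathbf{E}_\mathcal{S}^\top\mathbf{\Pi}\mathbf{E}_\mathcal{S})^{-1}\}$, which depends on $\mathbf{A}$ only through $\mathbf{\Pi}$ and in particular is invariant under any invertible rescaling of $\mathbf{A}$.

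The payoff of this reduction is that the power constraint in \eqref{eq:opt 1} decouples from the cost: any $M$-dimensional row space can be realized by a representative $\mathbf{A}$ of arbitrarily small norm, so the constraint is always feasible and merely fixes the final scaling. Hence \eqref{eq:opt 1} reduces to minimizing $\binom{N}{K}^{-1}\sum_\mathcal{S}\Tr\{(\sigma_x^{-2}\mathbf{I}_K + \sigma_v^{-2}\,\mathbf{E}_\mathcal{S}^\top\mathbf{\Pi}\mathbf{E}_\mathcal{S})^{-1}\}$ over all rank-$M$ orthogonal projectors $\mathbf{\Pi}$, where $\mathbf{E}_\mathcal{S}^\top\mathbf{\Pi}\mathbf{E}_\mathcal{S}$ is the $\mathcal{S}\times\mathcal{S}$ principal submatrix of $\mathbf{\Pi}$. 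I would then exploit the symmetry supplied by $\mathbf{R}=\sigma_x^2\mathbf{I}_K$ and $\mathbf{H}=\mathbf{I}_N$: because the sum ranges over all supports, the cost is invariant under simultaneous row/column permutation of $\mathbf{\Pi}$, and $\sum_\mathcal{S}\Tr(\mathbf{E}_\mathcal{S}^\top\mathbf{\Pi}\mathbf{E}_\mathcal{S})=M\binom{N-1}{K-1}$ is the same for every projector. Writing each summand as $\sum_j\psi(\mu_j)$ with $\psi(t)=(\sigma_x^{-2}+\sigma_v^{-2}t)^{-1}$ and $\mu_j\in[0,1]$ the eigenvalues of $\mathbf{E}_\mathcal{S}^\top\mathbf{\Pi}\mathbf{E}_\mathcal{S}$, I would argue from this structure that the block-diagonal projector $\mathbf{\Pi}^\star=\mathrm{blkdiag}(\mathbf{I}_M,\mathbf{0}_{N-M})$ is optimal; it is precisely the row-space projector of $\mathbf{U}_a[\mathbf{I}_M\ \mathbf{0}_{M\times(N-M)}]$ for any unitary $\mathbf{U}_a$, which gives the claimed $\mathbf{A}^\star$ up to a positive scalar.

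The remaining step is power-rescaling: substituting $\mathbf{A}^\star=c\,\mathbf{U}_a[\mathbf{I}_M\ \mathbf{0}]$ into the constraint $\Tr\{\mathbf{A}^\star(\mathbf{H}\mathbf{R}_x\mathbf{H}^\top+\sigma_v^2\mathbf{I}_N)\mathbf{A}^{\star\top}\}\le P$ of \eqref{eq:opt 1} with $\mathbf{H}=\mathbf{I}_N$, evaluating the trace through the bookkeeping that relates $\mathbf{R}_x$ to the $K\times K$ covariance $\mathbf{R}=\sigma_x^2\mathbf{I}_K$ under the uniform support prior, and solving the resulting scalar equation for $c$ so that the budget is met with equality; this recovers the normalization displayed in \eqref{eq:spec_case_3}. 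As a cross-check one can verify that this $\mathbf{A}^\star$, equivalently $\mathbf{Q}^\star=\mathbf{A}^{\star\top}\mathbf{A}^\star$, satisfies the optimality conditions of the SDR \eqref{eq:opt 1_final} in the $\sigma_w\to 0$ limit.

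The main obstacle I anticipate is establishing that $\mathbf{\Pi}^\star$ is indeed the optimal rank-$M$ projector. Each per-support term is a trace of the inverse of an affine function of $\mathbf{\Pi}$, hence convex in $\mathbf{\Pi}$, while $\sum_j\psi(\mu_j)$ is Schur-convex in the eigenvalues of $\mathbf{E}_\mathcal{S}^\top\mathbf{\Pi}\mathbf{E}_\mathcal{S}$; neither observation by itself certifies an extreme point of the fantope $\{\mathbf{0}\preceq\mathbf{\Pi}\preceq\mathbf{I}_N,\ \Tr\mathbf{\Pi}=M\}$, so the argument must genuinely combine the permutation symmetry, the conservation law for $\sum_\mathcal{S}\Tr(\mathbf{E}_\mathcal{S}^\top\mathbf{\Pi}\mathbf{E}_\mathcal{S})$, and the combinatorics of how the supports overlap the coordinates. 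Once the optimal projector is pinned down, the $\sigma_w=0$ simplification, the matching of $\mathbf{\Pi}^\star$ with $\mathbf{A}^\star$, and the normalization are all routine.
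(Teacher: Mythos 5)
Your reduction coincides with the paper's. Setting $\mathbf{w}=\mathbf{0}$ gives $\mathbf{R}_n=g^2\sigma_v^2\mathbf{A}\mathbf{A}^\top$, the gain $g$ cancels, and each per-support information matrix becomes $\sigma_v^{-2}\mathbf{E}_\mathcal{S}^\top\mathbf{\Pi}\mathbf{E}_\mathcal{S}$ with $\mathbf{\Pi}=\mathbf{A}^\top(\mathbf{A}\mathbf{A}^\top)^{\dagger}\mathbf{A}$; the paper reaches exactly this point through the SVD $\mathbf{A}=\mathbf{U}_a[\mathbf{\Gamma}_a\;\;\mathbf{0}]\mathbf{V}_a^\top$, so that the cost depends only on $\mathbf{V}_a$ while the power constraint depends only on $\mathbf{\Gamma}_a$, and the final rescaling is the same as yours. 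The difference is that the one step you defer --- that the coordinate projector $\mathbf{\Pi}^\star=\mathrm{blkdiag}(\mathbf{I}_M,\mathbf{0}_{N-M})$ minimizes $\sum_\mathcal{S}\Tr\{(\sigma_x^{-2}\mathbf{I}_K+\sigma_v^{-2}\mathbf{E}_\mathcal{S}^\top\mathbf{\Pi}\mathbf{E}_\mathcal{S})^{-1}\}$ over rank-$M$ projectors --- is the entire content of the proposition, and your proposal contains no argument for it. The paper closes it by the bound $\Tr\{\mathbf{B}^{-1}\}\geq\sum_i (B_{ii})^{-1}$ (equality iff $\mathbf{B}$ is diagonal) and then choosing $\mathbf{V}_a=\mathbf{I}_N$ to make every $\mathbf{B}_\mathcal{S}$ diagonal.

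The obstacle you flag in your last paragraph is not a technicality, and your own convexity remarks explain why. Writing each term as $\sum_j\psi(\mu_j)$ with $\psi(t)=(\sigma_x^{-2}+\sigma_v^{-2}t)^{-1}$ convex, Jensen drives the cost \emph{down} when the eigenvalues of $\mathbf{E}_\mathcal{S}^\top\mathbf{\Pi}\mathbf{E}_\mathcal{S}$ are balanced, whereas $\mathbf{\Pi}^\star$ makes them maximally spread (zeros and ones). For $K=1$ the terms are scalars, the cost equals $\binom{N-1}{K-1}\sum_{i=1}^{N}\psi(\Pi_{ii})$ with $\sum_i\Pi_{ii}=M$, and Schur-convexity makes the equi-diagonal projector ($\Pi_{ii}=M/N$) the minimizer and the coordinate projector the \emph{maximizer}; likewise for $N=3$, $M=2$, $K=2$, $\sigma_x=\sigma_v=1$, the projector $\mathbf{I}_3-\tfrac{1}{3}\mathbf{1}\mathbf{1}^\top$ yields cost $15/4$ against $4$ for $\mathrm{blkdiag}(\mathbf{I}_2,0)$. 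The paper's mechanism does not rescue the step: the diagonal lower bound is itself a function of $\mathbf{V}_a$ (a Schur-convex function of $\mathrm{diag}(\mathbf{\Pi})$, hence \emph{largest} at $\mathbf{V}_a=\mathbf{I}_N$), so attaining equality in it does not certify a minimizer of the original cost. In short, the missing idea you hoped to supply by combining permutation symmetry with the trace conservation law does not exist --- the claimed optimizer is not optimal for the reduced problem --- so the plan cannot be completed as stated; what survives is the projector reduction and the observation that the answer should instead be a balanced (tight-frame-like) projector. A smaller point: the normalization you defer as routine also needs care, since direct counting under the uniform support prior gives $\mathbf{R}_x=(K/N)\sigma_x^2\mathbf{I}_N$, which does not reproduce the constant displayed in \eqref{eq:spec_case_3}.
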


\begin{remark}
The scaling factor on the right-hand side in \eqref{eq:spec_case_3} is to satisfy the power constraint by equality. From the result of \proref{prop:sepc_3}, as well as that of \proref{prop:sepc_1} and \proref{prop:sepc_2}, it can be observed that as long as the source is uncorrelated and the source-to-sensor channel has a special structure (identity or full-rank), then the optimized sensing matrix does not depend on the channel gain and additive noise. It should be noted, however, that the value of MSE still depends on the channel parameters.  
\end{remark}

\subsection{Special Case IV ($\mathbf{v= 0} $, $\frac{g^2}{\sigma_w^2} \rightarrow 0$)} \label{sec:special case4}
Now, we consider an asymptotic case, where the communication channel is in a noisy regime such that the ratio between the power of channel gain over the power of additive channel noise tends to zero, i.e., $g^2/\sigma_w^2 \rightarrow 0$. 
\begin{proposition} \label{prop:sepc_4}
	Let $\mathbf{v= 0} $ and $\frac{g^2}{\sigma_w^2} \rightarrow 0$, and define $\mathbf{T} \triangleq \sum_\mathcal{S}  \mathbf{D}_\mathcal{S} \mathbf{R}^2 \mathbf{D}_\mathcal{S}^\top$ and $\mathbf{Z} \triangleq \mathbf{T}^{-1/2} \mathbf{HR}_x \mathbf{H}^\top \mathbf{T}^{-1/2}$ which has the EVD $\mathbf{Z} = \mathbf{U}_z \mathbf{\Gamma}_z \mathbf{U}_z^\top$. Then, the approximate solution to \procref{proc} is asymptotically given by
	\begin{equation} \label{eq:sol asymp2}
		\mathbf{A}^\star =  \mathbf{U}_a \left[\mathrm{diag}  \left(\sqrt{\gamma_q},0,\ldots,0 \right)\; \; \mathbf{0}_{M \times (L-M)} 		\right] \mathbf{U}_q^\top,
	\end{equation}
	where $\mathbf{U}_a \in \mathbb{R}^{M \times M}$ is an arbitrary unitary matrix,  and $\gamma_q$ is the only non-zero eignevlaue of 
	\begin{equation} \label{eq:sol asymp}
		\mathbf{Q}^\star =  \mathbf{T}^{-1/2}\mathbf{U}_z \mathrm{diag} \left(\frac{P}{\gamma_{z_1}},0,\ldots,0 \right)\mathbf{U}_z^\top \mathbf{T}^{-1/2}.
	\end{equation}
	Further, $\mathbf{U}_q$ is the eigen-vector associated with the EVD of $\mathbf{Q}^\star$, and $\gamma_{z_1}$ is the smallest eigen-value of $\mathbf{Z}$.
	\end{proposition}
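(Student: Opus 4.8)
The plan is to linearize $\mathrm{MSE}^{(lb)}$ in the small parameter $\epsilon \triangleq g^2/\sigma_w^2$, solve the resulting linear semidefinite program (SDP) in closed form, and push its rank-one optimizer through Steps (3)--(4) of \procref{proc}. Setting $\mathbf{v} = \mathbf{0}$ gives $\mathbf{R}_n = \sigma_w^2\mathbf{I}_M$, so \eqref{eq:oracle_MSE_2} reduces to $\mathrm{MSE}^{(lb)} = \binom{N}{K}^{-1}\sum_\mathcal{S}\Tr\{(\mathbf{R}^{-1}+\epsilon\,\mathbf{D}_\mathcal{S}^\top\mathbf{Q}\mathbf{D}_\mathcal{S})^{-1}\}$ with $\mathbf{Q}=\mathbf{A}^\top\mathbf{A}$ and $\mathbf{D}_\mathcal{S}=\mathbf{H}\mathbf{E}_\mathcal{S}$. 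Using $(\mathbf{R}^{-1}+\epsilon\,\mathbf{D}_\mathcal{S}^\top\mathbf{Q}\mathbf{D}_\mathcal{S})^{-1} = \mathbf{R} - \epsilon\,\mathbf{R}\mathbf{D}_\mathcal{S}^\top\mathbf{Q}\mathbf{D}_\mathcal{S}\mathbf{R} + O(\epsilon^2)$ together with the cyclic property of the trace and summing over $\mathcal{S}$, I obtain $\mathrm{MSE}^{(lb)} = \Tr\{\mathbf{R}\} - \epsilon\binom{N}{K}^{-1}\Tr\{\mathbf{Q}\mathbf{T}\} + O(\epsilon^2)$ with $\mathbf{T} = \sum_\mathcal{S}\mathbf{D}_\mathcal{S}\mathbf{R}^2\mathbf{D}_\mathcal{S}^\top$. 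Since the leading term does not depend on $\mathbf{Q}$, as $\epsilon\to 0$ Step (2) of \procref{proc} (the SDR, i.e., \eqref{eq:opt 1_final} with the rank constraint dropped and $\sigma_v^2=0$) becomes the linear SDP $\max_{\mathbf{Q}\succeq\mathbf{0}}\Tr\{\mathbf{Q}\mathbf{T}\}$ subject to $\Tr\{\mathbf{H}\mathbf{R}_x\mathbf{H}^\top\mathbf{Q}\}\le P$.

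Next I would solve this SDP. Assuming $\mathbf{T}\succ\mathbf{0}$ (which holds whenever $\mathbf{H}$ has full row rank, since $\sum_\mathcal{S}\mathbf{E}_\mathcal{S}\mathbf{R}^2\mathbf{E}_\mathcal{S}^\top\succ\mathbf{0}$), substitute $\mathbf{Q}=\mathbf{T}^{-1/2}\widetilde{\mathbf{Q}}\mathbf{T}^{-1/2}$: the objective becomes $\Tr\{\widetilde{\mathbf{Q}}\}$ and the power constraint becomes $\Tr\{\mathbf{Z}\widetilde{\mathbf{Q}}\}\le P$ with $\mathbf{Z}=\mathbf{T}^{-1/2}\mathbf{H}\mathbf{R}_x\mathbf{H}^\top\mathbf{T}^{-1/2}$. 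Passing to $\widehat{\mathbf{Q}}=\mathbf{U}_z^\top\widetilde{\mathbf{Q}}\mathbf{U}_z$ reduces the problem to $\max\sum_i[\widehat{\mathbf{Q}}]_{ii}$ subject to $\sum_i\gamma_{z_i}[\widehat{\mathbf{Q}}]_{ii}\le P$ and $\widehat{\mathbf{Q}}\succeq\mathbf{0}$. The off-diagonal entries of $\widehat{\mathbf{Q}}$ affect neither objective nor constraint, so an optimizer may be taken diagonal, whereupon the remaining linear program concentrates the entire budget on the coordinate of the smallest eigenvalue $\gamma_{z_1}$, giving $[\widehat{\mathbf{Q}}^\star]_{11}=P/\gamma_{z_1}$ and all other entries zero. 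Undoing the two substitutions yields exactly \eqref{eq:sol asymp}, a rank-one matrix whose unique nonzero eigenvalue I denote $\gamma_q$.

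Finally I would feed $\mathbf{Q}^\star$ through Steps (3)--(4). Since $\mathbf{Q}^\star$ has rank one and $M\ge 1$, its ordered eigenvalues are $\gamma_{q_1}=\gamma_q$ and $\gamma_{q_2}=\cdots=0$, so the low-rank reconstruction formula \eqref{eq:recstr A single} collapses to \eqref{eq:sol asymp2}; moreover $\mathbf{A}^{\star\top}\mathbf{A}^\star=\mathbf{Q}^\star$ exactly because no nonzero eigenvalue is discarded, and the power constraint is already active at the SDP optimum because $\Tr\{\mathbf{Q}\mathbf{T}\}$ is strictly increasing along the ray $t\mapsto t\mathbf{Q}$, so the power-rescaling of Step (4) acts as the identity. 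I expect the principal obstacle to be making the asymptotic step rigorous: one must show the $O(\epsilon^2)$ remainder is uniform over the feasible set --- which is compact since $\mathbf{H}\mathbf{R}_x\mathbf{H}^\top\succ\mathbf{0}$ --- so that the minimizer of the truncated objective converges to that of the exact one as $\epsilon\to 0$; a secondary, routine point is verifying the positive definiteness of $\mathbf{T}$ needed for $\mathbf{T}^{-1/2}$ and $\mathbf{Z}$ to be well-defined.
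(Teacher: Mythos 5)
Your proposal is correct and follows essentially the same route as the paper's proof: a first-order Taylor/Neumann expansion of $\mathrm{MSE}^{(lb)}$ in $g^2/\sigma_w^2$, reduction to the linear SDP $\max_{\mathbf{Q}\succeq \mathbf{0}}\Tr\{\mathbf{Q}\mathbf{T}\}$ under the power constraint, the substitution $\mathbf{Q}=\mathbf{T}^{-1/2}(\cdot)\mathbf{T}^{-1/2}$, and concentration of the entire power budget on the eigendirection of the smallest eigenvalue $\gamma_{z_1}$ of $\mathbf{Z}$, followed by reading off the rank-one $\mathbf{A}^\star$. The only difference is cosmetic --- where the paper invokes the trace inequality of \lemref{lem:constr bound} and then checks tightness when $\mathbf{Z}\mathbf{L}$ is diagonal, you rotate by $\mathbf{U}_z$ and observe that off-diagonal entries affect neither objective nor constraint --- and your explicit remarks on the positive definiteness of $\mathbf{T}$ and the uniformity of the $O(\epsilon^2)$ remainder flag assumptions the paper leaves implicit.
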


\begin{remark}
From \eqref{eq:sol asymp2}, it can be observed if channel condition degrades, as $g^2 / \sigma_w^2 \rightarrow 0$, the sensing matrix has only one active singular-value. \end{remark}

Up to this point, we have investigated the design of sensing matrix for the single-terminal scenario. The techniques presented so far will help us  analyze and  design sensing matrices for multi-terminal scenarios with orthogonal and coherent MAC which are described in the next section.

\section{Design Methodology for Multi--terminal Case} \label{sec:multi_sys}
In this section, we study sensing matrix design for a multi-terminal system consisting of orthogonal and coherent MAC. In orthogonal MAC, the sensors are scheduled orthogonally in time or frequency where coordination between the sensors are required, whereas in coherent MAC, all sensor transmissions occur simultaneously but require distributed phase synchrozination, also known as distributed beamforming at the sensor transmitters. Throughout the design, for both cases, we assume that the fusion center (FC) knows the sensor observation models and the source-to-sensor as well as sensor-to-decoder channels. It should be also mentioned that the optimized sensing matrix design is performed at the FC.

\subsection{Orthogonal MAC} \label{sec:orthog_mac}

We consider the following multi--terminal setup with orthogonal MAC which is shown in \figref{fig:diagram_dist}. 
\begin{figure} [!ht]
  \centering
  \psfrag{x}{$\mathbf{x}$}
  \psfrag{x}[][][0.75]{$\mathbf{x}$}
  \psfrag{O}[][][0.75]{$\mathbf{H}_1$}
   \psfrag{P}[][][0.75]{$\mathbf{H}_2$}
  \psfrag{A}[][][0.75]{$\mathbf{v}_1$}
    \psfrag{B}[][][0.75]{$\mathbf{v}_2$}
  \psfrag{L}[][][0.75]{$\mathbf{z}_1$}
  \psfrag{M}[][][0.75]{$\mathbf{z}_2$}
   \psfrag{Q}[][][0.75]{$\mathbf{A}_1$}
  \psfrag{R}[][][0.75]{$\mathbf{A}_2$}
  \psfrag{F}[][][0.75]{$\mathbf{G}_1$}
    \psfrag{G}[][][0.75]{$\mathbf{G}_2$}
  \psfrag{c}[][][0.75]{$\mathbf{w}_1$}
    \psfrag{d}[][][0.75]{$\mathbf{w}_2$}
    \psfrag{u}[][][0.75]{$\mathbf{y}_1$}
    \psfrag{v}[][][0.75]{$\mathbf{y}_2$}
  \psfrag{h}[][][0.75]{$\widehat{\mathbf{x}}$}
  \psfrag{E}[][][0.75]{CS encoder}
  \psfrag{C}[][][0.75]{Channel}
  \psfrag{Z}[][][0.75]{FC}
   \psfrag{D}[][][0.75]{Decoder}
    \includegraphics[width=9cm]{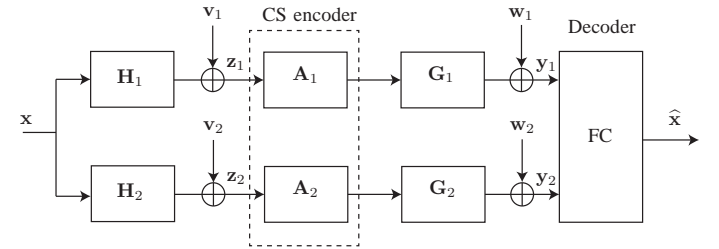}
  \caption{System model for the multi-terminal scenario with orthogonal MAC.}\label{fig:diagram_dist}
  \centering
\end{figure}

We consider the sparse source vector with the same properties as those described in \secref{sec:sys model}. Without loss of generality, we assume that the source is linearly scaled via two fixed matrices $\mathbf{H}_l \in \mathbb{R}^{L_l \times N}$ ($l \in \{1,2\}$) whose outputs are corrupted by additive noise vectors $\mathbf{v}_l$ uncorrelated with the source, where $\mathbf{v}_l \sim \mathcal{N}(0,\sigma_{v_l}^2 \mathbf{I}_{L_l})$. For transmission purposes, we suppose that the bandwidth of the noisy observations $\mathbf{z}_l \triangleq \mathbf{H}_l \mathbf{x} + \mathbf{v}_l \in \mathbb{R}^L$ is linearly compressed via the full row-rank matrix $\mathbf{A}_l \in \mathbb{R}^{M_l \times L_l}$, where $M_l < L_l$.  The compressed measurements are simultaneously encoded based on a limited power constraint budget, and then transmitted over noisy channels, represented by fixed channel matrices $\mathbf{G}_l = g_l \mathbf{I}_{M_l}$ and additive noise $\mathbf{w}_l \sim \mathcal{N}(0,\sigma_{w_l}^2 \mathbf{I}_{M_l})$, which is uncorrelated with $\mathbf{x}$ and $\mathbf{v}_l$.
The received measurement at the FC can be written as
\begin{equation} \label{eq:measurement_final_dist}
\begin{aligned}
	\widetilde{\mathbf{y}}
	= \widetilde{\mathbf{A}} \widetilde{\mathbf{H}} \mathbf{J} \mathbf{x} + \underbrace{\widetilde{\mathbf{A}} \widetilde{\mathbf{v}} + \widetilde{\mathbf{w}}}_{\triangleq \widetilde{\mathbf{n}}},
\end{aligned}
\end{equation}
where
\begin{equation} \label{eq:access}
\begin{aligned}
	\widetilde{\mathbf{y}} &\triangleq [\mathbf{y}_1^\top \; \; \mathbf{y}_2^\top ]^\top ,&
	  \mathbf{J} &\triangleq [\mathbf{I}_N \; \; \mathbf{I}_N]^\top, &\\
	  \widetilde{\mathbf{H}} &\triangleq \mathrm{blkdiag}(\mathbf{H}_{1},\mathbf{H}_{2}), &
	 \widetilde{\mathbf{A}} &\triangleq \mathrm{blkdiag}(g_1\mathbf{A}_1,g_2\mathbf{A}_2)
	 , &\\ 
	 \widetilde{\mathbf{v}} &\triangleq [\mathbf{v}_1^\top \; \; \mathbf{v}_2^\top ]^\top ,&
	  \widetilde{\mathbf{w}} &\triangleq [\mathbf{w}_1^\top \; \; \mathbf{w}_2^\top ]^\top .&
\end{aligned}
\end{equation}

Denoting the total noise in the system by $\widetilde{\mathbf{n}} = \widetilde{\mathbf{A}} \widetilde{\mathbf{v}} + \widetilde{\mathbf{w}} \in \mathbb{R}^{M_1 + M_2}$,   the covariance matrix associated with the total noise, denoted by $\widetilde{\mathbf{R}}_n \in \mathbb{R}^{(M_1 + M_2) \times (M_1 + M_2)}$, is $\widetilde{\mathbf{R}}_n \triangleq \mathbb{E} [\widetilde{\mathbf{n}} \widetilde{\mathbf{n}}^\top] =$ 
\begin{equation} \label{eq:cov nosie}
\begin{aligned}
	&\! \mathrm{blkdiag}(g_1^2 \sigma_{v_1}^2  \mathbf{A}_1\mathbf{A}_1^\top \!\!+\! \sigma_{w_1}^2 \mathbf{I}_{M_1}, 
	g_2^2 \sigma_{v_2}^2 \mathbf{A}_2\mathbf{A}_2^\top \!\!+ \!\sigma_{w_2}^2 \mathbf{I}_{M_2}). &
\end{aligned}
\end{equation}

For the design of sensing matrices in the system of \figref{fig:diagram_dist}, we aim at minimizing a lower-bound on the MSE of the sparse source. Similar to the steps taken in \secref{sec:pre_analysis}, we can derive the oracle MMSE estimator. Following \eqref{eq:oracle MMSE est}, the oracle estimator of $\mathbf{x}$ given the measurements \eqref{eq:measurement_final_dist} can be written as $\mathbb{E}[\mathbf{x} | \widetilde{\mathbf{y}}, \mathcal{S}] =$
\begin{equation} \label{eq:oracle_est_dist}
		 \left(\mathbf{R}^{\! -1} \!+ \! \left(\mathbf{J}^\top \widetilde{\mathbf{H}}^\top \widetilde{\mathbf{A}}^\top \right)_{\! \mathcal{S}}  \widetilde{\mathbf{R}}_n^{\!-1} \left(\widetilde{\mathbf{A}} \widetilde{\mathbf{H}} \mathbf{J} \right)_{\! \mathcal{S}} \right)^{\! -1} \left(\widetilde{\mathbf{H}}^\top \widetilde{\mathbf{A}}^\top \right)_{\! \mathcal{S}} \; \widetilde{\mathbf{R}}_n^{\! -1} \widetilde{\mathbf{y}}.
\end{equation}
Recalling that $\mathbf{E}_\mathcal{S} \in \mathbb{R}^{N \times K}$ is formed by taking an  identity matrix of order $N \times N$ whose columns indexed by the support set $\mathcal{S}$ are deleted, the oracle estimator in \eqref{eq:oracle_est_dist} gives the oracle MSE determined as following
\begin{equation} \label{eq:oracle_MSE_dist}
\begin{aligned}
	\mathrm{MSE}_o^{(lb)}
	\! \! \! \triangleq \! \! \sum_{\mathcal{S}} \! \frac{1}{{N \choose K}} \! \Tr \! \left\{ \! \left(\mathbf{R}^{-1} \!+\! \mathbf{E}_\mathcal{S}^\top \mathbf{J}^\top \widetilde{\mathbf{H}}^\top \widetilde{\mathbf{A}}^\top \widetilde{\mathbf{R}}_n^{\! -1} \widetilde{\mathbf{A}} \widetilde{\mathbf{H}} \mathbf{J} \mathbf{E}_\mathcal{S} \right)^{\!-1} \!\right\}.
\end{aligned}
\end{equation}

So as to formulate the sensing matrix optimization problem, we determine the total average transmit power constraint as
\begin{equation} \label{eq:power_ortho}
\begin{aligned}
	&\sum_{l=1}^2 \mathbb{E}[\| \mathbf{A}_l \mathbf{H}_l  \mathbf{x} + \mathbf{A}_l \mathbf{v}_l \|_2^2] & \\
	&= \sum_{l=1}^2 \Tr \{\mathbf{A}_l\mathbf{H}_l \mathbf{R}_x \mathbf{H}_l^\top \mathbf{A}_l^\top + \sigma_{v_l}^2 \mathbf{A}_l\mathbf{A}_l^\top\} \leq P,&
\end{aligned}
\end{equation}
where $P$ is the total available power, and the last equality is obtained by straightforward mathematical manipulations. 

It should be also mentioned that, throughout the design for the multi-terminal systems, we consider that the total power for the sensors are constrained. However, our design procedure can be applied also when power per sensor is constrained. 

Now, we pose the following optimization problem
\begin{equation} \label{eq:opt 1_dist}
\begin{aligned}
	&\underset{\mathbf{A}_1, \mathbf{A}_2}{\text{minimize}} \hspace{0.25cm} \mathrm{MSE}_o^{(lb)}& \\
	& \text{subject to} \hspace{0.25cm} \sum_{l=1}^2 \Tr \{\mathbf{A}_l\mathbf{H}_l \mathbf{R}_x \mathbf{H}_l^\top \mathbf{A}_l^\top + \sigma_{v_l}^2 \mathbf{A}_l\mathbf{A}_l^\top\} \leq P ,& 
\end{aligned}
\end{equation}
where $\mathrm{MSE}_o^{(lb)}$ is shown by \eqref{eq:oracle_MSE_dist}. We have the following result.

\begin{theorem} \label{theo:mult_ter_ortho}
 The optimization problem \eqref{eq:opt 1_dist} can be equivalently solved by the problem \eqref{eq:opt 1_final_dist}, on top of next page,
 \begin{figure*}[!ht]

{\small
\setcounter{MYtempeqncnt}{\value{equation}}
\begin{equation}  \label{eq:opt 1_final_dist}
\begin{aligned}
  &\underset{\mathbf{Q}_l,\widetilde{\mathbf{X}}_\mathcal{S},\mathbf{Y}_{l}}{\text{minimize}} \hspace{0.25cm} \sum_\mathcal{S} \Tr \{\widetilde{\mathbf{X}}_\mathcal{S}\}	& \\
	&\text{subject to} \hspace{0.25cm} \left[
\begin{array}{c c}
	\mathbf{R}^{- 1} +   \widetilde{\mathbf{E}}_\mathcal{S}^\top \mathrm{blkdiag}\left(\frac{g_1^2}{\sigma_{w_1}^2} \mathbf{Q}_1, \frac{g_2^2}{\sigma_{w_2}^2} \mathbf{Q}_2 \right) \widetilde{\mathbf{E}}_\mathcal{S}    - \widetilde{\mathbf{E}}_\mathcal{S}^\top  \mathrm{blkdiag}(\mathbf{Y}_{1} , \mathbf{Y}_{2}) \widetilde{\mathbf{E}}_\mathcal{S} & \mathbf{I}_K \\ 
 \mathbf{I}_K   &   \widetilde{\mathbf{X}}_\mathcal{S}   \\
\end{array}
\right] \succeq \mathbf{0} &\\
	&\hspace{1.5cm}  \left[
\begin{array}{c c}
	   \mathbf{Y}_l & \frac{g_l}{\sigma_{w_l}}\mathbf{Q}_l \\ 
	  \frac{g_l}{\sigma_{w_l}}  \mathbf{Q}_l  &  \frac{\sigma_{w_l}^2}{g_l^2 \sigma_{v_l}^2} \mathbf{I}_{L_l}  + \mathbf{Q}_l \\
	\end{array}
	\right] \succeq \mathbf{0} , 
	\hspace{0.1cm}  \sum_{l=1}^2 \Tr \left\{\left(\mathbf{H}_l \mathbf{R}_x \mathbf{H}_l^\top + \sigma_{v_l}^2 \mathbf{I}_{L_l} \right)\mathbf{Q}_l \right\} \leq P, 
	\hspace{0.1cm} \mathbf{Q}_l \succeq \mathbf{0},  \hspace{0.2cm}  \mathrm{rank}(\mathbf{Q}_l) = M_l, \;\; \forall{l}, \;  \mathcal{S} .&\end{aligned}
\end{equation}
\setcounter{equation}{\value{MYtempeqncnt}}}
\hrulefill
\end{figure*}
\setcounter{equation}{30}
where we have defined $\widetilde{\mathbf{E}}_\mathcal{S} \triangleq \widetilde{\mathbf{H}} \mathbf{J}  \mathbf{E}_\mathcal{S}$, and further $\mathbf{Q}_l \triangleq \mathbf{A}_l^\top \mathbf{A}_l \in \mathbb{R}^{L_l \times L_l}$, $\widetilde{\mathbf{X}}_\mathcal{S} \in \mathbb{R}^{K \times K}$ and $\mathbf{Y}_l \in \mathbb{R}^{L_l \times L_l}$, $l \in \{1,2\}$, are optimization variables. 
\end{theorem}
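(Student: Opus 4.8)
The plan is to mirror, block by block, the derivation behind \theoref{theo:sing_ter}, the only new ingredient being the decoupling afforded by the block-diagonal structure of $\widetilde{\mathbf{A}} = \mathrm{blkdiag}(g_1\mathbf{A}_1,g_2\mathbf{A}_2)$ and of $\widetilde{\mathbf{R}}_n$ in \eqref{eq:cov nosie}. First I would observe that $\widetilde{\mathbf{A}}^\top \widetilde{\mathbf{R}}_n^{-1}\widetilde{\mathbf{A}}$ is block-diagonal, with $l$-th block $g_l^2\mathbf{A}_l^\top(g_l^2\sigma_{v_l}^2\mathbf{A}_l\mathbf{A}_l^\top + \sigma_{w_l}^2\mathbf{I}_{M_l})^{-1}\mathbf{A}_l$; applying the push-through (matrix-inversion) identity and writing $\mathbf{Q}_l \triangleq \mathbf{A}_l^\top\mathbf{A}_l$, this block equals $\frac{g_l^2}{\sigma_{w_l}^2}\mathbf{Q}_l - \frac{g_l^4\sigma_{v_l}^2}{\sigma_{w_l}^2}\mathbf{Q}_l\big(g_l^2\sigma_{v_l}^2\mathbf{Q}_l + \sigma_{w_l}^2\mathbf{I}_{L_l}\big)^{-1}\mathbf{Q}_l$. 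Denote by $\mathbf{\Phi}$ the resulting block-diagonal matrix, a function of $\mathbf{Q}_1,\mathbf{Q}_2$ alone; then, with $\widetilde{\mathbf{E}}_\mathcal{S} = \widetilde{\mathbf{H}}\mathbf{J}\mathbf{E}_\mathcal{S}$, the matrix inverted inside the trace in \eqref{eq:oracle_MSE_dist} is $\mathbf{R}^{-1}+\widetilde{\mathbf{E}}_\mathcal{S}^\top\mathbf{\Phi}\,\widetilde{\mathbf{E}}_\mathcal{S}$.

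Second, I would linearize the objective: minimizing $\sum_\mathcal{S}\Tr\{(\mathbf{R}^{-1}+\widetilde{\mathbf{E}}_\mathcal{S}^\top\mathbf{\Phi}\,\widetilde{\mathbf{E}}_\mathcal{S})^{-1}\}$ is equivalent to minimizing $\sum_\mathcal{S}\Tr\{\widetilde{\mathbf{X}}_\mathcal{S}\}$ over additional variables $\widetilde{\mathbf{X}}_\mathcal{S}\succeq(\mathbf{R}^{-1}+\widetilde{\mathbf{E}}_\mathcal{S}^\top\mathbf{\Phi}\,\widetilde{\mathbf{E}}_\mathcal{S})^{-1}$, which by a Schur-complement rewriting becomes the first LMI in \eqref{eq:opt 1_final_dist}, once the quadratic-in-$\mathbf{Q}_l$ subtracted term inside each block of $\mathbf{\Phi}$ is replaced by a fresh matrix variable $\mathbf{Y}_l$. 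To verify this replacement is lossless, I would impose $\mathbf{Y}_l$ as an upper bound on that quadratic term; a second Schur-complement computation — legitimate because $\frac{\sigma_{w_l}^2}{g_l^2\sigma_{v_l}^2}\mathbf{I}_{L_l}+\mathbf{Q}_l\succ\mathbf{0}$ — shows this upper-bound inequality is exactly the second LMI in \eqref{eq:opt 1_final_dist}. A monotonicity argument then closes the gap: since $\mathbf{Y}_l$ enters the first LMI only through the subtracted term $-\widetilde{\mathbf{E}}_\mathcal{S}^\top\mathrm{blkdiag}(\mathbf{Y}_1,\mathbf{Y}_2)\widetilde{\mathbf{E}}_\mathcal{S}$, decreasing $\mathbf{Y}_l$ in the positive-semidefinite order enlarges the top-left block, relaxes the constraint on $\widetilde{\mathbf{X}}_\mathcal{S}$, and can only lower the objective; hence an optimal solution exists with each $\mathbf{Y}_l$ at its minimal feasible value, at which the first LMI's top-left block coincides with $\mathbf{R}^{-1}+\mathbf{E}_\mathcal{S}^\top\mathbf{J}^\top\widetilde{\mathbf{H}}^\top\widetilde{\mathbf{A}}^\top\widetilde{\mathbf{R}}_n^{-1}\widetilde{\mathbf{A}}\widetilde{\mathbf{H}}\mathbf{J}\mathbf{E}_\mathcal{S}$ from \eqref{eq:oracle_MSE_dist}.

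Third, I would dispatch the remaining constraints: the power budget \eqref{eq:power_ortho} becomes $\sum_{l=1}^2\Tr\{(\mathbf{H}_l\mathbf{R}_x\mathbf{H}_l^\top+\sigma_{v_l}^2\mathbf{I}_{L_l})\mathbf{Q}_l\}\leq P$ by linearity of the trace, while the substitution $\mathbf{Q}_l=\mathbf{A}_l^\top\mathbf{A}_l$ with $\mathbf{A}_l$ full row-rank places the feasible sets of the two problems in correspondence — up to an irrelevant unitary left factor on $\mathbf{A}_l$, which affects neither objective nor constraints — with $\{\mathbf{Q}_l\succeq\mathbf{0},\ \mathrm{rank}(\mathbf{Q}_l)=M_l\}$; combining everything yields \eqref{eq:opt 1_final_dist}, and from any optimizer of \eqref{eq:opt 1_final_dist} one recovers an optimizer $(\mathbf{A}_1,\mathbf{A}_2)$ of \eqref{eq:opt 1_dist} of equal cost via an eigen-decomposition of $\mathbf{Q}_l^\star$. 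I expect the main difficulty to be bookkeeping the two nested Schur-complement reductions together with the $\mathrm{blkdiag}$ algebra routed through $\widetilde{\mathbf{E}}_\mathcal{S}$, and in particular making the monotonicity argument for the tightness of each $\mathbf{Y}_l$ rigorous in the presence of the shared power constraint and the rank constraints; the rest is the same routine linear algebra already performed for the single-terminal \theoref{theo:sing_ter}.
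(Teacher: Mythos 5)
Your proposal is correct and follows essentially the same route as the paper's proof: the matrix-inversion (push-through) identity applied blockwise to $\widetilde{\mathbf{R}}_n^{-1}$, the slack variables $\widetilde{\mathbf{X}}_\mathcal{S}$ and $\mathbf{Y}_l$, and two nested Schur-complement reductions, with the power and rank constraints carried over by the substitution $\mathbf{Q}_l=\mathbf{A}_l^\top\mathbf{A}_l$. The only addition is your explicit monotonicity argument showing the $\mathbf{Y}_l$ slack variables are tight at the optimum, which the paper leaves implicit; this is a welcome clarification rather than a different method.
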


\begin{remark} \label{rem:ortho}
	Note that the optimization problem \eqref{eq:opt 1_final_dist} is not generally convex due to the rank constraints. Similar to \procref{proc}, we give an approach in order to approximately solve for $\mathbf{A}_l$ ($l \in \{1,2\}$). Ignoring the rank constraints, the resulting SDR problem would be convex jointly in all optimization variables. Denoting the optimal solution of the SDR problem by $\mathbf{Q}_l^\star$, and taking EVD, we obtain $\mathbf{Q}_l^\star = \mathbf{U}_{q_l} \mathbf{\Gamma}_{q_l} \mathbf{U}_{q_l}^\top$, where $\mathbf{U}_{q_l} \in \mathbb{R}^{L_l \times L_l}$ is a unitary matrix, whose columns are eigen-vectors associated with the eigen-values  of the matrix $\mathbf{\Gamma}_{q_l} = \mathrm{diag} \left(\gamma_{q_{1_l}},  \ldots , \gamma_{q_{L_l}} \right)$ such that $\gamma_{q_{1_l}} \geq \ldots \geq \gamma_{q_{L_l}}$. Now, we can approximately reconstruct the rank-$M_l$ sensing matrix $\mathbf{A}_l^\star$ from $\mathbf{Q}_l^\star$ by admitting the $M_l$ largest eigen-values of $\mathbf{Q}_l^\star$, and by letting $\mathbf{A}_l^\star$ have the following structure
\begin{equation} \label{eq:recstr A ortho}
	\mathbf{A}_l^\star = \mathbf{U}_{a_l} \left[\mathrm{diag}(\sqrt{\gamma_{q_{1_l}}} , \ldots , \sqrt{\gamma_{q_{M_l}}}) \;\; \mathbf{0}_{M_l \times (L_l-M_l)}\right] \mathbf{U}_{q_l}^\top,
\end{equation}
where $\mathbf{U}_{a_l} \in \mathbb{R}^{M_l�\times M_l}$ is an arbitrary unitary matrix. 

Here, there is a slight difference in power-rescaling the matrix $\mathbf{A}_l^\star$ compared to the single-terminal case. Since each terminal is subject to different channel gains and noises, $\mathbf{A}_1^\star$ and $\mathbf{A}_2^\star$ need to be scaled differently. Therefore, we give a weighting coefficient to each sensing matrix, i.e., $\mathbf{A}_l^\star \rightarrow \sqrt{\alpha_l} \mathbf{A}_l^\star$, where $\alpha_l \geq 0$ is the weighting coefficient to be optimized, and $\mathbf{A}_l^\star$ is already determined from the previous stage. Then, we solve the optimization problem \eqref{eq:opt 1_final_dist} with new optimization variables $\alpha_1 \geq 0$ and $\alpha_2 \geq 0$ instead of $\mathbf{Q}$ which is known at this stage. Note that the resulting optimization problem becomes convex in $\alpha_1$ and $\alpha_2$ and can be solved efficiently using any convex solver. The final rescaled optimized sensing matrices become $\sqrt{\alpha_1^\star} \mathbf{A}_1^\star$ and $\sqrt{\alpha_2^\star} \mathbf{A}_2^\star$.
\end{remark}

In order to extend the multi-terminal case to more than 2 encoders, we need to modify the problem formulation accordingly. Assume that we have $R$ terminals, comprised of $R$ parallel source-to-sensor channel matrices $\{\mathbf{H}_l\}_{l=1}^R$ and noise vectors $\{\mathbf{v}_l\}_{l=1}^R$, $R$ CS encoders $\{\mathbf{A}_l\}_{l=1}^R$, $R$ channels $\{\mathbf{G}_l\}_{l=1}^R$ and $R$ channel noise vectors $\{\mathbf{w}_l\}_{l=1}^R$. Then, equations \eqref{eq:access} and \eqref{eq:cov nosie} are modified by adding the matrices and vectors associated with $R$ terminals. Furthermore, the power constraint in \eqref{eq:power_ortho} would be modified by extending the summation from $l=1$ to $l=R$. Consequently, the optimization in \eqref{eq:opt 1_dist} can be solved with respect to variables $\{\mathbf{A}_l\}_{l=1}^R$. We also note that the equivalent optimization problem in \eqref{eq:opt 1_final_dist} should be modified by introducing $R$ optimization variables $\{\mathbf{Q}_l\}_{l=1}^R$ and $R$ variables $\{\mathbf{Y}_l\}_{l=1}^R$. Similarly, the constraints in \eqref{eq:opt 1_final_dist} should be modified by including the parameters and variables associated with the $R$ terminals.

\subsection{Coherent MAC} \label{sec:problem_coh}
We consider the multi-terminal setup with coherent MAC that is shown in \figref{fig:diagram_dist_coh}. 
\begin{figure} [!ht]
  \centering
    \psfrag{y}[][][0.75]{$\mathbf{y}$}
  \psfrag{x}[][][0.75]{$\mathbf{x}$}
  \psfrag{O}[][][0.75]{$\mathbf{H}_1$}
   \psfrag{P}[][][0.75]{$\mathbf{H}_2$}
  \psfrag{A}[][][0.75]{$\mathbf{v}_1$}
    \psfrag{B}[][][0.75]{$\mathbf{v}_2$}
  \psfrag{L}[][][0.75]{$\mathbf{z}_1$}
  \psfrag{M}[][][0.75]{$\mathbf{z}_2$}
   \psfrag{Q}[][][0.75]{$\mathbf{A}_1$}
  \psfrag{R}[][][0.75]{$\mathbf{A}_2$}
  \psfrag{F}[][][0.75]{$\mathbf{G}_1$}
    \psfrag{G}[][][0.75]{$\mathbf{G}_2$}
    \psfrag{d}[][][0.75]{$\mathbf{w}$}
    \psfrag{u}[][][0.75]{$\mathbf{y}_1$}
   \psfrag{v}[][][0.75]{$\mathbf{y}_2$}
  \psfrag{h}[][][0.75]{$\widehat{\mathbf{x}}$}
  \psfrag{E}[][][0.75]{CS encoder}
  \psfrag{C}[][][0.75]{Channel}
    \psfrag{Z}[][][0.75]{FC}
   \psfrag{D}[][][0.75]{Decoder}
    \includegraphics[width=9cm]{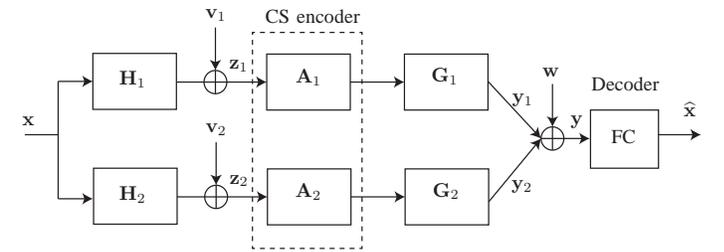}
  \caption{Studied model for multi-terminal system with coherent MAC.}\label{fig:diagram_dist_coh}
  \centering
\end{figure}
The system model using coherent MAC is similar to that of the orthogonal MAC, described in \secref{sec:orthog_mac}, with the difference that the transmitted observations from all terminals are superimposed and received as a coherent sum. We also assume that the size of observations at each terminal are equal, i.e., $M_1 = M_2 \triangleq M$.
The received measurements at the FC can be written as 
\begin{equation} \label{eq:meas_coh_final}
\begin{aligned}
	\mathbf{y} =  \mathbf{y}_1 + \mathbf{y}_2 + \mathbf{w} 
	= \widetilde{\mathbf{A}} \widetilde{\mathbf{H}} \mathbf{x} + \underbrace{\widetilde{\mathbf{A}} \widetilde{\mathbf{v}} + \mathbf{w}}_{\widetilde{\mathbf{n}}}.
\end{aligned}
\end{equation}
where
\begin{equation} \label{eq:access_coh}
\begin{aligned}
	 &\widetilde{\mathbf{A}} \triangleq \left[g_1 \sigma_{v_1}\mathbf{A}_1 \; \; g_2 \sigma_{v_2} \mathbf{A}_2 \right] \; ,  \;
	  \widetilde{\mathbf{H}} &\triangleq \left[\frac{1}{\sigma_{v_1}}\mathbf{H}_1^\top \; \; \frac{1}{\sigma_{v_2}}\mathbf{H}_2^\top \right]^\top,& \\ 
	 &\widetilde{\mathbf{v}} \triangleq \left[\frac{1}{\sigma_{v_1}} \mathbf{v}_1^\top \; \; \frac{1}{\sigma_{v_2}} \mathbf{v}_2^\top \right]^\top .& \\
\end{aligned}
\end{equation}

Denoting the total noise in the system by $\widetilde{\mathbf{n}} \triangleq \widetilde{\mathbf{A}} \widetilde{\mathbf{v}} + \mathbf{w}$, the covariance matrix associated with $\widetilde{\mathbf{n}} $ is
\begin{equation} \label{eq:noise_cov_coh}
	\widetilde{\mathbf{R}}_n \triangleq  \widetilde{\mathbf{A}} \widetilde{\mathbf{A}}^\top + \sigma_w^2 \mathbf{I}_M.
\end{equation}

Following \eqref{eq:oracle MMSE est}, it can be shown that the oracle estimator of $\mathbf{x}$ given the measurements \eqref{eq:meas_coh_final}, i.e., $\mathbb{E}[\mathbf{x} | \mathbf{y}, \mathcal{S}]$, gives the following MSE
\begin{equation} \label{eq:oracle_MSE_dist_coh}
\begin{aligned}
	\mathrm{MSE}_c^{(lb)}
	\! \!  \triangleq  \! \sum_{\mathcal{S}} \! \frac{1}{{N \choose K}} \! \Tr \! \left\{ \! \left(\mathbf{R}^{-1} \!+\! \mathbf{E}_\mathcal{S}^\top  \widetilde{\mathbf{H}}^\top \widetilde{\mathbf{A}}^\top \widetilde{\mathbf{R}}_n^{\! -1} \widetilde{\mathbf{A}} \widetilde{\mathbf{H}}  \mathbf{E}_\mathcal{S} \right)^{\!-1} \!\right\}.
\end{aligned}
\end{equation}

We obtain the average power constraint in the case of coherent MAC as
\begin{equation} \label{eq:power_coh}
\begin{aligned}
	&\sum_{l=1}^2 \mathbb{E}[\| \mathbf{A}_l \mathbf{H}_l  \mathbf{x} + \mathbf{A}_l \mathbf{v}_l \|_2^2] & \\
	&= \sum_{l=1}^2 \Tr \{\mathbf{A}_l\mathbf{H}_l \mathbf{R}_x \mathbf{H}_l^\top \mathbf{A}_l^\top + \sigma_{v_l}^2 \mathbf{A}_l\mathbf{A}_l^\top\} \leq P,&
\end{aligned}
\end{equation}           
where $P>0$ is available power. Further, we used the fact that the source and source-to-sensor noises are uncorrelated as well as the fact that $\mathbb{E}[\widetilde{\mathbf{v}} \widetilde{\mathbf{v}}^\top] = \mathbf{I}_{L_1 + L_2}$. Therefore, we pose the following optimization problem for sensing matrix design
\begin{equation} \label{eq:oracle_MSE_dist_coh_final}
\begin{aligned}
	&\underset{\mathbf{A}_1, \mathbf{A}_2}{\text{minimize}} \hspace{0.25cm} \mathrm{MSE}_c^{(lb)}& \\
	& \text{subject to} \hspace{0.25cm} \sum_{l=1}^2 \Tr \{\mathbf{A}_l\mathbf{H}_l \mathbf{R}_x \mathbf{H}_l^\top \mathbf{A}_l^\top + \sigma_{v_l}^2 \mathbf{A}_l\mathbf{A}_l^\top\} \leq P,& 
\end{aligned}
\end{equation}
where $\mathrm{MSE}_c^{(lb)}$ is shown in \eqref{eq:oracle_MSE_dist_coh}. The following theorem gives an equivalent optimization problem to \eqref{eq:oracle_MSE_dist_coh_final}.

\begin{theorem} \label{theo:mult_ter_coh}
	Let $\widetilde{\mathbf{Q}} = \widetilde{\mathbf{A}}^\top \widetilde{\mathbf{A}}$, then the optimization problem \eqref{eq:oracle_MSE_dist_coh_final} is equivalent to solving

	{\small \begin{equation} \label{eq:opt 1_final_coh}
\begin{aligned}
	&\underset{\widetilde{\mathbf{Q}},\widetilde{\mathbf{X}}_\mathcal{S},\widetilde{\mathbf{Y}}}{\text{minimize}} \hspace{0.25cm} \sum_\mathcal{S} \Tr \{\widetilde{\mathbf{X}}_\mathcal{S}\}	& \\
	&\text{subject to} \hspace{0.25cm} \left[
\begin{array}{c c}
   \mathbf{R}^{-1} + \frac{1}{\sigma_w^2} \widetilde{\mathbf{D}}_\mathcal{S}^\top  \widetilde{\mathbf{Q}} \widetilde{\mathbf{D}}_\mathcal{S}  - \widetilde{\mathbf{D}}_\mathcal{S}^\top \widetilde{\mathbf{Y}} \widetilde{\mathbf{D}}_\mathcal{S} & \mathbf{I}_K \\ 
  \mathbf{I}_K  &   \widetilde{\mathbf{X}}_\mathcal{S}   \\
\end{array}
\right] \succeq \mathbf{0} &\\
	&\hspace{1.5cm}  \left[
	\begin{array}{c c}
	   \widetilde{\mathbf{Y}} & \frac{1}{\sigma_w}\widetilde{\mathbf{Q}} \\ 
	  \frac{1}{\sigma_w}  \widetilde{\mathbf{Q}}  &  \sigma_w^2 \mathbf{I}_{L_1 + L_2}  + \widetilde{\mathbf{Q}} \\
	\end{array}
	\right] \succeq \mathbf{0} , \; \forall \mathcal{S} & \\
	 &\hspace{1.5cm} \sum_{l=1}^2 \Tr \left \{\left(\mathbf{H}_l \mathbf{R}_x \mathbf{H}_l^\top + \sigma_{v_l}^2 \mathbf{I}_{L_l} \right) \mathbf{Q}_l \right\} \leq P & \\
	&\hspace{1.7cm} \widetilde{\mathbf{Q}} \succeq \mathbf{0},  \hspace{0.2cm}  \mathrm{rank}(\widetilde{\mathbf{Q}}) = M,&
\end{aligned}
\end{equation}}
where we have defined $\widetilde{\mathbf{D}}_\mathcal{S} \triangleq \widetilde{\mathbf{H}} \mathbf{E}_\mathcal{S}$, and where $\widetilde{\mathbf{Q}}  \in \mathbb{R}^{(L_1 + L_2) \times (L_1 + L_2)}$, $\widetilde{\mathbf{X}}_\mathcal{S} \in \mathbb{R}^{K \times K}$ and $\widetilde{\mathbf{Y}} \in \mathbb{R}^{(L_1 + L_2) \times (L_1 + L_2)}$ are optimization variables. Further, $\mathbf{Q}_l \in \mathbb{R}^{L_l \times L_l}$ ($l \in \{1,2\}$), is the $l^{th}$ diagonal block of $\widetilde{\mathbf{Q}}$.
\end{theorem}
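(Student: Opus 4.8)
The plan is to mirror, block by block, the argument already used for Theorem~\ref{theo:sing_ter}. First I would set up the correspondence between the encoder pair $(\mathbf{A}_1,\mathbf{A}_2)$ and the matrix $\widetilde{\mathbf{Q}}$. Since $\widetilde{\mathbf{A}} = [\,g_1\sigma_{v_1}\mathbf{A}_1 \;\; g_2\sigma_{v_2}\mathbf{A}_2\,] \in \mathbb{R}^{M\times(L_1+L_2)}$ has full row rank, $\widetilde{\mathbf{Q}} \triangleq \widetilde{\mathbf{A}}^\top\widetilde{\mathbf{A}}$ is positive semi-definite with $\mathrm{rank}(\widetilde{\mathbf{Q}})=M$; conversely any such $\widetilde{\mathbf{Q}}$ factors as $\widetilde{\mathbf{A}}^\top\widetilde{\mathbf{A}}$ with $\widetilde{\mathbf{A}}\in\mathbb{R}^{M\times(L_1+L_2)}$, unique up to a left orthogonal factor that changes neither the cost nor the constraints, from which $\mathbf{A}_l$ is recovered as $(g_l\sigma_{v_l})^{-1}$ times the $l$-th column block of $\widetilde{\mathbf{A}}$ (so its $l$-th diagonal block $\mathbf{Q}_l$ is $g_l^2\sigma_{v_l}^2\mathbf{A}_l^\top\mathbf{A}_l$). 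Under this identification the average power expression \eqref{eq:power_coh} and the feasible set of $(\mathbf{A}_1,\mathbf{A}_2)$ become exactly the last two constraints of \eqref{eq:opt 1_final_coh}.

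Next I would eliminate $\widetilde{\mathbf{R}}_n$ in favour of $\widetilde{\mathbf{Q}}$. From \eqref{eq:noise_cov_coh}, $\widetilde{\mathbf{R}}_n = \widetilde{\mathbf{A}}\widetilde{\mathbf{A}}^\top + \sigma_w^2\mathbf{I}_M$, and the push-through identity $\mathbf{B}(\mathbf{I}+\mathbf{A}\mathbf{B})^{-1}=(\mathbf{I}+\mathbf{B}\mathbf{A})^{-1}\mathbf{B}$ gives $\widetilde{\mathbf{A}}^\top\widetilde{\mathbf{R}}_n^{-1}\widetilde{\mathbf{A}} = (\sigma_w^2\mathbf{I}_{L_1+L_2}+\widetilde{\mathbf{Q}})^{-1}\widetilde{\mathbf{Q}} = \tfrac{1}{\sigma_w^2}\widetilde{\mathbf{Q}} - \tfrac{1}{\sigma_w^2}\widetilde{\mathbf{Q}}(\sigma_w^2\mathbf{I}_{L_1+L_2}+\widetilde{\mathbf{Q}})^{-1}\widetilde{\mathbf{Q}}$; call the PSD subtracted term $\mathbf{Y}^\star$. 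Introducing the auxiliary variable $\widetilde{\mathbf{Y}}$, a Schur complement (legitimate because $\sigma_w^2\mathbf{I}_{L_1+L_2}+\widetilde{\mathbf{Q}}\succ\mathbf{0}$) shows that the second LMI of \eqref{eq:opt 1_final_coh} is precisely $\widetilde{\mathbf{Y}}\succeq\mathbf{Y}^\star$, hence $\tfrac{1}{\sigma_w^2}\widetilde{\mathbf{Q}}-\widetilde{\mathbf{Y}}\preceq\widetilde{\mathbf{A}}^\top\widetilde{\mathbf{R}}_n^{-1}\widetilde{\mathbf{A}}$ with equality iff $\widetilde{\mathbf{Y}}=\mathbf{Y}^\star$. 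Conjugating by $\widetilde{\mathbf{D}}_\mathcal{S}=\widetilde{\mathbf{H}}\mathbf{E}_\mathcal{S}$ and adding $\mathbf{R}^{-1}$, the top-left block of the first LMI is a Loewner under-estimate of $\mathbf{M}_\mathcal{S}\triangleq\mathbf{R}^{-1}+\mathbf{E}_\mathcal{S}^\top\widetilde{\mathbf{H}}^\top\widetilde{\mathbf{A}}^\top\widetilde{\mathbf{R}}_n^{-1}\widetilde{\mathbf{A}}\widetilde{\mathbf{H}}\mathbf{E}_\mathcal{S}$, tight at $\widetilde{\mathbf{Y}}=\mathbf{Y}^\star$.

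Finally I would convert the objective. For each $\mathcal{S}$, a second Schur complement shows the first LMI of \eqref{eq:opt 1_final_coh} is equivalent to $\widetilde{\mathbf{X}}_\mathcal{S}\succeq\big(\mathbf{R}^{-1}+\widetilde{\mathbf{D}}_\mathcal{S}^\top(\tfrac{1}{\sigma_w^2}\widetilde{\mathbf{Q}}-\widetilde{\mathbf{Y}})\widetilde{\mathbf{D}}_\mathcal{S}\big)^{-1}$ (which forces the inner matrix to be positive definite, automatic at $\widetilde{\mathbf{Y}}=\mathbf{Y}^\star$ since $\mathbf{M}_\mathcal{S}\succeq\mathbf{R}^{-1}\succ\mathbf{0}$). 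Because $\Tr\{(\cdot)^{-1}\}$ is monotonically decreasing on the positive-definite cone and $\widetilde{\mathbf{X}}_\mathcal{S}$ is otherwise free, at optimality $\widetilde{\mathbf{X}}_\mathcal{S}$ equals that inverse and $\widetilde{\mathbf{Y}}=\mathbf{Y}^\star$ (making every $\mathbf{M}_\mathcal{S}$ as large as possible); then $\sum_\mathcal{S}\Tr\{\widetilde{\mathbf{X}}_\mathcal{S}\}={N\choose K}\,\mathrm{MSE}_c^{(lb)}$, so \eqref{eq:opt 1_final_coh} and \eqref{eq:oracle_MSE_dist_coh_final} share the same minimizers. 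I expect the only real difficulty to be bookkeeping: carrying the per-terminal factors $g_l$, $\sigma_{v_l}$ correctly through the definitions \eqref{eq:access_coh} of $\widetilde{\mathbf{A}},\widetilde{\mathbf{H}},\widetilde{\mathbf{v}}$ so that $\widetilde{\mathbf{A}}\widetilde{\mathbf{H}}=g_1\mathbf{A}_1\mathbf{H}_1+g_2\mathbf{A}_2\mathbf{H}_2$, $\mathbb{E}[\widetilde{\mathbf{v}}\widetilde{\mathbf{v}}^\top]=\mathbf{I}$, and the block-wise power and rank constraints reproduce exactly the quantities in \eqref{eq:meas_coh_final}--\eqref{eq:power_coh}; the push-through plus Schur-complement mechanics are identical to the single-terminal proof.
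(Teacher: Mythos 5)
Your proposal is correct and is exactly the argument the paper intends: the paper omits this proof entirely, stating only that it follows the proofs of Theorems~\ref{theo:sing_ter} and \ref{theo:mult_ter_ortho}, and your chain (variable change $\widetilde{\mathbf{Q}}=\widetilde{\mathbf{A}}^\top\widetilde{\mathbf{A}}$, push-through/matrix-inversion identity on $\widetilde{\mathbf{R}}_n^{-1}$, two slack variables, two Schur complements, tightness of both LMIs at the optimum) reproduces that template faithfully, so you have in effect supplied the missing proof rather than deviated from it. The one point to be careful about is the power constraint: with $\widetilde{\mathbf{A}}=[\,g_1\sigma_{v_1}\mathbf{A}_1\;\;g_2\sigma_{v_2}\mathbf{A}_2\,]$ the $l$-th diagonal block of $\widetilde{\mathbf{Q}}$ is $g_l^2\sigma_{v_l}^2\mathbf{A}_l^\top\mathbf{A}_l$, so $\Tr\{(\mathbf{H}_l\mathbf{R}_x\mathbf{H}_l^\top+\sigma_{v_l}^2\mathbf{I}_{L_l})\mathbf{Q}_l\}$ equals the $l$-th term of \eqref{eq:power_coh} only after dividing by $g_l^2\sigma_{v_l}^2$; your claim that the feasible set ``becomes exactly'' the last two constraints therefore needs either these rescaling factors or the reading $\mathbf{Q}_l\triangleq\mathbf{A}_l^\top\mathbf{A}_l$ (this is an inconsistency already present in the paper's own statement and in Remark~\ref{rem:coh}, not a flaw in your reasoning).
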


\begin{proof}
	The proof is omitted since it can be followed by the proofs of \theoref{theo:sing_ter} and \theoref{theo:mult_ter_ortho}.
\end{proof}

\begin{remark} \label{rem:coh}
In order to solve the optimization problem \eqref{eq:opt 1_final_coh} for  $\mathbf{A}_l$, $l \in \{1,2\}$, we follow similar steps as in \procref{proc}: We first relax the problem \eqref{eq:opt 1_final_coh} by ignoring the rank constraint, which results in a convex SDR program with respect to optimization variables. Once $\widetilde{\mathbf{Q}}^\star$ is determined, we take the EVD $\widetilde{\mathbf{Q}}^\star = \mathbf{U}_{\widetilde{q}} \mathbf{\Gamma}_{\widetilde{q}} \mathbf{U}_{\widetilde{q}}$, then approximately reconstruct $\widetilde{\mathbf{A}}^\star = \mathbf{U}_{\widetilde{a}} [\mathrm{diag}(\sqrt{\gamma_{\widetilde{q}_1}},\ldots,\sqrt{\gamma_{\widetilde{q}_M}}) \; \; \mathbf{0}_{M \times (L_1 + L_2 - M)}] \mathbf{U}_{\widetilde{q}}^\top$, where $\gamma_{\widetilde{q}_i}$ ($1 \leq i \leq M$) are the largest eigen-values of $\mathbf{Q}^\star$. Next, we partition $\widetilde{\mathbf{A}}^\star$ to extract matrices $\mathbf{A}_l^\star$, $l \in \{1,2\}$. For power-rescaling the sensing matrices $\mathbf{A}_l^\star$ to meet the power constraint, similar to the orthogonal MAC, we give the weighting coefficient $\sqrt{\alpha_l}$ to the corresponding matrix and optimize over ${\alpha_l}$. 
The optimization is done by solving \eqref{eq:opt 1_final_coh} with new optimization variables $\alpha_1 \geq 0$ and $\alpha_2 \geq 0$. Note that the rank and positive semi-definite constraints are immaterial at this stage since $\mathbf{\widetilde{Q}}$  already fulfils these constraints. In this case, $\widetilde{\mathbf{Q}}$ becomes
\begin{equation*}
\begin{aligned}
	\widetilde{\mathbf{Q}}  = \widetilde{\mathbf{A}}^{\star \top} \widetilde{\mathbf{A}}^\star 
	 = \left[
	\begin{array}{c c}
	   \alpha_1 \sigma_{v_1}^2  \mathbf{A}_1^{\star \top} \mathbf{A}_1^\star & \alpha_3   \sigma_{v_1} \sigma_{v_2} \mathbf{A}_1^{\star \top} \mathbf{A}_2^\star \\ 
	  \alpha_3  \sigma_{v_1} \sigma_{v_2} \mathbf{A}_2^{\star \top} \mathbf{A}_1^\star  &  \alpha_2  \sigma_{v_2}^2  \mathbf{A}_2^{\star \top} \mathbf{A}_2^\star\\
	\end{array}
	\right],
\end{aligned}
\end{equation*}
where $\alpha_3 = \sqrt{\alpha_1 \alpha_2}$, and $\mathbf{A}_l^\star$ is known from the previous stage. In order to convexify the latter assumption, using the Schur's complement \cite{04:Boyd_book}, we write it as the following matrix inequality
\begin{equation*}
\begin{aligned}
	 \left[
	\begin{array}{c c}
	   \alpha_1 & \alpha_3 \\ 
	  \alpha_3   &  \alpha_2 \\
	\end{array}
	\right] \succeq \mathbf{0}.
\end{aligned}
\end{equation*}
Hence, the power-rescaling optimization problem becomes convex in variables $\alpha_1 \geq 0$, $\alpha_2 \geq 0$, $\alpha_3 \geq 0$, $\widetilde{\mathbf{X}}_\mathcal{S}$ and $\widetilde{\mathbf{Y}}$ which can be solved using any standard convex solver. Note also that the final rescaled optimized sensing matrices would be $\sqrt{\alpha_1^\star} \mathbf{A}_1^\star$ and $\sqrt{\alpha_2^\star} \mathbf{A}_2^\star$ which satisfy the power constraint with equality. 
\end{remark}

The extension of the design procedure for coherent MAC with more than 2 terminals is straightforward, and can be done using the same steps as discussed in the previous subsection for orthogonal MAC.

\section{Complexity Considerations} \label{sec:complexity}
In this section, we discuss the computational complexity of solving the proposed optimization scheme for sensing matrix design in single-- and multi--terminal settings. We also provide a low-complexity alternative design approach based on stochastic optimization.

First, in the single-terminal setting, we note that the high computational complexity in \procref{proc} arises from the first step, i.e., solving the SDR problem (\eqref{eq:opt 1_final}  without the rank constraint). More precisely, the SDR problem consists of one matrix variable $\mathbf{Q}$ of size $L \times L$,  ${N \choose K}$  matrix variables $\mathbf{X}_\mathcal{S}$ of size $K \times K$, and one matrix variable $\mathbf{Y}$ of size $L \times L$. Hence, it can be iteratively solved using interior point methods with computational complexity growing at most like $\mathcal{O}(2 L^6 + {N \choose K}^3 K^6)$ arithmetic operations in each iteration \cite{06:Zhi-Quan}. Following similar arguments, the computational complexity of solving the SDR problems associated with multi-terminal orthogonal MAC, i.e., \eqref{eq:opt 1_final_dist}, and multi-terminal coherent MAC, i.e., \eqref{eq:opt 1_final_coh}, grows at most like $\mathcal{O}(2 L_1^6 + 2 L_2^6 + {N \choose K}^3 K^6)$ and $\mathcal{O}(2 (L_1 + L_2)^6 + {N \choose K}^3 K^6)$, respectively. Therefore, it can be seen that as $N$ increases, the computational complexity grows exponentially\footnote[1]{Note that ${N \choose K} \approx 2^{N H(K/N)}$, where $H(\cdot)$ denotes the binary entropy function, i.e., $H(p) \triangleq -p \log_2 p - (1 - p) \log_2(1-p)$, for $0 < p < 1$.}. 

The computational complexity of solving the SDR problems associated with \eqref{eq:opt 1_final}, \eqref{eq:opt 1_final_dist} and \eqref{eq:opt 1_final_coh} can be significantly reduced under certain assumptions (see, e.g., the special cases I-IV in \secref{sec:special}), for which closed-form solutions can be derived. Here, we offer an alternative in order to solve the SDR problem of \eqref{eq:opt 1_final} in a less computational way. Note that the objective function $\mathrm{MSE}^{(lb)}$ in \eqref{eq:oracle_MSE_2} can be rewritten as 
\begin{equation} \label{eq:less_complex}
	\mathrm{MSE}^{(lb)} = \mathbb{E}_{\boldsymbol{\mathcal{S}}} \left[  \Tr \left\{ \left(\mathbf{R}^{-1} \!+\! g^2 \mathbf{E}_{\boldsymbol{\mathcal{S}}}^\top \mathbf{H}^\top \mathbf{A}^\top \mathbf{R}_n^{\! -1} \mathbf{A} \mathbf{H} \mathbf{E}_{\boldsymbol{\mathcal{S}}} \right)^{\!-1} \right\} \right],
\end{equation}
where $\boldsymbol{\mathcal{S}}$ is a random variable which picks a support set $\mathcal{S}$ uniformly at random from the set of all possibilities $\Omega$, and $\mathbb{E}_{\boldsymbol{\mathcal{S}}}$ denotes the expectation  with respect to  the random support set  $\boldsymbol{\mathcal{S}}$. Notice that the expectation in \eqref{eq:less_complex} can be (approximately) calculated using the sample mean as
\begin{equation} \label{eq:less_complex_2}
	\mathrm{MSE}^{(lb)} \! \approx \!\frac{1}{|\Omega'|} \sum_{\mathcal{S}' \subset \Omega'} \!   \Tr \left\{ \left(\mathbf{R}^{-1} \! \!+\! g^2 \mathbf{E}_{\mathcal{S}'}^\top \mathbf{H}^\top \mathbf{A}^\top \mathbf{R}_n^{\! -1} \mathbf{A} \mathbf{H} \mathbf{E}_{\mathcal{S}'} \right)^{\!-1} \right\} 
\end{equation}
where $\mathcal{S}'$ is uniformly chosen from a set $\Omega' \subset \Omega$. Note that the cardinality $|\Omega'|$ can be chosen to be far less than ${N \choose K}$. As a result, the computational complexity of solving the resulting SDR problem reduces to $\mathcal{O}(2 N^6 + |\Omega'|^3 K^6)$ arithmetic operations, where $ |\Omega'| \ll {N \choose K}$. Following the same arguments, the SDR problems of \eqref{eq:opt 1_final_dist} and \eqref{eq:opt 1_final_coh} can be also approximately solved with a significantly reduced computational complexity.

\begin{remark}
We note that in the above analysis, we assume that all the support sets are uniformly drawn from ${N \choose K}$ possibilities, i.e., all supports are equi-probable. Hence, according to \eqref{eq:less_complex}, there is no preference towards selecting a particular sparsity pattern in order to use the sample-mean approximation in \eqref{eq:less_complex_2}. However, by choosing a larger number of sparsity patterns, the approximation becomes tighter due to the {\em Law of Large Numbers}. It should be mentioned that the uniformly random selection of the support sets is indeed the worst-case assumption. If the support sets are selected according to a different non-uniform distribution, say $q(\mathcal{S})$, then one can approximate \eqref{eq:less_complex} by neglecting the tail of the probability density function $q(\mathcal{S})$. Owing to the concentration inequalities, the probability that the selected pattern $\boldsymbol{\mathcal{S}}$ exceeds the sum of mean and two/three times the standard deviation of this distribution is small, and the support set patterns for averaging can be chosen accordingly. 
\end{remark}

\section{Numerical Experiments} \label{sec:sim}
For the single--terminal setting, we provide numerical experiments for evaluating the sensing matrix design scheme proposed in \procref{proc}, which is referred to as
\begin{itemize}
	\item \textit{Lower-bound minimizing sensing matrix (\procref{proc})},
\end{itemize}
and compare it with the following design methods:
\begin{itemize}
	\item \textit{Upper-bound minimizing sensing matrix:} Using this method, we upper-bound the MSE of the MMSE estimator of the sparse source vector by that of the linear MMSE (LMMSE) estimator. The MSE incurred by using the LMMSE estimator can be written as
\begin{equation*} \label{eq:ub_MSE}
\begin{aligned}
	\mathrm{MSE}^{(ub)} &\triangleq  \Tr \left\{ \left(\mathbf{R}_x^{-1} + g^2 \mathbf{H}^\top \mathbf{A}^\top \mathbf{R}_n^{-1} \mathbf{A} \mathbf{H} \right)^{-1} \right\}.&
\end{aligned}
\end{equation*}
Optimizing the sensing matrix with respect to minimizing the above equation under a power constraint has been studied in \cite{08:Jin,07:Schizas}.
	
	\item \textit{Gaussian sensing matrix:} This method is typically a standard approach in literature for generating a sensing matrix. Each element of the Gaussian sensing matrix is generated according to the standard Gaussian distribution. 
		\item \textit{Tight frame}: Using this method, the sensing matrix is chosen as $\mathbf{A} =  \mathbf{U}_a \left[\mathbf{I}_M \; \; \mathbf{0}_{M \times (L-M)}\right]  \mathbf{V}_a^\top$, where $\mathbf{U_a} \in \mathbb{R}^{M \times M}$ and $\mathbf{V}_a \in \mathbb{R}^{L \times L}$ are arbitrary unitary matrices. 
\end{itemize}

Note that we scale the resulting sensing matrix, described above, by $\sqrt{P / \Tr\{(\mathbf{H} \mathbf{R}_x \mathbf{H}^\top+ \sigma_v^2 \mathbf{I}_L)\mathbf{A}^{\top} \mathbf{A}\}}$ in order to satisfy the power constraint. We also compare the actual MSE, incurred by using the above methods, with the value of the lower-bound \eqref{eq:oracle_MSE_2} when the lower-bound minimizing sensing matrix is applied. This will be referred to as \textit{lower-bound} in our experiments. It should be also mentioned that for solving the convex SDR problems, we use the \texttt{CVX} solver \cite{cvx} . 

We also compare the performance of the proposed schemes for the single--terminal setting, and multi--terminal settings with orthogonal and coherent MAC described in \remref{rem:ortho} and \remref{rem:coh}, respectively. 

\subsection{Experimental Setups} \label{sec:setup}

We evaluate the performance using the normalized MSE (NMSE) criterion, defined as\footnote[2]{NMSE can be thought of as MSE per degree of freedom.}
\begin{equation*}
	\mathrm{NMSE} \triangleq \frac{\mathbb{E}[\|\mathbf{x} - \widehat{\mathbf{x}}\|_2^2]}{K},
\end{equation*}
where $\widehat{\mathbf{x}}$ is the decoder's output. 

In addition to NMSE, we also compare the performance of proposed sensing matrix design in terms of the \textit{probability of support set recovery} which is defined as
\begin{equation*}
	\mathrm{Pr}\{n \neq \hat{n}: n \in \mathcal{S}, \hat{n} \in \hat{\mathcal{S}}\},
\end{equation*}
where $\hat{\mathcal{S}}$ is the reconstructed support set of the vector $\widehat{\mathbf{x}}$, and $\hat{n} \in \{1,2,\ldots,N\}$ is an element of the reconstructed support set $\hat{\mathcal{S}}$.

Our simulation setup is described as follows.
  For given values of sparsity level $K$ (assumed known in advance) and input vector size $N$, we choose the number of measurements $M$. 
   We randomly generate a set of exactly $K$-sparse vector $\mathbf{x}$, where the support set $\mathcal{S}$ with $|\mathcal{S}| = K$ is chosen uniformly at random over the set $\{1,2,\ldots,N\}$. The non-zero components of $\mathbf{x}$ are drawn from Gaussian distribution $\mathcal{N}(\mathbf{0},\mathbf{R})$, and the covariance matrix $\mathbf{R} \in \mathbb{R}^{K \times K}$ is generated according to the exponential model \cite{01:Loyka}, where each entry at row $i$ and column $j$ is chosen as $\rho^{| i - j |}$ in which $0 \leq \rho < 1$ is known as correlation coefficient. We compute sample covariance matrix for the sparse source vector, i.e., $\mathbf{R}_x = \mathbb{E}[\mathbf{x} \mathbf{x}^\top]$ using $10^5$ randomly generated samples of the source vector $\mathbf{x}$. We let $L=N$, $\mathbf{H} = \mathbf{I}_N$ and $\mathbf{v= 0}$ for the single--terminal setting, and for each terminal in the multi--terminal setting, 
  and estimate the source $\mathbf{x}$ from noisy measurements using sparse reconstruction algorithms. We mainly use the greedy orthogonal matching pursuit (OMP) algorithm \cite{07:Tropp}, and the Bayesian-based random--OMP reconstruction algorithm \cite{09:Elad}, which is a low-complexity approximation of the exact (exhaustive) MMSE estimator.

\subsection{Experimental Results}

To assess the actual performance of the proposed design methods using Monte-Carlo simulations, we generate $5000$ realizations of the input sparse vector $\mathbf{x}$. In our first two experiments, we use, at the decoder, the random-OMP algorithm for reconstruction of sparse source vector. 

In our first experiment, we use the simulation parameters $N = 36, K= 3, P=10 \text{ dB}, g = 0.5, \sigma_w = 0.1, \rho = 0.25$.
We plot the NMSE of the design methods as a function of $M$ in \figref{fig:MSE_M_randOMP_new}. The value of $M$ can be thought of as bandwidth or number of transmissions over channel. We observe that at all measurement regions, the proposed lower-bound minimizing sensing matrix outperforms the other competing methods by taking into account sparsity pattern of the sparse source. As expected, as the number of measurements increases, the performance of the methods improves, however, it finally saturates and increasing $M$ further does not help to improve NMSE. This is because at higher number of measurements, the NMSE is influenced more by the additive noise which is fixed. As $M$ increases, the performance of the tight frame approaches that of the lower-bound  minimizing sensing matrix, illustrating that the latter behaves like an orthogonal transform. 
\begin{figure}
  \begin{center}
  \includegraphics[width=0.9\columnwidth,height=7cm]{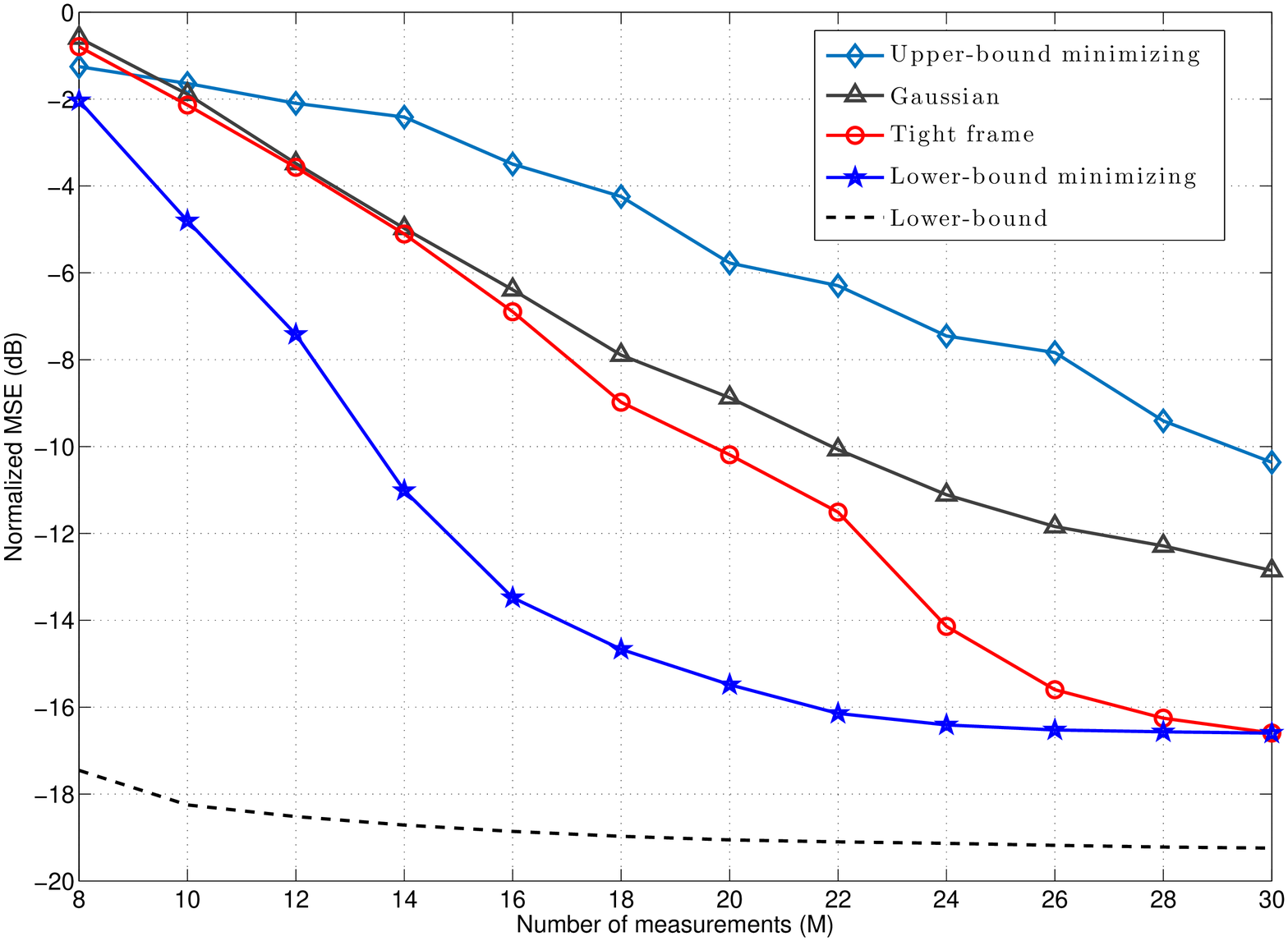}\\
  \caption{NMSE (in dB) as a function of number of measurements $M$ using different sensing matrix design schemes. }
  \label{fig:MSE_M_randOMP_new}
  \end{center}
\end{figure}

Using the same simulation parameters, by fixing $M = 18$, we now vary transmission power $P$ (in dB), and evaluate the performance of the methods in terms of NMSE. The results are reported in \figref{fig:MSE_P_randOMP_new}. In the  low power regime, the performance of the competing methods are almost the same, however, as $P$ increases, the proposed lower-bound minimizing sensing matrix outperforms the other schemes. For example, at $P = 10$ dB, the proposed scheme gives a better performance by more than 6 dB as compared to the other methods.  

\begin{figure}
  \begin{center}
 \includegraphics[width=0.9\columnwidth,height=7cm]{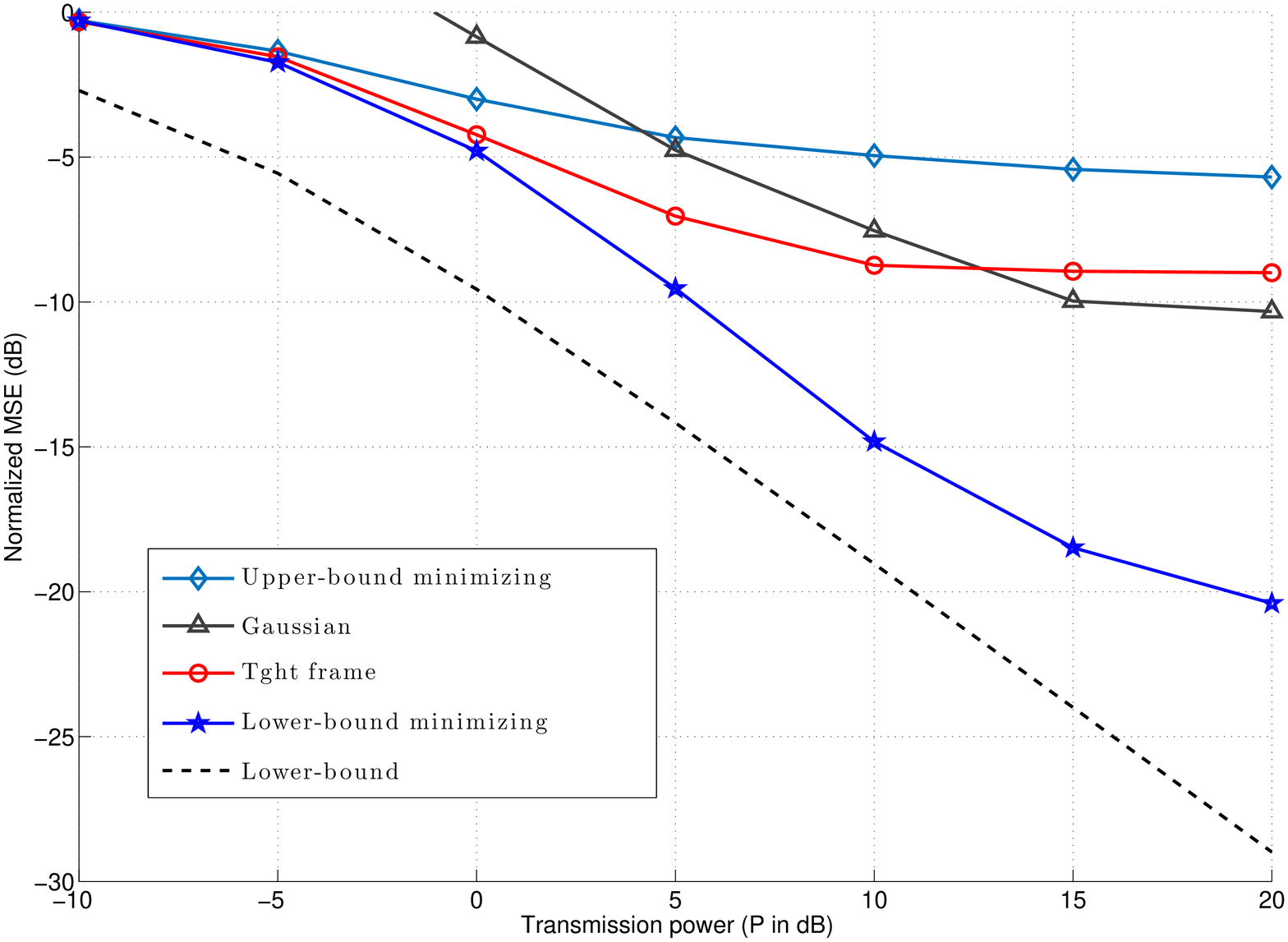}\\
  \caption{NMSE (in dB) as a function of transmission power $P$ (in dB) using different sensing matrix design schemes. }
  \label{fig:MSE_P_randOMP_new}
  \end{center}
\end{figure}

In the previous experiments, we have used the random-OMP algorithm (as the approximation of the exact MMSE estimator) for reconstructing the sparse source. While this algorithm is nearly optimal (in MSE sense), the reconstructed vector might not be necessarily a sparse vector \cite{09:Elad}. In some applications, together with reconstruction accuracy, one might desire a sparse representation at the receiving-end. This, for example, is relevant for compression or recognition purposes. Therefore, in our next experiments, we use the greedy OMP algorithm \cite{07:Tropp} which preserves the sparse structure through reconstructing the source at the decoder's output. 

Setting the decoder as the OMP algorithm, we compare the performance of the methods (in terms of NMSE) as a function of channel signal to noise ratio (CSNR), defined as $\mathrm{CSNR} \triangleq g^2 / \sigma_w^2$, in logarithmic scale. The results are reported in \figref{fig:MSE_G_OMP_new}. Simulation parameters are chosen as $N = 36, K = 3, P = 10 \text{ dB}, M = 18, \rho = 0.5$. We fix $\sigma_w = 0.1$, and vary the CSNR from $1$ to $10^3$ where the channel gain $g$ is chosen accordingly. It is observed that at $\mathrm{CSNR} = 10^2$, the lower-bound minimizing sensing matrix outperforms the Gaussian sensing matrix by more than 8 dB, and the upper-bound minimizing sensing matrix by more than 10 dB. Further, as channel condition improves, the lower-bound minimizing scheme, as compared to other schemes, takes a better advantage of the channel condition in order to reduce the NMSE.  

\begin{figure}
  \begin{center}
 \includegraphics[width=0.9\columnwidth,height=7cm]{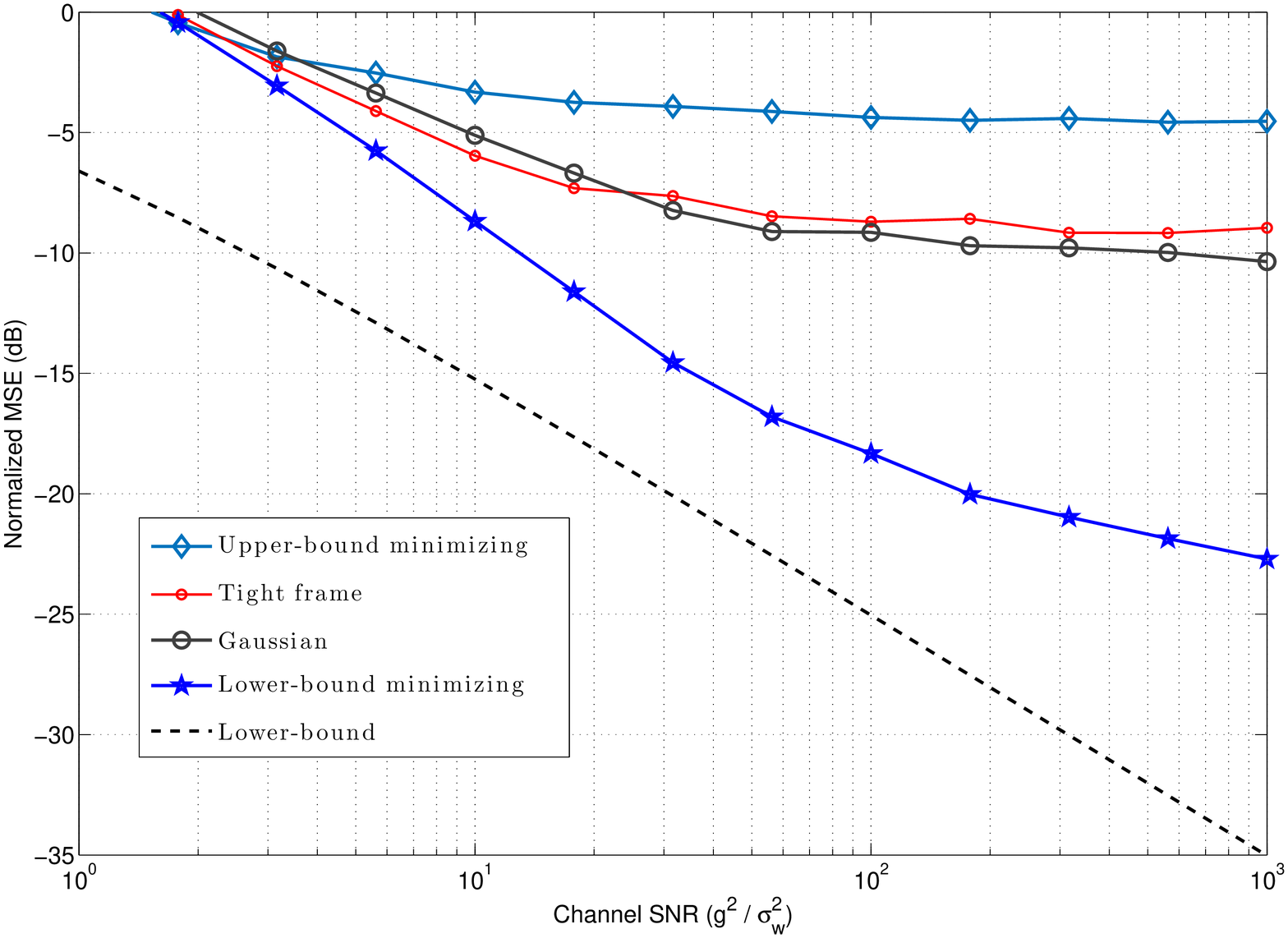}\\
  \caption{NMSE (in dB) as a function of CSNR (in logarithmic scale) using different sensing matrix design schemes. }
  \label{fig:MSE_G_OMP_new}
  \end{center}
\end{figure}

Although the MSE criterion is an important measure of accuracy in performance analysis, the probability of support set recovery is also of central interest in sparse source reconstruction. Therefore, in our next two experiments, we compare the performance of the sensing matrix designs in terms of support set recovery using the OMP algorithm by varying number of measurements (at fixed $P=10$ dB) in \figref{fig:prob_supp_M}, and by varying transmission power (at fixed $M=18$) in \figref{fig:prob_supp_P}. We use the same simulation parameters as those chosen for the previous experiment.

\begin{figure}
  \begin{center}
 \includegraphics[width=0.9\columnwidth,height=7cm]{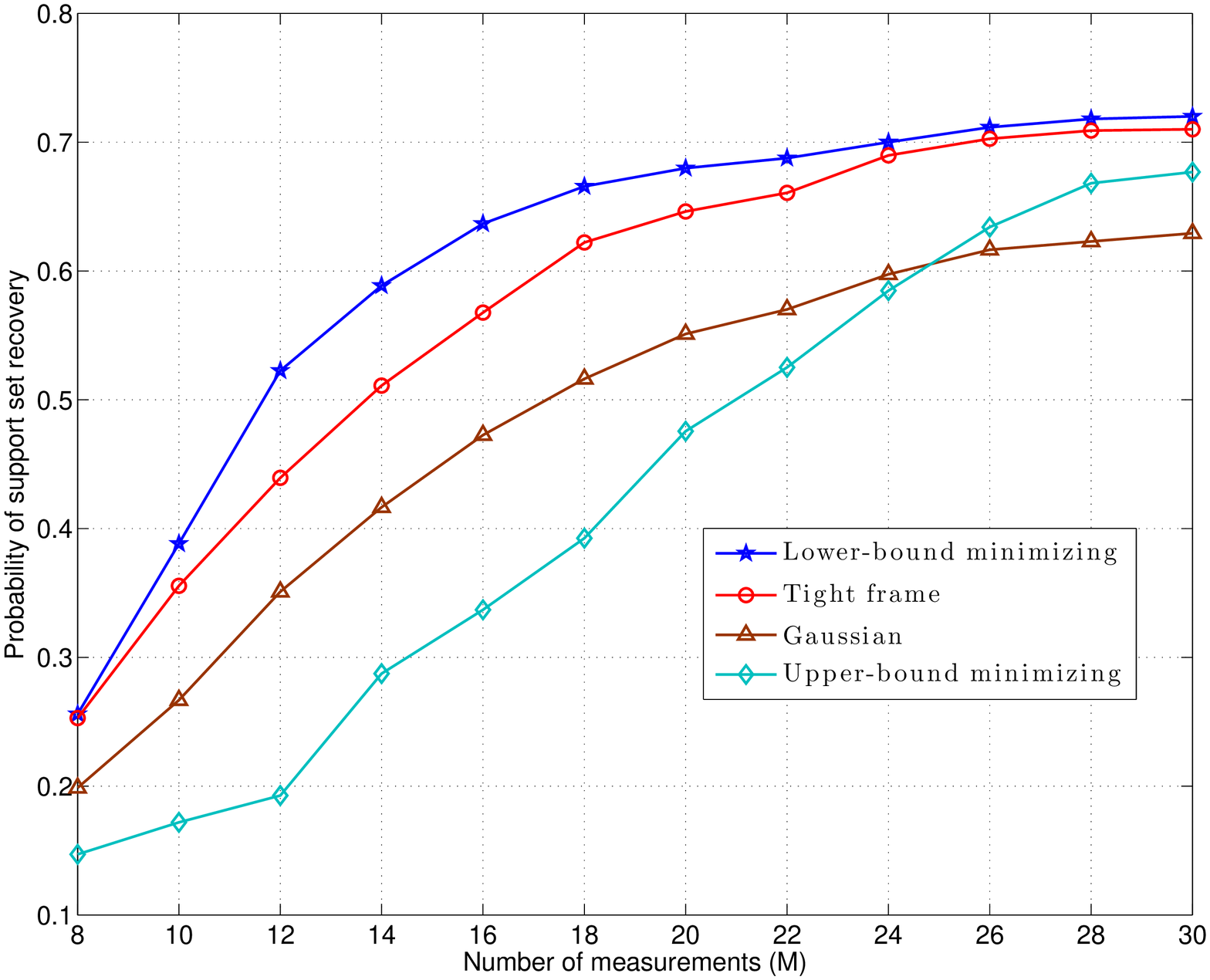}\\
  \caption{Probability of support set recovery as a function of number of measurements using different sensing matrix design schemes. }
  \label{fig:prob_supp_M}
  \end{center}
\end{figure}

\begin{figure}
  \begin{center}
 \includegraphics[width=0.9\columnwidth,height=7cm]{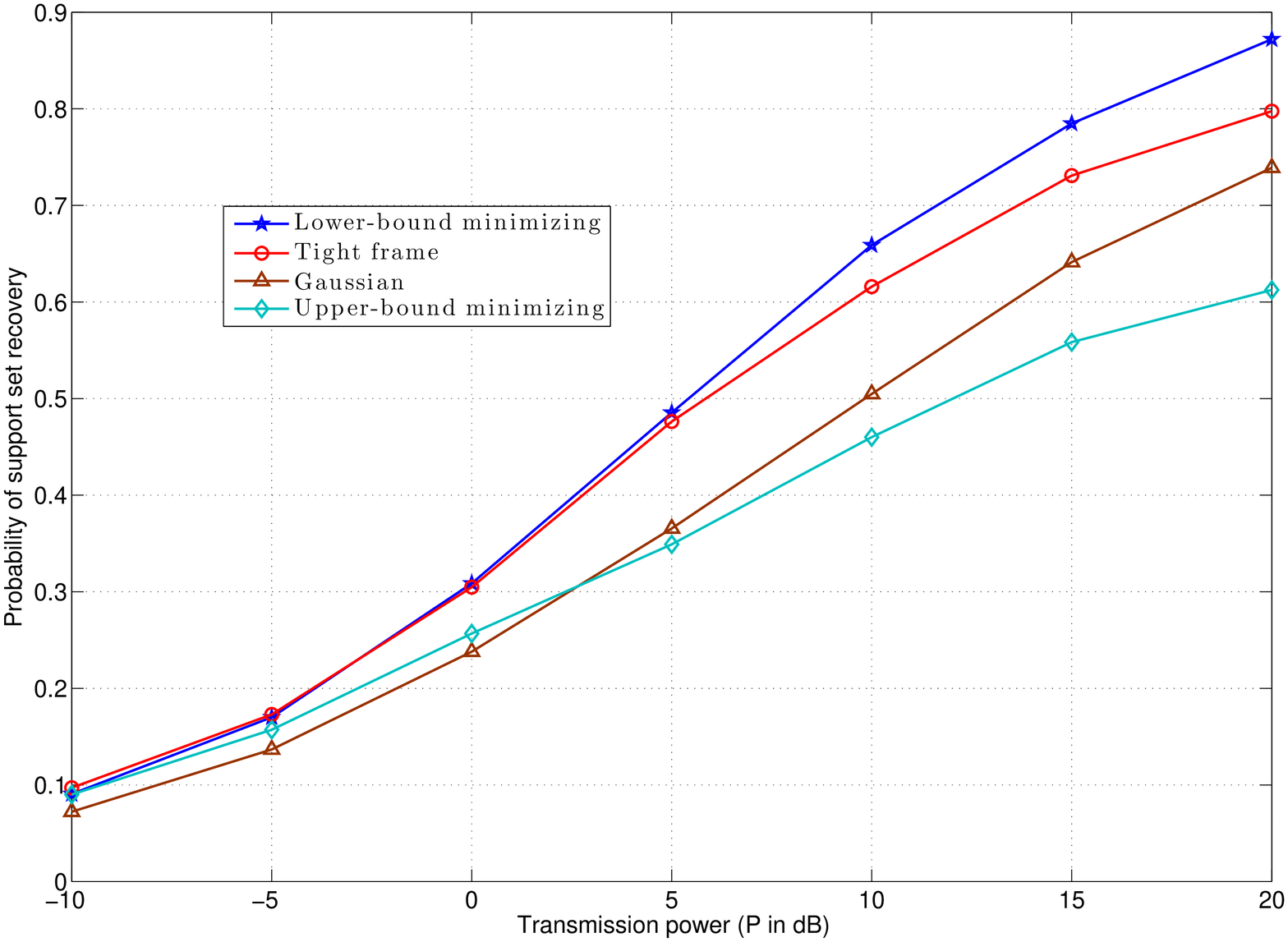}\\
  \caption{Probability of support set recovery as a function of transmission power using different sensing matrix design schemes. }
  \label{fig:prob_supp_P}
  \end{center}
\end{figure}

We observe that the lower-bound minimizing sensing matrix  improves the probability of support set recovery using the OMP reconstruction algorithm. One reason for this behavior is due to the fact that the proposed design endeavors to decrease the mutual coherence $\mu$ of the sensing matrix as discussed in \secref{sec:rel_mu_coh}. The value of $\mu$, which can be calculated by \eqref{eq:mutual co} numerically, at fixed $M = 20$ and $P = 10$ dB is $\mu = 0.46$ for the proposed sensing matrix design, while its value is $0.59$, $0.61$ and $0.75$ for tight-frame, upper-bound minimizing and Gaussian sensing matrices, respectively.

Next, we implement a higher-dimensional system, and apply the proposed low-complexity approach introduced in \secref{sec:complexity}. For this purpose, we choose the following simulation parameters: $N = 100, K = 5, \sigma_w = 0.1, g = 0.5, P =10 \text{ dB}, \rho = 0.75$, and plot the NMSE by varying $M$ in \figref{fig:MSE_M_100_new}. Further, the cardinality of the set $\Omega'$ in \eqref{eq:less_complex_2} is set to 2500, while the cardinality of the set of all sparsity patterns is $| \Omega |  = {N \choose K} \approx 7.5 \times 10^7$. It can be observed while the computational complexity of the lower-bound minimizing scheme has been considerably reduced, it still outperforms the other methods.

\begin{figure}
  \begin{center}
 \includegraphics[width=0.9\columnwidth,height=7cm]{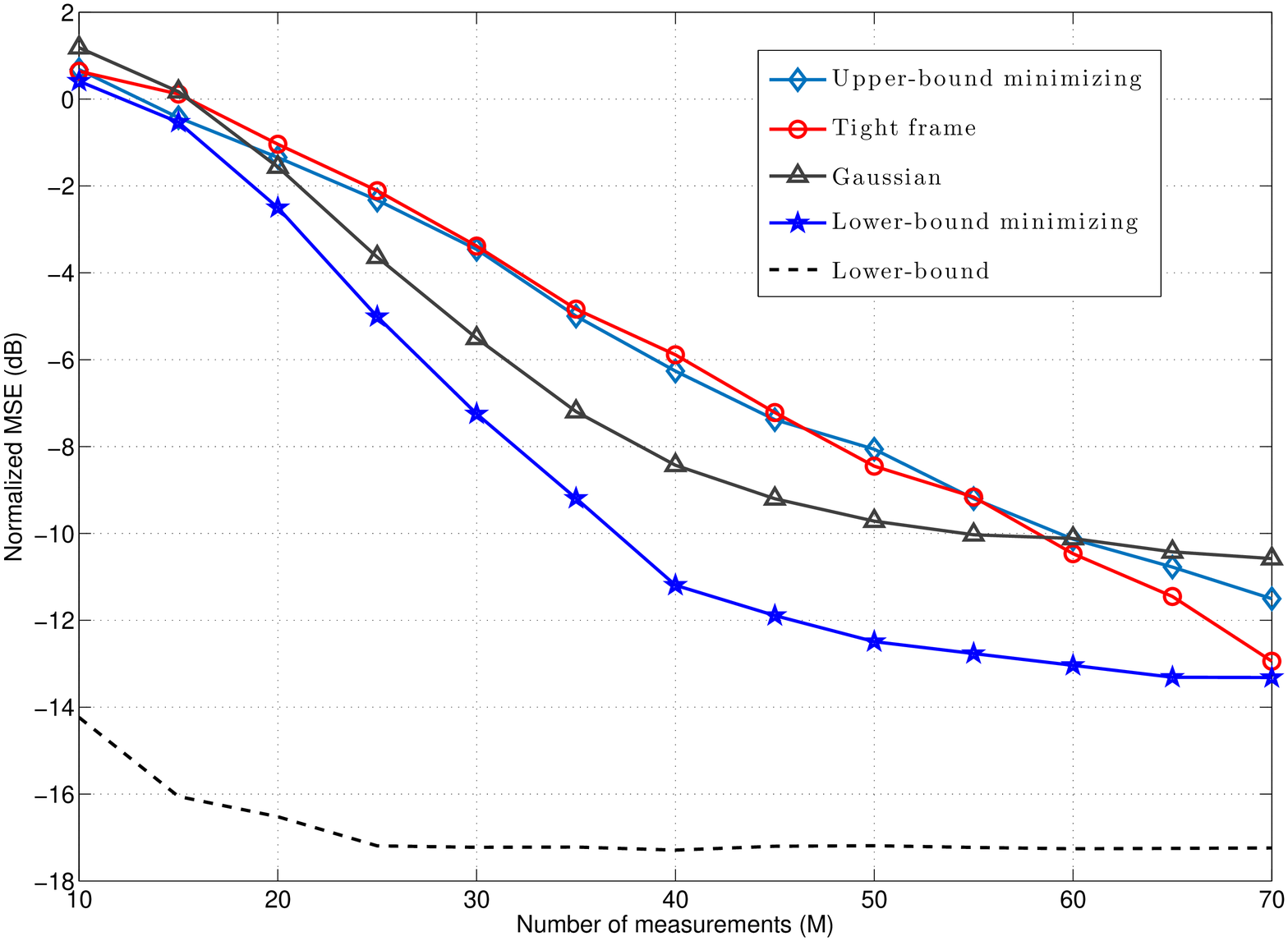}\\
  \caption{NMSE (in dB) as a function of number of measurements using different sensing matrix design schemes. }
  \label{fig:MSE_M_100_new}
  \end{center}
\end{figure}

In our last two experiments, we illustrate the performance of the proposed schemes for multi-terminal settings with orthogonal and coherent MAC. First, we choose simulation parameters as $N = 32, K = 3, \sigma_{w_1} = \sigma_{w_2} = \sigma_w = 0.2, g_1 = 0.5, g_2 = 0.75, P = 10 \text { dB}, \rho = 0.5$, and plot NMSE as a function of number of measurements in \figref{fig:power_scale}, where we assume that $M_1 = M_2$. We compare the performance of the proposed scheme for the orthogonal and coherent MAC with optimized power-rescaling (as described in \remref{rem:ortho} and \remref{rem:coh} by optimizing scaling coefficients $\alpha_1$ and $\alpha_2$), and with unoptimized power-rescaling where $\alpha_1 = \alpha_2$. As can be seen, while optimizing the scaling weights are effective in improving the performance in the coherent MAC, the performance in the orthogonal MAC is not too sensitive to the optimized weights. Further, the performance in the coherent MAC is superior to that of in the orthogonal MAC since, in the latter, each terminal is subject to additive channel noise. 

\begin{figure}
  \begin{center}
 \includegraphics[width=0.9\columnwidth,height=7cm]{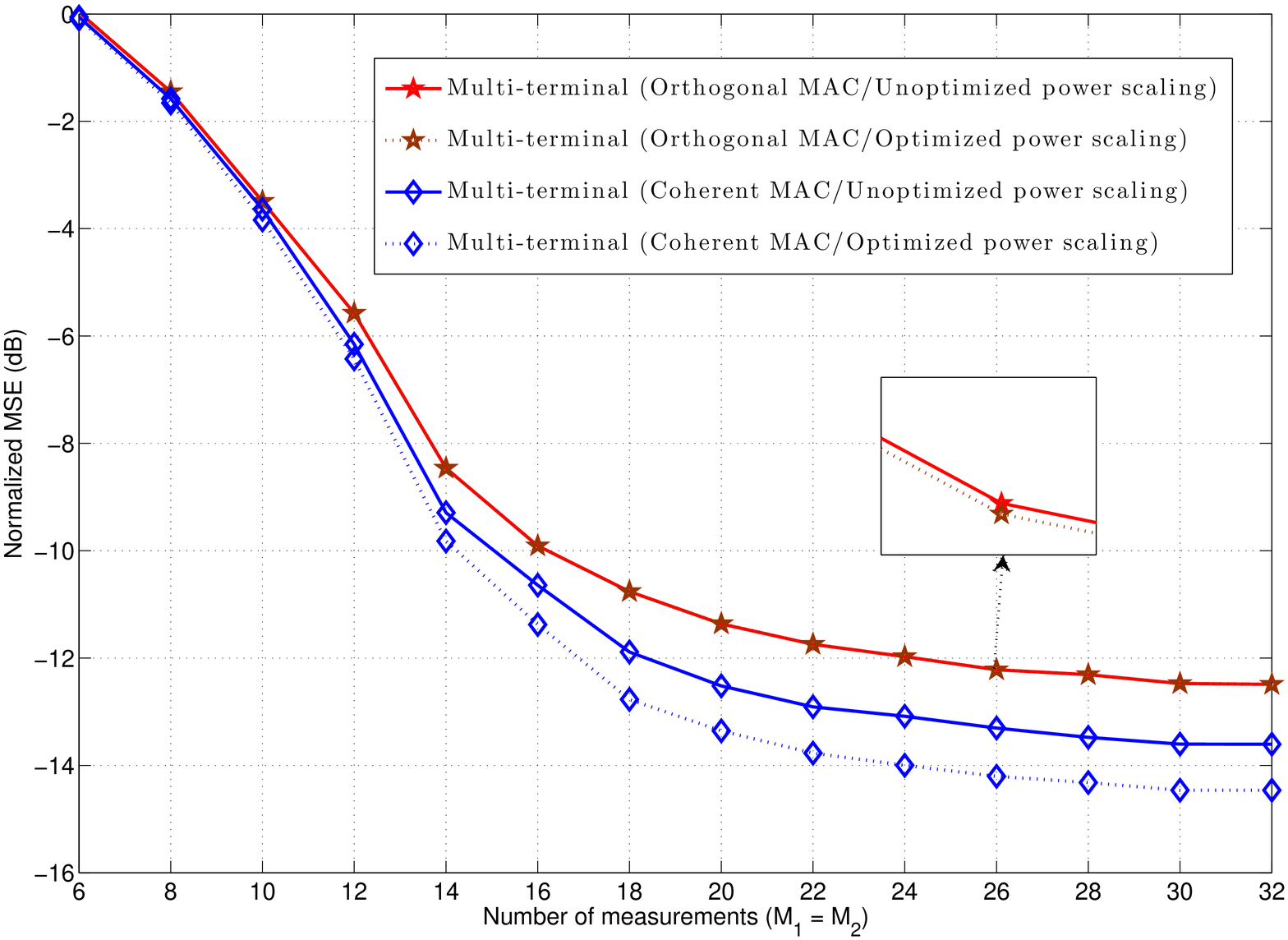}\\
  \caption{NMSE (in dB) as a function of number of measurements in multi--terminal setting with orthogonal and coherent MAC.}
  \label{fig:power_scale}
  \end{center}
\end{figure}

The final experiment demonstrates how a second terminal helps to improve the performance. For this purpose, we compare the proposed low-complexity design methods for the single--terminal setting and  multi--terminal settings with orthogonal MAC and  coherent MAC. In \figref{fig:MSE_GG_multi}, we compare the NMSE (in dB) of the proposed methods as a function of channel gain ratio $g_2 / g_1$ along with their corresponding lower-bounds. We set the following simulation parameters: $N = 64, K = 4, M = 40, \sigma_{w_1} = \sigma_{w_2} = \sigma_w = 0.02, P = 10 \text{ dB}, \rho = 0.5$, and choose $g_1 = 0.5$, then vary the ratio $g_2/g_1$ from $0.5$ to $4$. It can be seen as the channel condition in the second terminal improves, the gap between the performance in single-terminal and multi-terminal settings increases. 
\begin{figure}
  \begin{center}
 \includegraphics[width=0.9\columnwidth,height=7cm]{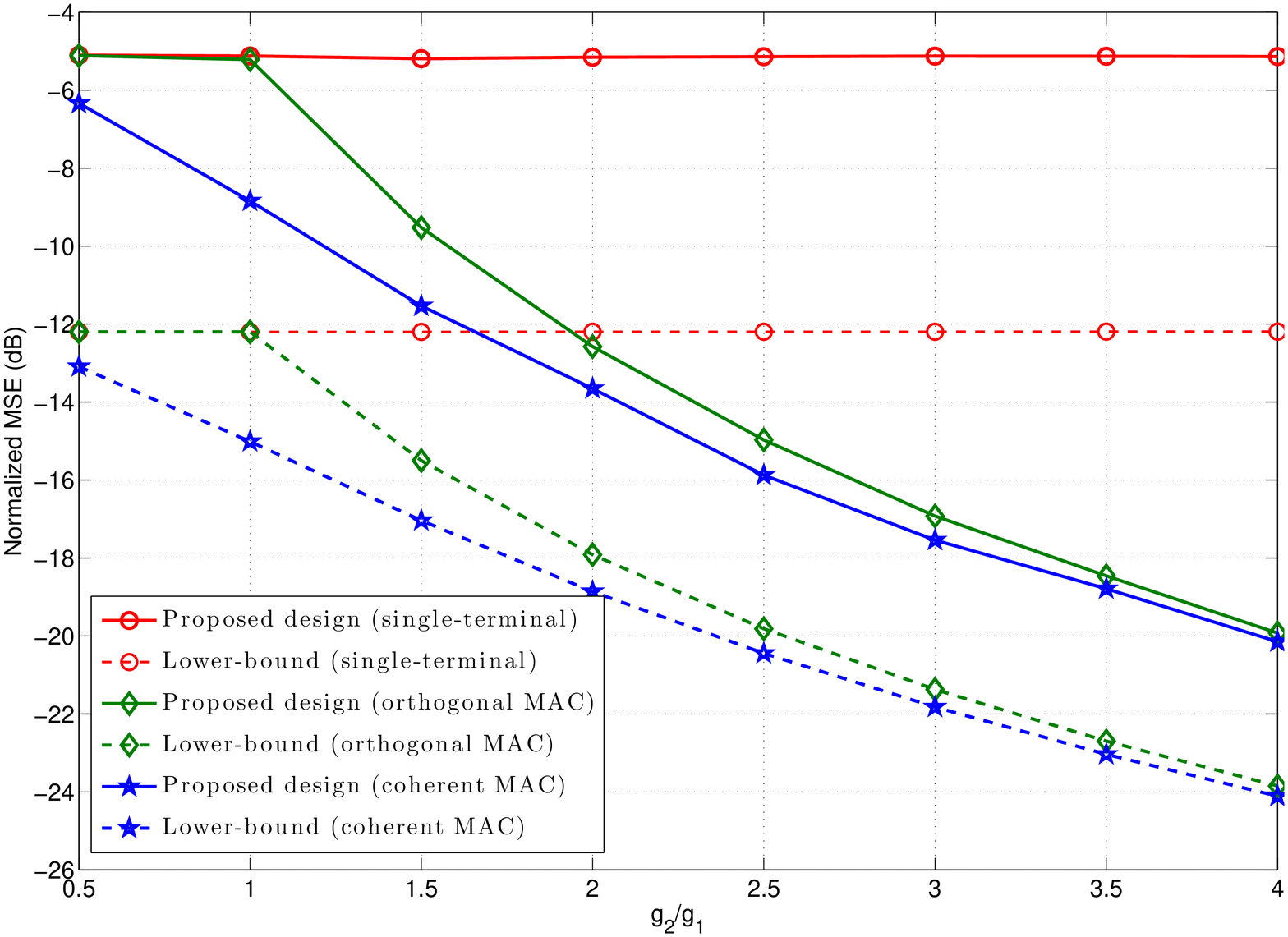}\\
  \caption{NMSE (in dB) as a function of channel gain ratio $g_2/g_1$ in single-- and multi--terminal settings.}
  \label{fig:MSE_GG_multi}
  \end{center}
\end{figure}
\section{Conclusions} \label{sec:conclusions}
We have proposed an optimization procedure for designing sensing matrix, under power constraint, in CS framework and in single-- and multi--terminal (with orthogonal and coherent MAC) settings. The design aims to minimize a lower-bound on MSE of sparse source reconstruction in the studied settings. Under certain conditions, we have been able to address the optimization procedure by deriving closed-form expressions for the sensing matrix. Numerical results show the advantage of our proposed design compared to other relevant schemes in terms of MSE and probability of support set recovery. This advantage has been achieved at the price of higher computational complexity. Therefore, we proposed an approximate optimization procedure in order to reduce the complexity burden. 


\begin{appendices}
\section{Some Useful Lemmas} \label{sec:lemmas}

The following lemmas are stated without proof.
\begin{lemma} \label{lem:prop}
The matrix $\mathbf{E}_\mathcal{S} \!  \in \! \mathbb{R}^{N \! \times  \! K}$, which is formed by taking an identity matrix of order $N \! \times \! N$ and deleting the columns indexed by the support set $\mathcal{S}$, has the following properties:
\begin{itemize}
	\item $\mathbf{E}_\mathcal{S}^\top \mathbf{E}_\mathcal{S} = \mathbf{I}_K$,
	\item $\sum_{\mathcal{S}} \mathbf{E}_\mathcal{S} \mathbf{E}_\mathcal{S}^\top = \frac{{N \choose K}}{K} \mathbf{I}_N.$
\end{itemize}
\end{lemma}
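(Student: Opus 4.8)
The plan is to prove both identities directly from the combinatorial structure of $\mathbf{E}_\mathcal{S}$. First I would fix the reading of $\mathbf{E}_\mathcal{S}$: although the statement describes it as deleting the columns indexed by $\mathcal{S}$, the declared dimension $N \times K$ together with the earlier usage $(\mathbf{AH})_\mathcal{S} = \mathbf{AH}\mathbf{E}_\mathcal{S}$ in \eqref{eq:oracle MMSE est} and \eqref{eq:oracle_MSE_2} forces $\mathbf{E}_\mathcal{S}$ to be the matrix that \emph{retains} the $K$ columns of $\mathbf{I}_N$ indexed by $\mathcal{S}$. Writing $\mathcal{S} = \{i_1,\ldots,i_K\}$, its columns are therefore the standard basis vectors $\mathbf{e}_{i_1},\ldots,\mathbf{e}_{i_K}$, and I would work with this description throughout.

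For the first identity, I would simply compute the $(a,b)$ entry of $\mathbf{E}_\mathcal{S}^\top\mathbf{E}_\mathcal{S}$, which equals $\mathbf{e}_{i_a}^\top\mathbf{e}_{i_b} = \delta_{ab}$ by orthonormality of the standard basis; hence $\mathbf{E}_\mathcal{S}^\top\mathbf{E}_\mathcal{S} = \mathbf{I}_K$. For the second identity, the key step is to observe that $\mathbf{E}_\mathcal{S}\mathbf{E}_\mathcal{S}^\top = \sum_{i\in\mathcal{S}} \mathbf{e}_i\mathbf{e}_i^\top$ is the diagonal matrix with ones on the coordinates in $\mathcal{S}$ and zeros elsewhere. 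Summing over all ${N \choose K}$ support sets and exchanging the two summations gives $\sum_{\mathcal{S}}\mathbf{E}_\mathcal{S}\mathbf{E}_\mathcal{S}^\top = \sum_{i=1}^N c_i\,\mathbf{e}_i\mathbf{e}_i^\top$, where $c_i$ is the number of $K$-subsets of $\{1,\ldots,N\}$ containing the fixed index $i$. A one-line counting argument (fix $i$ and choose the remaining $K-1$ entries from the other $N-1$ indices) yields $c_i = {N-1 \choose K-1}$ independently of $i$, so the sum is a scalar multiple of $\mathbf{I}_N$.

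The step that needs care is reconciling this multiplicity with the constant printed in the statement. The double count produces ${N-1 \choose K-1} = \frac{K}{N}{N \choose K}$, not the stated $\frac{1}{K}{N \choose K}$; the two scalars agree only in the degenerate case $N = K^2$. I would corroborate the value ${N-1 \choose K-1}$ by a trace check: $\Tr\big(\sum_{\mathcal{S}}\mathbf{E}_\mathcal{S}\mathbf{E}_\mathcal{S}^\top\big) = \sum_{\mathcal{S}}\Tr(\mathbf{E}_\mathcal{S}^\top\mathbf{E}_\mathcal{S}) = K{N \choose K}$, which equals $N\cdot{N-1 \choose K-1}$ but differs from $N\cdot\frac{1}{K}{N \choose K}$. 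I would therefore establish the identity in the form $\sum_{\mathcal{S}}\mathbf{E}_\mathcal{S}\mathbf{E}_\mathcal{S}^\top = {N-1 \choose K-1}\mathbf{I}_N = \frac{K}{N}{N \choose K}\mathbf{I}_N$ and flag that the factor $\frac{1}{K}$ in the printed statement should read $\frac{K}{N}$. No individual step is a genuine obstacle; the only real subtlety is this scalar discrepancy, which the counting and the trace check together settle unambiguously.
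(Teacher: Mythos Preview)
Your proposal is correct. The paper explicitly states this lemma without proof (see the opening line of Appendix~A: ``The following lemmas are stated without proof''), so there is nothing to compare against; your direct computation via the standard-basis description of $\mathbf{E}_\mathcal{S}$ is the natural argument and goes through cleanly.

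Your observation about the scalar is also correct and worth flagging. The printed constant $\tfrac{1}{K}\binom{N}{K}$ is a typo: the counting argument and the trace check both give $\binom{N-1}{K-1}=\tfrac{K}{N}\binom{N}{K}$, and this error does propagate into the paper's downstream computations (for instance, the power constraint \eqref{eq:power_cons_spec1} in the proof of \proref{prop:sepc_1} uses $\mathbf{R}_x=\tfrac{\sigma_x^2}{K}\mathbf{I}_N$, whereas the correct value under $\mathbf{R}=\sigma_x^2\mathbf{I}_K$ is $\mathbf{R}_x=\tfrac{K\sigma_x^2}{N}\mathbf{I}_N$). Your reading of the definition of $\mathbf{E}_\mathcal{S}$ is likewise the right one: the declared dimension $N\times K$ and the usage $(\mathbf{AH})_\mathcal{S}=\mathbf{AH}\mathbf{E}_\mathcal{S}$ force $\mathbf{E}_\mathcal{S}$ to \emph{retain} the columns indexed by $\mathcal{S}$, not delete them, so the paper's wording is simply inverted.
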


\begin{lemma} \label{lem:deriv_cov}
The covariance matrix of the sparse source, i.e., $\mathbf{R}_{x}$, can be parametrized as
\begin{equation} \label{deriv_cov}
\begin{aligned}
	\mathbf{R}_x = \frac{1}{{N \choose K}}\sum_\mathcal{S} \mathbf{E}_\mathcal{S} \mathbf{R} \mathbf{E}_\mathcal{S}^\top,
\end{aligned}
\end{equation}
where $\mathbf{R}$ is the covariance matrix of the $K$ non-zero components in $\mathbf{x}$.
\end{lemma}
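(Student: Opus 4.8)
The plan is to obtain the identity directly from the law of total expectation, conditioning on the random support set $\mathcal{S}$, after observing that $\mathbf{E}_\mathcal{S}$ is precisely the zero-padding (isometric embedding) operator that inserts the $K$ nonzero entries of $\mathbf{x}$ into their correct coordinates of $\mathbb{R}^N$. Concretely, since the columns of $\mathbf{E}_\mathcal{S}$ are the standard basis vectors indexed by $\mathcal{S}$, whenever $\mathbf{x}$ has support $\mathcal{S}$ we have $\mathbf{x} = \mathbf{E}_\mathcal{S}\mathbf{x}_\mathcal{S}$ and $\mathbf{x}_\mathcal{S} = \mathbf{E}_\mathcal{S}^\top \mathbf{x}$; this is just the property $\mathbf{E}_\mathcal{S}^\top \mathbf{E}_\mathcal{S} = \mathbf{I}_K$ of \lemref{lem:prop} read one level up.

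First I would compute the support-conditional second moment. Since $\mathbf{E}_\mathcal{S}$ is deterministic once $\mathcal{S}$ is fixed, it can be pulled out of the conditional expectation:
\[
\mathbb{E}\big[\mathbf{x}\mathbf{x}^\top \mid \mathcal{S}\big] = \mathbf{E}_\mathcal{S}\,\mathbb{E}\big[\mathbf{x}_\mathcal{S}\mathbf{x}_\mathcal{S}^\top \mid \mathcal{S}\big]\,\mathbf{E}_\mathcal{S}^\top = \mathbf{E}_\mathcal{S}\,\mathbf{R}\,\mathbf{E}_\mathcal{S}^\top ,
\]
where the last equality is the modeling assumption of \secref{sec:sys model}: conditioned on \emph{any} support set, the vector of nonzero components is $\mathcal{N}(\mathbf{0},\mathbf{R})$ with one and the same covariance $\mathbf{R}$, so $\mathbb{E}[\mathbf{x}_\mathcal{S}\mathbf{x}_\mathcal{S}^\top\mid\mathcal{S}] = \mathbf{R}$ regardless of which coordinates $\mathcal{S}$ selects. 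Then I would average over the support using $p(\mathcal{S}) = 1/{N \choose K}$; the tower property gives
\[
\mathbf{R}_x = \mathbb{E}[\mathbf{x}\mathbf{x}^\top] = \sum_{\mathcal{S}\subset\Omega} p(\mathcal{S})\,\mathbb{E}\big[\mathbf{x}\mathbf{x}^\top \mid \mathcal{S}\big] = \frac{1}{{N \choose K}}\sum_{\mathcal{S}} \mathbf{E}_\mathcal{S}\,\mathbf{R}\,\mathbf{E}_\mathcal{S}^\top ,
\]
which is the claimed parametrization. The same computation with a general prior $p(\mathcal{S})$ in place of the uniform one yields $\mathbf{R}_x = \sum_\mathcal{S} p(\mathcal{S})\,\mathbf{E}_\mathcal{S}\mathbf{R}\mathbf{E}_\mathcal{S}^\top$, the straightforward extension alluded to in \secref{sec:sys model}.

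There is no genuinely hard step here; the only points requiring care are bookkeeping rather than analysis. One is to verify that left-multiplication by $\mathbf{E}_\mathcal{S}$ coincides exactly with zero-padding $\mathbf{x}_\mathcal{S}$ at the off-support positions, so that $\mathbf{x}\mathbf{x}^\top = \mathbf{E}_\mathcal{S}\mathbf{x}_\mathcal{S}\mathbf{x}_\mathcal{S}^\top\mathbf{E}_\mathcal{S}^\top$ holds as an algebraic identity on the event $\{\mathrm{supp}(\mathbf{x})=\mathcal{S}\}$ (not merely in distribution), which is what legitimizes pulling $\mathbf{E}_\mathcal{S}$ through the conditional expectation. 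The other is to recognize that the support-conditional covariance being the same $\mathbf{R}$ for every $\mathcal{S}$ is part of the model, not something to be derived. As a consistency check, specializing to $\mathbf{R} = \sigma_x^2 \mathbf{I}_K$ makes $\mathbf{R}_x$ proportional to $\sum_\mathcal{S}\mathbf{E}_\mathcal{S}\mathbf{E}_\mathcal{S}^\top$, i.e.\ a scaled identity, matching the second identity of \lemref{lem:prop}.
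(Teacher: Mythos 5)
Your proof is correct, and since the paper explicitly states this lemma without proof, your argument — writing $\mathbf{x}=\mathbf{E}_\mathcal{S}\mathbf{x}_\mathcal{S}$ on the event $\{\mathrm{supp}(\mathbf{x})=\mathcal{S}\}$, pulling the deterministic $\mathbf{E}_\mathcal{S}$ through the support-conditional second moment to get $\mathbb{E}[\mathbf{x}\mathbf{x}^\top\mid\mathcal{S}]=\mathbf{E}_\mathcal{S}\mathbf{R}\mathbf{E}_\mathcal{S}^\top$, and averaging over the uniform prior $p(\mathcal{S})=1/{N\choose K}$ — is exactly the standard argument the authors intended, including the correct observation that the support-independence of $\mathbf{R}$ is a modeling assumption rather than something to prove. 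One peripheral caution about your final consistency check: the constant in the second identity of \lemref{lem:prop} as printed in the paper should be ${N-1 \choose K-1}=\tfrac{K}{N}{N \choose K}$ rather than $\tfrac{1}{K}{N \choose K}$ (each index lies in ${N-1 \choose K-1}$ of the $K$-subsets), but your check only invokes the qualitative fact that $\sum_\mathcal{S}\mathbf{E}_\mathcal{S}\mathbf{E}_\mathcal{S}^\top$ is a scaled identity, so your proof is unaffected.
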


\begin{lemma}\cite[page 249]{11:Marshall} \label{lem:constr bound}
	Let $\mathbf{A}$ and $\mathbf{B}$ are two $N \times N$ symmetric matrices, whose eigen-values $\alpha_1,\ldots, \alpha_N$ and $\beta_1,\ldots,\beta_N$ are ordered increasingly and decreasingly, respectively. Then $\Tr\{{\mathbf{AB}}\} \geq \sum_{i=1}^N \alpha_i \beta_i$.
\end{lemma}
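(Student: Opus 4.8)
The plan is to reduce the statement to the classical rearrangement inequality by passing through the Birkhoff--von Neumann theorem; this is the standard route to trace inequalities of von Neumann/Ruhe type, and the inequality here is precisely the lower companion of von Neumann's trace inequality specialized to symmetric matrices. The statement is classical (see e.g.\ Marshall, Olkin and Arnold), so I only sketch the argument.

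First I would diagonalize both matrices: write $\mathbf{A} = \mathbf{U}\mathbf{\Lambda}_\alpha\mathbf{U}^\top$ and $\mathbf{B} = \mathbf{V}\mathbf{\Lambda}_\beta\mathbf{V}^\top$, where $\mathbf{\Lambda}_\alpha = \mathrm{diag}(\alpha_1,\ldots,\alpha_N)$ with $\alpha_1 \le \cdots \le \alpha_N$, $\mathbf{\Lambda}_\beta = \mathrm{diag}(\beta_1,\ldots,\beta_N)$ with $\beta_1 \ge \cdots \ge \beta_N$, and $\mathbf{U},\mathbf{V}$ orthogonal. Setting $\mathbf{O} \triangleq \mathbf{U}^\top\mathbf{V}$ (which is orthogonal) and using the cyclic property of the trace,
\[
\Tr\{\mathbf{A}\mathbf{B}\} \;=\; \Tr\{\mathbf{\Lambda}_\alpha\,\mathbf{O}\,\mathbf{\Lambda}_\beta\,\mathbf{O}^\top\} \;=\; \sum_{i=1}^{N}\sum_{j=1}^{N} \alpha_i\,\beta_j\,[\mathbf{O}]_{ij}^2 .
\]
Next I would observe that the matrix $\mathbf{S}$ with entries $[\mathbf{S}]_{ij} \triangleq [\mathbf{O}]_{ij}^2$ is doubly stochastic: its entries are nonnegative, and since $\mathbf{O}$ is orthogonal every row and every column of $\mathbf{O}$ has unit Euclidean norm, so the rows and columns of $\mathbf{S}$ sum to one. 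Hence $\Tr\{\mathbf{A}\mathbf{B}\} = \sum_{i,j}\alpha_i\beta_j[\mathbf{S}]_{ij}$ is a linear functional of $\mathbf{S}$ evaluated at a point of the Birkhoff polytope, and a linear functional over that polytope attains its minimum at an extreme point, i.e.\ at a permutation matrix, by the Birkhoff--von Neumann theorem. Therefore
\[
\Tr\{\mathbf{A}\mathbf{B}\} \;\geq\; \min_{\pi}\; \sum_{i=1}^{N} \alpha_i\,\beta_{\pi(i)},
\]
the minimum ranging over all permutations $\pi$ of $\{1,\ldots,N\}$.

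Finally, because $\alpha_1 \le \cdots \le \alpha_N$ while $\beta_1 \ge \cdots \ge \beta_N$, the classical rearrangement inequality shows that the identity permutation already achieves this minimum, so $\min_\pi \sum_i \alpha_i\beta_{\pi(i)} = \sum_i \alpha_i\beta_i$, which is the claim. The step I would be most careful about is the reduction via Birkhoff's theorem: the matrices of the form $[\mathbf{O}]_{ij}^2$ (the \emph{orthostochastic} matrices) form a proper, non-convex subset of the doubly stochastic matrices, but this only helps, since the minimum of the linear functional over the \emph{larger} Birkhoff polytope is still a valid lower bound, and that minimum is attained at a permutation matrix, which is itself orthostochastic, so no slack is introduced. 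A self-contained alternative that avoids Birkhoff's theorem would be a direct induction on $N$, peeling off the pair $(\alpha_N,\beta_N)$ with an Abel-summation/interlacing argument, but the doubly stochastic argument is shorter and more transparent.
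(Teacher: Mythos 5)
The paper states this lemma without proof, citing Marshall, Olkin and Arnold, so there is no in-paper argument to compare against. Your proof is correct and is the standard one for this classical trace inequality: the diagonalization, the passage to the orthostochastic matrix $[\mathbf{O}]_{ij}^2$, the reduction to permutation matrices via the Birkhoff--von Neumann theorem, and the final appeal to the rearrangement inequality (with $\alpha$ increasing and $\beta$ decreasing, the identity permutation is the oppositely-ordered, hence minimizing, pairing) are all sound, and your remark that minimizing over the full Birkhoff polytope only weakens the bound in the right direction correctly disposes of the one subtle point, namely that orthostochastic matrices form a proper subset of the doubly stochastic ones.
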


\section{Proof of \theoref{theo:sing_ter}} \label{sec:app A}

To solve the optimization problem in \eqref{eq:opt 1}, let us first define
\begin{equation} \label{eq:MSE_S}
	\mathrm{MSE}^{(lb)}_\mathcal{S} \triangleq \Tr \left\{\left(\mathbf{R}^{-1} + g^2 \mathbf{E}_\mathcal{S}^\top \mathbf{H}^\top \mathbf{A}^\top \mathbf{R}_n^{-1} \mathbf{A} \mathbf{H} \mathbf{E}_\mathcal{S} \right)^{-1} \right\}.
\end{equation}
Using the matrix inversion lemma for $\mathbf{R}_n^{-1}$, we obtain
\begin{equation} \label{eq:mtx inv}
\begin{aligned}
	\mathbf{R}_n^{-1}  
		&= \sigma_w^{-2} \mathbf{I}_M - \sigma_w^{-2} \mathbf{A} \left(\frac{\sigma_w^{2}}{g^2 \sigma_v^{2}} \mathbf{I}_L + \mathbf{A}^\top \mathbf{A} \right)^{-1} \mathbf{A}^\top.& 
\end{aligned}
\end{equation}
Plugging \eqref{eq:mtx inv} back into \eqref{eq:MSE_S}, it follows that

{\small \begin{equation} \label{eq:opt 1_reform1}
\begin{aligned}
	&\mathrm{MSE}^{(lb)}_\mathcal{S} = \Tr \left\{\bigg(\mathbf{R}^{-1} + \frac{g^2}{\sigma_w^2} \mathbf{E}_\mathcal{S}^\top \mathbf{H}^\top \mathbf{A}^\top \mathbf{A} \mathbf{H} \mathbf{E}_\mathcal{S} \right.& \\
	&  \left. - \frac{g^2}{\sigma_w^2} \mathbf{E}_\mathcal{S}^\top \mathbf{H}^\top \mathbf{A}^\top  \mathbf{A}\left(\frac{\sigma_w^2}{g^2 \sigma_v^2} \mathbf{I}_L + \mathbf{A}^\top \mathbf{A}\right)^{-1} \mathbf{A}^\top \mathbf{AHE}_\mathcal{S} \bigg)^{-1} \right\}.& 
\end{aligned}
\end{equation}}

Next, defining $\mathbf{Q} \triangleq \mathbf{A}^\top \mathbf{A}$ and $\mathbf{D}_\mathcal{S} \triangleq \mathbf{HE}_\mathcal{S}$, the original optimization problem in \eqref{eq:opt 1} for finding optimized sensing matrix $\mathbf{A}$ can be equivalently translated into\footnote[1]{Note that since $p(\mathcal{S}) = 1/{N \choose K}$, it can be ignored in formulating the resulting optimization problems.}
{\small \begin{equation} \label{eq:opt 1_reform2}
\begin{aligned}
	&\underset{\mathbf{Q}}{\text{minimize}} \hspace{0.25cm} \sum_\mathcal{S} \Tr \left\{\bigg(\mathbf{R}^{-1} + \frac{g^2}{\sigma_w^2} \mathbf{D}_\mathcal{S}^\top \mathbf{Q} \mathbf{D}_\mathcal{S}  \right.& \\
	 &\left. \hspace{2cm}- \frac{g^2}{\sigma_w^2} \mathbf{D}_\mathcal{S}^\top  \mathbf{Q} \left(\frac{\sigma_w^2}{g^2 \sigma_v^2} \mathbf{I}_L + \mathbf{Q} \right)^{-1} \mathbf{Q} \mathbf{D}_\mathcal{S} \bigg)^{-1} \right\}& \\
	& \text{subject to} \hspace{0.25cm} \Tr \{(\mathbf{HR}_x \mathbf{H}^\top + \sigma_v^2 \mathbf{I}_L) \mathbf{Q}\} \leq P
	, \mathbf{Q} \succeq \mathbf{0} , \hspace{0.15cm} \mathrm{rank}(\mathbf{Q}) = M,&
\end{aligned}
\end{equation}}
where the rank constraint appears since $\mathbf{A} \in \mathbb{R}^{M \times L}$ with $M < L$. Introducing the semidefinite slack variable matrix $\mathbf{X}_\mathcal{S} \in \mathbb{R}^{K \times K}$, we can alternatively solve
\begin{equation} \label{eq:opt 1_reform3}
\begin{aligned}
	&\underset{\mathbf{Q},\mathbf{X}_\mathcal{S}}{\text{minimize}} \hspace{0.25cm} \sum_\mathcal{S} \Tr \{\mathbf{X}_\mathcal{S}\}	& \\
	&\text{subject to} \hspace{0.25cm} \bigg(\mathbf{R}^{-1} + \frac{g^2}{\sigma_w^2} \mathbf{D}_\mathcal{S}^\top \mathbf{Q} \mathbf{D}_\mathcal{S}  & \\
	 &\hspace{1.7cm} \!-\! \frac{g^2}{\sigma_w^2} \mathbf{D}_\mathcal{S}^\top  \mathbf{Q}\big(\frac{\sigma_w^2}{g^2 \sigma_v^2} \mathbf{I}_L\! + \!\mathbf{Q}\big)^{-\! 1} \mathbf{Q} \mathbf{D}_\mathcal{S} \bigg)^{- \!1} \! \! \! \! \! \preceq  \! \mathbf{X}_\mathcal{S}, \mathcal{S} \! \subset  \! \Omega& \\
	 &\hspace{1.7cm} \Tr \{ (\mathbf{HR}_x \mathbf{H}^\top + \sigma_v^2 \mathbf{I}_L) \mathbf{Q}\} \leq P & \\
	&\hspace{1.7cm} \mathbf{Q} \succeq \mathbf{0} , \hspace{0.15cm} \mathrm{rank}(\mathbf{Q}) = M.&
\end{aligned}
\end{equation}

Next, by applying the Schur's complement \cite{04:Boyd_book}, the first constraint in \eqref{eq:opt 1_reform3} can be rewritten as 
{\small \begin{equation} \label{eq:opt 1_reform4}
\begin{aligned}
&\left[\! \! \!
\begin{array}{c c}
   \mathbf{R}^{\!-1} \!+ \!\frac{g^2}{\sigma_w^2} \mathbf{D}_\mathcal{S}^\top \mathbf{Q} \mathbf{D}_\mathcal{S}  \!-\! \frac{g^2}{\sigma_w^2} \mathbf{D}_\mathcal{S}^\top  \mathbf{Q}(\frac{\sigma_w^2}{g^2 \sigma_v^2} \mathbf{I}_L \!+\! \mathbf{Q})^{-1} \mathbf{Q} \mathbf{D}_\mathcal{S} & \mathbf{I}_K \\ 
  \mathbf{I}_K  &   \mathbf{X}_\mathcal{S}   \\
\end{array} \! \! \!
\right] \succeq \mathbf{0} .& 
\end{aligned}
\end{equation}}

Introducing another slack semidefinite variable matrix $\mathbf{Y} \in \mathbb{R}^{L \times L}$, such that $\mathbf{Y} \succeq \frac{g^2}{\sigma_w^2}  \mathbf{Q}(\frac{\sigma_w^2}{g^2 \sigma_v^2} \mathbf{I}_N \!+\! \mathbf{Q})^{-1} \mathbf{Q}$, and using the Schur's complement for the resulting matrix inequality, we can further decompose the constraint in \eqref{eq:opt 1_reform4} into the two linear matrix inequalities in \eqref{eq:opt 1_final} which concludes the proof.

\section{Proof of \proref{prop:sepc_1}}
Using the notation $\mathbf{Q} = \mathbf{A}^\top \mathbf{A}$, we rewrite  \eqref{eq:opt 1_reform1} as
\begin{equation} \label{eq:closed_spec1_1}
\begin{aligned}
	&\mathrm{MSE}^{(lb)}_\mathcal{S} = \Tr \left\{\bigg(\frac{1}{\sigma_x^2} \mathbf{I}_K + \frac{g^2}{\sigma_w^2} \mathbf{E}_\mathcal{S}^\top \mathbf{Q} \mathbf{E}_\mathcal{S}  \right.& \\
	& \hspace{1.4cm} - \left. \frac{g^2}{\sigma_w^2} \mathbf{E}_\mathcal{S}^\top \mathbf{Q} \left(\frac{\sigma_w^2}{g^2 \sigma_v^2} \mathbf{I}_N + \mathbf{Q}\right)^{-1} \mathbf{Q} \mathbf{E}_\mathcal{S} \bigg)^{-1} \right\}.& 
\end{aligned}
\end{equation}

Applying \lemref{lem:deriv_cov}, the power constraint becomes
\begin{equation} \label{eq:power_cons_spec1}
\begin{aligned}
	&   \Tr \left\{ \left(\frac{\sigma_x^2}{K}   +  \sigma_v^2 \right) \mathbf{Q} \right\} \leq P,&
\end{aligned}
\end{equation}
and the objective function $\sum_\mathcal{S} \mathrm{MSE}^{(lb)}_\mathcal{S}$ is lower-bounded as
\begin{equation} \label{eq:closed_spec1_2}
\begin{aligned}
	&\sum_\mathcal{S} \mathrm{MSE}^{(lb)}_\mathcal{S} \geq \sum_\mathcal{S} K^2 \big /  \Tr \left\{\bigg(\frac{1}{\sigma_x^2} \mathbf{I}_K + \frac{g^2}{\sigma_w^2} \mathbf{E}_\mathcal{S}^\top \mathbf{Q} \mathbf{E}_\mathcal{S}  \right.& \\
	& \hspace{1.4cm} - \left. \frac{g^2}{\sigma_w^2} \mathbf{E}_\mathcal{S}^\top \mathbf{Q} \left(\frac{\sigma_w^2}{g^2 \sigma_v^2} \mathbf{I}_N + \mathbf{Q}\right)^{-1} \mathbf{Q} \mathbf{E}_\mathcal{S} \bigg)^{-1} \right\},&
\end{aligned}
\end{equation}
where we used the inequality $\Tr\{\mathbf{B}^{-1}\} \geq \frac{K^2}{\Tr\{\mathbf{B}\}}$ for a positive definite matrix $\mathbf{B}$ of dimension $K \times K$ \cite[Lemma 2]{03:Shengli}, in which the equality is satisfied when $\mathbf{B}$ becomes a scaled identity matrix. Hence, the objective function in the left hand side of \eqref{eq:closed_spec1_2} reaches its minimum when $\mathbf{Q} = \alpha \mathbf{I}_N$ (for some $\alpha > 0$) since $\mathbf{E}_\mathcal{S}^\top  \mathbf{E}_\mathcal{S} = \mathbf{I}_K$ (cf. \lemref{lem:prop}), and the matrix inside the argument of the trace becomes an scaled identity matrix. Note that this choice of $\mathbf{Q}$ does not affect the power constraint. Further, the coefficient $\alpha$ is derived such that the constraint  \eqref{eq:power_cons_spec1} is satisfied with equality that yields $\alpha = \frac{KP}{N(\sigma_x^2 + K \sigma_v^2)}$. Therefore, assuming $\mathbf{R} = \sigma_x^2 \mathbf{I}_K$ and  $\mathbf{H} = \mathbf{I}_N$, the solution to the SDR problem is 
\begin{equation} \label{eq:closed_spec1_3}
	\mathbf{Q}^\star = \frac{KP}{N(\sigma_x^2 + K \sigma_v^2)} \mathbf{I}_N. 
\end{equation}

Hence, the optimal sensing matrix $\mathbf{A}$ (with respect to \eqref{eq:opt_rec_A_appx}), after rescaling to meet the power constraint, becomes \eqref{eq:closed_spec2_4}.

\section{Proof of \proref{prop:sepc_2}}
Following the assumption in \proref{prop:sepc_2}, the SDR optimization problem simplifies into
\begin{equation} \label{eq:spec2_ref_1}
\begin{aligned}
	&\underset{\mathbf{Q}  }{\text{minimize}} \hspace{0.25cm} \sum_\mathcal{S} \Tr \left\{\bigg(\frac{1}{\sigma_x^2} \mathbf{I}_K + \frac{g^2}{\sigma_w^2} \mathbf{E}_\mathcal{S}^\top \mathbf{H}^\top \mathbf{Q} \mathbf{H} \mathbf{E}_\mathcal{S}  \bigg)^{-1} \right\}& \\
	& \text{subject to} \hspace{0.25cm} \frac{\sigma_x^2}{K}  \; \Tr \{\mathbf{H}^\top  \mathbf{Q} \mathbf{H}\} \leq P .& 
\end{aligned}	
\end{equation}

The objective function in \eqref{eq:spec2_ref_1} reaches its minimum when $\mathbf{H}^\top \mathbf{Q} \mathbf{H} = \alpha \mathbf{I}_N$ (see \cite[Lemma 2]{03:Shengli}). Taking SVD, we have $\mathbf{H} = \mathbf{U}_H \mathbf{\Gamma}_H  \mathbf{V}_H^\top$, where $ \mathbf{U}_H$ and  $\mathbf{V}_H$ are $N \times N$ unitary matrices and $\mathbf{\Gamma}_H = \mathrm{diag}(\gamma_{h_1},\gamma_{h_2}, \ldots, \gamma_{h_N} )$ is a diagonal matrix containing singular values $\gamma_{h_1}<\gamma_{h_2}< \ldots < \gamma_{h_N}$. Then, it follows that $\mathbf{Q}^\star$ should have the following structure
\begin{equation} \label{eq:spec2_ref_2}
	\mathbf{Q}^\star = \alpha (\mathbf{HH}^\top)^{-1} =  \alpha \mathbf{U}_H \mathbf{\Gamma}_H^{-2} \mathbf{U}_H^\top,
\end{equation}
where by plugging \label{eq:spec2_ref_2} into the power constraint, we obtain $\alpha = \frac{KP}{N \sigma_x^2}$. Therefore, the optimal sensing matrix $\mathbf{A}$ (with respect to \eqref{eq:opt_rec_A_appx}) can be chosen as in \eqref{eq:closed_spec2_4} after power rescaling.

\section{Proof of \proref{prop:sepc_3}}
Having the assumptions in \proref{prop:sepc_3}, the oracle estimator in \eqref{eq:oracle MMSE est} becomes
\begin{equation} \label{eq:oracle_est_spe_3}
\widehat{\mathbf{x}}^{(or)} \!=\! 
	g \left(\frac{g^2 \sigma_v^2}{\sigma_x^2} \mathbf{I}_K \!+\! g^2   \mathbf{A}_\mathcal{S}^\top \; (\mathbf{A}\mathbf{A}^\top)^{\! \dagger} \; \mathbf{A}_\mathcal{S} \right)^{\!-1} \!\!\mathbf{A}_\mathcal{S}^\top   (\mathbf{A}\mathbf{A}^\top)^{\! \dagger} \mathbf{y},
\end{equation}
where $(\cdot)^\dagger$ denotes matrix pseudo-inverse. It yields
\begin{equation}
	\mathrm{MSE}^{(lb)} \!= \! \! \frac{1}{{N \choose K}} \! \sum_\mathcal{S} \Tr \! \left \{ \! \left(\frac{1}{\sigma_x^2} \mathbf{I}_K \!+\! \frac{1}{\sigma_v^2} \mathbf{E}_\mathcal{S}^\top \mathbf{A}^{\!\top} (\mathbf{A}\mathbf{A}^{\! \top})^{\! \dagger} \mathbf{AE}_\mathcal{S} \! \right)^{\!-1} \! \right \}.
\end{equation}
Taking SVD, $\mathbf{A} = \mathbf{U}_a [\mathbf{\Gamma}_a \:\: \mathbf{0}_{N-M}] \mathbf{V}_a^\top$, it follows that 
\begin{equation} \label{eq:svd A}
	\mathbf{A}^{\!\top} (\mathbf{A}\mathbf{A}^{\! \top})^{\! \dagger} \mathbf{A} = \mathbf{V}_a 
	\left[ \begin{array}{c c} 
	   \mathbf{I}_M  & \mathbf{0}_{M \times (N-M)} \\ 
	   \mathbf{0}_{(N-M) \times M}  &   \mathbf{0}_{(N-M)\times (N-M)}  \\
	\end{array} \right]
	\mathbf{V}_a^\top.
\end{equation}

Applying \eqref{eq:svd A} into \eqref{eq:oracle_est_spe_3}, we have the following problem
\begin{equation} \label{eq:spec3_opt_prob}
\begin{aligned}
	&\underset{\mathbf{\Gamma}_a, \mathbf{V}_a }{\text{minimize}} \hspace{0.25cm} \sum_\mathcal{S} \Tr \left\{\bigg(\frac{1}{\sigma_x^2} \mathbf{I}_K \!+\! \frac{1}{\sigma_v^2} \mathbf{E}_\mathcal{S}^\top \mathbf{V}_a \!
	\left[ \! \! \begin{array}{c c} 
	   \mathbf{I}_M  & \mathbf{0}\\ 
	   \mathbf{0} &   \mathbf{0}  \\
	\end{array} \! \! \right] \! \mathbf{V}_a^{\! \top}
	 \mathbf{E}_\mathcal{S}  \bigg)^{\!-1} \right\}& \\
	& \text{subject to} \hspace{0.25cm} \frac{\sigma_x^2}{K}  \; \Tr \{  \mathbf{\Gamma}_a^2 \} \leq P .& 
\end{aligned}	
\end{equation}

We note that the objective function in \eqref{eq:spec3_opt_prob} can be minimized with respect to $\mathbf{U}_a$ independent of $\mathbf{\Gamma}_a$ in the constraint. Now, since $\mathbf{E}_\mathcal{S}^\top \mathbf{V}_a \mathbf{V}_a^\top \mathbf{E}_\mathcal{S} = \mathbf{I}_K$, the objective function in \eqref{eq:spec3_opt_prob} can be lower-bounded as 
\begin{equation} \label{eq:rewrite_lb}
\begin{aligned}
	 & \sum_\mathcal{S} \Tr \left\{\bigg( \mathbf{E}_\mathcal{S}^\top \mathbf{V}_a \!
	\left[ \! \! \begin{array}{c c} 
	   (\frac{1}{\sigma_x^2} \!+\! \frac{1}{\sigma_v^2}) \mathbf{I}_M  & \mathbf{0}\\ 
	   \mathbf{0} &   \frac{1}{\sigma_x^2} \mathbf{I}_{N-M}  \\
	\end{array} \! \! \right] \! \mathbf{V}_a^{\! \top}
	 \mathbf{E}_\mathcal{S}  \bigg)_{ii}^{\!-1} \right\}  &
\end{aligned}
\end{equation}
where by $(\cdot)_{ii}$ we denote the diagonal elements of the corresponding matrix. The lower-bound in \eqref{eq:rewrite_lb} is satisfied with equality if and only if  the matrix inside the trace-inverse operator becomes diagonal, which yields $\mathbf{V}_a = \mathbf{I}_N$. Also, from the constraint in \eqref{eq:spec3_opt_prob}, it follows that $\mathbf{\Gamma}_a$ can be an arbitrary diagonal matrix satisfying the transmission power constraint. For simplicity, we set $\mathbf{\Gamma}_a = \sqrt{\frac{KP}{M( \sigma_x^2 + K\sigma_v^2)}} \mathbf{I}_M$. Hence, the optimal sensing matrix has the structure in \eqref{eq:spec_case_3}.

\section{Proof of \proref{prop:sepc_4} }
We have
\begin{equation} \label{eq:taylor ser}
\begin{aligned}
&	\mathrm{MSE}^{(lb)} = \frac{1}{{N \choose K}} \sum_\mathcal{S} \Tr \left\{\left(\mathbf{R}^{-1} + \frac{g^2}{\sigma_w^2} \mathbf{D}_\mathcal{S}^\top \mathbf{Q} \mathbf{D}_\mathcal{S} \right)^{-1} \right\} & \\ 
	&\overset{(a)}{=} \! \frac{1}{{N \choose K}} \! \sum_\mathcal{S} \Tr  \! \left\{\mathbf{R} \!-\! \frac{g^2}{\sigma_w^2}  \mathbf{R} \mathbf{D}_\mathcal{S}^\top \mathbf{Q} \mathbf{D}_\mathcal{S}  \mathbf{R} \right\} 
	\!+\! \mathcal{O}(\|\frac{g^2}{\sigma_w^2}   \mathbf{D}_\mathcal{S}^\top \mathbf{Q} \mathbf{D}_\mathcal{S} \|_F^2),&
\end{aligned}
\end{equation}
where $(a)$ follows from Taylor series for the inverse term inside the trace operator in the first equation. Since $\frac{g^2}{\sigma_w^2} \rightarrow 0$, then by neglecting the higher moments, the optimization problem in \eqref{eq:opt 1} can be asymptotically approximated as
\begin{equation} \label{eq:opt asymp}
\begin{aligned}
	&\underset{\mathbf{Q}}{\text{maximize}} \hspace{0.25cm} \sum_\mathcal{S} \Tr \left\{\mathbf{R} \mathbf{D}_\mathcal{S}^\top \mathbf{Q} \mathbf{D}_\mathcal{S}  \mathbf{R} \right\} & \\
	& \text{subject to} \hspace{0.25cm} \Tr \{\mathbf{HR}_x \mathbf{H}^\top \mathbf{Q}\} 
	\leq P &  \\
		&\hspace{1.7cm} \mathbf{Q} \succeq \mathbf{0} , \hspace{0.15cm} \mathrm{rank}(\mathbf{Q}) = M.&
\end{aligned}
\end{equation}

Defining the full-rank symmetric positive definite matrix $\mathbf{T} \triangleq \sum_\mathcal{S}  \mathbf{D}_\mathcal{S} \mathbf{R}^2 \mathbf{D}_\mathcal{S}^\top$, and denoting $\mathbf{T}^{1/2}\mathbf{Q}\mathbf{T}^{1/2} \triangleq \mathbf{L}$, the optimization problem in \eqref{eq:opt asymp} can be rewritten as 
\begin{equation} \label{eq:opt asymp 2}
\begin{aligned}
	&\underset{\mathbf{L}}{\text{maximize}} \hspace{0.25cm} \Tr \left\{\mathbf{L} \right\} & \\
	& \text{subject to} \hspace{0.25cm} \Tr \{\mathbf{T}^{-1/2} \mathbf{HR}_x \mathbf{H}^\top \mathbf{T}^{-1/2} \mathbf{L}\} 
	\leq P &  \\
		&\hspace{1.7cm} \mathbf{L} \succeq \mathbf{0},  \hspace{0.15cm} \mathrm{rank}(\mathbf{L}) = M.&
\end{aligned}
\end{equation}

Let $\mathbf{Z} \triangleq \mathbf{T}^{-1/2} \mathbf{HR}_x \mathbf{H}^\top \mathbf{T}^{-1/2}$ have the EVD $\mathbf{Z} = \mathbf{U}_z \mathbf{\Gamma}_z \mathbf{U}_z^\top$. We also decompose $\mathbf{L}$ as $\mathbf{L} = \mathbf{U}_l \mathbf{\Gamma}_l \mathbf{U}_l^\top$, where $\mathbf{U}_z$  and $\mathbf{U}_l$ are unitary matrices, and $\mathbf{\Gamma}_z$ and $\mathbf{\Gamma}_l$ are diagonal matrices containing $\gamma_{z_i}$ and $\gamma_{l_i}$, respectively. In order to solve \eqref{eq:opt asymp 2}, we drop the rank constraint, and relax \eqref{eq:opt asymp 2} using \lemref{lem:constr bound} as
\begin{equation} \label{eq:opt asymp 3}
\begin{aligned}
	&\underset{\{ \gamma_{l_i} \}_{i=1}^L}{\text{maximize}} \hspace{0.25cm} \sum_{i=1}^L \gamma_{l_i} & \\
	& \text{subject to} \hspace{0.25cm} \sum_{i=1}^L \gamma_{z_i} \gamma_{l_i}
	\leq P , \hspace{0.25cm} \gamma_{l_i} \geq 0 \; \; , \; \; 1 \leq i \leq L,& 
\end{aligned}
\end{equation}
where $\gamma_{l_1} \geq \ldots \geq \gamma_{l_L}$ and $\gamma_{z_1} \leq \ldots \leq \gamma_{z_L}$.

Note that the optimization problem \eqref{eq:opt asymp 2}, without the rank constraint, and \eqref{eq:opt asymp 3} become equivalent when $\mathbf{Z L}$ is diagonal. This holds when $\mathbf{U}_l = \mathbf{U}_z$, where the columns of $\mathbf{U}_z$ are associated with the eigen-values of $\mathbf{Z}$ in an increasing order. Now, it only remains to solve \eqref{eq:opt asymp 3}. It is well-known that the objective function in \eqref{eq:opt asymp 3} is maximized by letting $\gamma_{l_1} = \frac{P}{\gamma{z_1}}$, and $\gamma_{l_2} = \ldots = \gamma_{l_L} = 0$. Thus, it follows that
\begin{equation} \label{eq:sol asymp}
	\mathbf{Q}^\star =  \mathbf{T}^{-1/2}\mathbf{U}_z \mathrm{diag} \left(\frac{P}{\gamma_{z_1}},0,\ldots,0 \right)\mathbf{U}_z^\top \mathbf{T}^{-1/2}.
\end{equation}

From \eqref{eq:sol asymp}, it is observed that $\mathbf{Q}^\star$ has only one non-zero eigen-value. Using EVD of $\mathbf{Q}^\star$, we have $\mathbf{Q}^\star = \mathbf{U}_q  \mathrm{diag} \left(\gamma_q,0,\ldots,0 \right)  \mathbf{U}_q^\top$, where $\gamma_q > 0$ denotes the non-zero eigen-value of $\mathbf{Q}^\star$. Now, let the SVD of $\mathbf{A}$ be $\mathbf{A} = \mathbf{U}_a [\mathbf{\Gamma}_a \; \; \mathbf{0}_{M \times (L-M)}]\mathbf{V}_a^\top$, where $\mathbf{U}_a \in \mathbb{R}^{M \times M}$ and $\mathbf{V}_a \in \mathbb{R}^{L \times L}$ are unitary matrices, and $\mathbf{\Gamma}_a \in \mathbb{R}^{M \times M}$ is a diagonal matrix. From $\mathbf{Q} = \mathbf{A}^\top \mathbf{A}$, it is concluded that the optimal sensing matrix can be expressed as in \eqref{eq:sol asymp2}.

\section{Proof of \theoref{theo:mult_ter_ortho}}
Using the matrix inversion lemma, we obtain 
\begin{equation} \label{eq:mtx inv_dist}
\begin{aligned}
	\widetilde{\mathbf{R}}_n^{-1} \!  
	&= \!\mathrm{blkdiag} \! \left(\! \sigma_{w_1}^{\! -2} \mathbf{I}_{M_1} \! \!-\! \sigma_{w_1}^{-2} \mathbf{A}_1 \left(\frac{\sigma_{w_1}^{2}}{g_1^2 \sigma_{v_1}^{2}} \mathbf{I}_{L_1} \! \!+\! \mathbf{A}_1^\top \mathbf{A}_1 \right)^{\! -1} \!\mathbf{A}_1^\top, \right.& \\
	&\left. \sigma_{w_2}^{-2} \mathbf{I}_{M_2} \!-\! \sigma_{w_2}^{-2} \mathbf{A}_2\left(\frac{\sigma_{w_2}^{2}}{g_2^2 \sigma_{v_2}^{2}} \mathbf{I}_{L_2} \!+\! \mathbf{A}_2^\top \mathbf{A}_2 \right)^{-1} \mathbf{A}_2^\top \! \right).& 
\end{aligned}
\end{equation}
Defining $\widetilde{\mathbf{E}}_\mathcal{S} \triangleq \widetilde{\mathbf{H}} \mathbf{J}  \mathbf{E}_\mathcal{S}$, by plugging \eqref{eq:mtx inv_dist} back into \eqref{eq:oracle_MSE_dist}, it follows that
\begin{equation} \label{eq:MSE_lb_dist_final_ortho}
 \mathrm{MSE}_o^{(lb)} = \sum_\mathcal{S} \frac{1}{{N \choose K}} \Tr \left\{ \bigg(\mathbf{R}^{-1} \!+\! \widetilde{\mathbf{E}}_\mathcal{S}^\top \mathbf{S}  \widetilde{\mathbf{E}}_\mathcal{S} \!-\! \widetilde{\mathbf{E}}_\mathcal{S}^\top  \mathbf{T}  \widetilde{\mathbf{E}}_\mathcal{S} \bigg)^{-1}  \right\},
\end{equation}
where 
\begin{equation} \label{eq:def acc 2}
\begin{aligned}
	\mathbf{S} &\triangleq \mathrm{blkdiag}\left(\frac{g_1^2}{\sigma_{w_1}^2} \mathbf{Q}_1, \frac{g_2^2}{\sigma_{w_2}^2} \mathbf{Q}_2 \right)& \\
	\mathbf{T} &\triangleq \mathrm{blkdiag}\left(\frac{g_1^2}{\sigma_{w_1}^2} \mathbf{Q}_1 \left(\frac{\sigma_{w_1}^2}{g_1^2 \sigma_{v_1}^2} \mathbf{I}_{L_1} + \mathbf{Q}_1 \right)^{-1} \mathbf{Q}_1 \; ,  \right. &  \\
	&\hspace{1.75cm} \left. \frac{g_2^2}{\sigma_{w_2}^2} \mathbf{Q}_2  \left(\frac{\sigma_{w_2}^2}{g_2^2 \sigma_{v_2}^2} \mathbf{I}_{L_2} + \mathbf{Q}_2 \right)^{-1} \mathbf{Q}_2 \right),& \\
	\mathbf{Q}_1 &\triangleq \mathbf{A}_1^\top \mathbf{A}_1 , \; \; \mathbf{Q}_2 \triangleq \mathbf{A}_2^\top \mathbf{A}_2 .& 
\end{aligned}
\end{equation}

Introducing the semidefinite slack variable matrix $\widetilde{\mathbf{X}}_\mathcal{S} \in \mathbb{R}^{K \times K}$, we equivalently solve
\begin{equation} \label{eq:opt 1_reform3_ortho}
\begin{aligned}
	&\underset{\mathbf{Q}_l, \widetilde{\mathbf{X}}_\mathcal{S}}{\text{minimize}} \hspace{0.25cm} \sum_\mathcal{S} \Tr \{\widetilde{\mathbf{X}}_\mathcal{S}\}	& \\
	&\text{subject to} \hspace{0.25cm} \bigg(\mathbf{R}^{-1} \!+\! \widetilde{\mathbf{E}}_\mathcal{S}^\top \mathbf{S}  \widetilde{\mathbf{E}}_\mathcal{S} \!-\! \widetilde{\mathbf{E}}_\mathcal{S}^\top  \mathbf{T} \widetilde{\mathbf{E}}_\mathcal{S} \bigg)^{-1} \preceq  \widetilde{\mathbf{X}}_\mathcal{S}& \\
	 &\hspace{1.7cm} \sum_{l=1}^2 \Tr \{(\mathbf{H}_l \mathbf{R}_x \mathbf{H}_l^\top) \mathbf{Q}_l + \sigma_{v_l}^2 \mathbf{Q}_l\} \leq P & \\
	&\hspace{1.7cm} \mathbf{Q}_l \succeq \mathbf{0} , \hspace{0.15cm} \mathrm{rank}(\mathbf{Q}_l) = M_l, \;\; \forall{l}, \; \forall \mathcal{S} &
\end{aligned}
\end{equation}

Now, applying the Schur's complement, the first constraint in \eqref{eq:opt 1_reform3_ortho} can be rewritten as the positive semi-definite constraint 
\begin{equation} \label{eq:opt 1_reform4_ortho}
\begin{aligned}
&\left[\! \! \!
\begin{array}{c c}
 \mathbf{R}^{-1} +  \widetilde{\mathbf{E}}_\mathcal{S}^\top  \mathbf{S} \widetilde{\mathbf{E}}_\mathcal{S}  \!-\! \widetilde{\mathbf{E}}_\mathcal{S}^\top  \mathbf{T} \widetilde{\mathbf{E}}_\mathcal{S} & \mathbf{I}_K \\ 
  \mathbf{I}_K  &   \mathbf{X}_\mathcal{S}   \\
\end{array}
\right] \succeq \mathbf{0} .& 
\end{aligned}
\end{equation}

Since $\mathbf{T}$ is a block diagonal matrix, by introducing another two slack semidefinite variable matrices  $\mathbf{Y}_{1} \in \mathbb{R}^{L_1 \times L_1} , \mathbf{Y}_{2} \in \mathbb{R}^{L_2 \times L_2}$, the constraint \eqref{eq:opt 1_reform4_ortho} can be decomposed into 
\begin{equation*} \label{eq:opt 1_reform5_ortho}
\begin{aligned}
 &\left[\! \! \!
\begin{array}{c c}
 \mathbf{R}^{-1} \!+\!  \widetilde{\mathbf{E}}_\mathcal{S}^\top \mathbf{S}  \widetilde{\mathbf{E}}_\mathcal{S}  \!-\! \widetilde{\mathbf{E}}_\mathcal{S}^\top  \mathrm{blkdiag}(\mathbf{Y}_{1} , \mathbf{Y}_{2})   \widetilde{\mathbf{E}}_\mathcal{S} & \mathbf{I}_K \\ 
  \mathbf{I}_K  &   \mathbf{X}_\mathcal{S}   \\
\end{array}
\right] \succeq \mathbf{0} ,
\end{aligned}
\end{equation*}

\begin{equation*} \label{eq:opt 1_reform6_ortho}
\left[
\begin{array}{c c}
	   \mathbf{Y}_l & \frac{g_l}{\sigma_{w_l}}\mathbf{Q}_l \\ 
	  \frac{g_l}{\sigma_{w_l}}  \mathbf{Q}_l  &  \frac{\sigma_{w_l}^2}{g_l^2 \sigma_{v_l}^2} \mathbf{I}_{L_l}  + \mathbf{Q}_l \\
	\end{array}
	\right] \succeq \mathbf{0} , \; \; \forall l , \forall \mathcal{S},
\end{equation*}
which concludes the proof. 

\section{Proof of \theoref{theo:mult_ter_coh}}
We omit the proof of the theorem since it is similar to the proofs of \theoref{theo:sing_ter} and \theoref{theo:mult_ter_ortho} by introducing slack variables and by applying the Schur's complement. 

\end{appendices}

\bibliographystyle{IEEEtran}
\bibliography{IEEEfull,bibliokthPasha}

\begin{thebibliography}{10}
\providecommand{\url}[1]{#1}
\csname url@samestyle\endcsname
\providecommand{\newblock}{\relax}
\providecommand{\bibinfo}[2]{#2}
\providecommand{\BIBentrySTDinterwordspacing}{\spaceskip=0pt\relax}
\providecommand{\BIBentryALTinterwordstretchfactor}{4}
\providecommand{\BIBentryALTinterwordspacing}{\spaceskip=\fontdimen2\font plus
\BIBentryALTinterwordstretchfactor\fontdimen3\font minus
  \fontdimen4\font\relax}
\providecommand{\BIBforeignlanguage}[2]{{%
\expandafter\ifx\csname l@#1\endcsname\relax
\typeout{** WARNING: IEEEtran.bst: No hyphenation pattern has been}%
\typeout{** loaded for the language `#1'. Using the pattern for}%
\typeout{** the default language instead.}%
\else
\language=\csname l@#1\endcsname
\fi
#2}}
\providecommand{\BIBdecl}{\relax}
\BIBdecl

\bibitem{06:Donoho}
D.~Donoho, M.~Elad, and V.~Temlyakov, ``Stable recovery of sparse overcomplete
  representations in the presence of noise,'' \emph{IEEE Trans. Inf. Theory},
  vol.~52, no.~1, pp. 6 -- 18, Jan. 2006.

\bibitem{06:Candes}
E.~Candes, J.~Romberg, and T.~Tao, ``Robust uncertainty principles: exact
  signal reconstruction from highly incomplete frequency information,''
  \emph{IEEE Trans. Inf. Theory}, vol.~52, no.~2, pp. 489 -- 509, Feb. 2006.

\bibitem{08:Candes}
E.~Candes and M.~Wakin, ``An introduction to compressive sampling,'' \emph{IEEE
  Signal Processing Magazine}, vol.~25, no.~2, pp. 21 --30, Mar. 2008.

\bibitem{06:Candes2}
E.~Candes, J.~Romberg, and T.~Tao, ``Stable signal recovery from incomplete and
  inaccurate measurements,'' \emph{Comm. Pure Appl. Math}, vol.~59, no.~8, pp.
  1207--1223, 2006.

\bibitem{07:Candes}
E.~Candes and T.~Tao, ``Rejoinder: the {Dantzig} selector: statistical
  estimation when $p$ is much larger than $n$,'' \emph{Annals of Statistics},
  vol.~35, pp. 2392 -- 2404, 2007.

\bibitem{07:Tropp}
J.~Tropp and A.~Gilbert, ``Signal recovery from random measurements via
  orthogonal matching pursuit,'' \emph{IEEE Trans. Inf. Theory}, vol.~53,
  no.~12, pp. 4655 --4666, Dec. 2007.

\bibitem{08:Blumensath}
T.~Blumensath and M.~Davies, ``Gradient pursuits,'' \emph{IEEE Trans. Sig.
  Proc.}, vol.~56, no.~6, pp. 2370 --2382, June 2008.

\bibitem{09:Dai}
W.~Dai and O.~Milenkovic, ``Subspace pursuit for compressive sensing signal
  reconstruction,'' \emph{IEEE Trans. Inf. Theory}, vol.~55, no.~5, pp. 2230
  --2249, May 2009.

\bibitem{12:Saikat}
S.~Chatterjee, D.~Sundman, M.~Vehkapera, and M.~Skoglund, ``Projection-based
  and look-ahead strategies for atom selection,'' \emph{IEEE Trans. Sig.
  Proc.}, vol.~60, no.~2, pp. 634 --647, Feb. 2012.

\bibitem{07:Larsson}
E.~G. Larsson and Y.~Selen, ``Linear regression with a sparse parameter
  vector,'' \emph{IEEE Trans. Sig. Proc.}, vol.~55, no.~2, pp. 451 --460, Feb.
  2007.

\bibitem{08:Ji}
S.~Ji, Y.~Xue, and L.~Carin, ``{Bayesian} compressive sensing,'' \emph{IEEE
  Trans. Sig. Proc.}, vol.~56, no.~6, pp. 2346 --2356, June 2008.

\bibitem{09:Elad}
M.~Elad and I.~Yavneh, ``A plurality of sparse representations is better than
  the sparsest one alone,'' \emph{IEEE Trans. Inf. Theory}, vol.~55, no.~10,
  pp. 4701 --4714, Oct. 2009.

\bibitem{10:Protter}
M.~Protter, I.~Yavneh, and M.~Elad, ``Closed-form {MMSE} estimation for signal
  denoising under sparse representation modeling over a unitary dictionary,''
  \emph{IEEE Trans. Sig. Proc.}, vol.~58, no.~7, pp. 3471 --3484, Jul. 2010.

\bibitem{12:Kun}
K.~Qiu and A.~Dogandzic, ``Sparse signal reconstruction via {ECME} hard
  thresholding,'' \emph{IEEE Trans. Sig. Proc.}, vol.~60, no.~9, pp. 4551
  --4569, Sep. 2012.

\bibitem{14:Shirazi}
A.~Shirazinia, S.~Chatterjee, and M.~Skoglund, ``Joint source-channel vector
  quantization for compressed sensing,'' \emph{IEEE Trans. Sig. Proc.},
  vol.~62, no.~14, pp. 3667--3681, July 2014.

\bibitem{13:Pasha-icassp2}
------, ``Channel-optimized vector quantizer design for compressed sensing
  measurements,'' in \emph{Proc. IEEE International Conference on Acoustics,
  Speech and Signal Processing (ICASSP)}, May 2013, pp. 4648--4652.

\bibitem{11:Duarte}
M.~Duarte and Y.~Eldar, ``Structured compressed sensing: From theory to
  applications,'' \emph{IEEE Trans. Sig. Proc.}, vol.~59, no.~9, pp.
  4053--4085, Sep. 2011.

\bibitem{01:Donoho}
D.~Donoho and X.~Huo, ``Uncertainty principles and ideal atomic
  decomposition,'' \emph{IEEE Trans. Inf. Theory}, vol.~47, no.~7, pp. 2845
  --2862, Nov 2001.

\bibitem{10:Haim}
Z.~Ben-Haim, Y.~Eldar, and M.~Elad, ``Coherence-based performance guarantees
  for estimating a sparse vector under random noise,'' \emph{IEEE Trans. Sig.
  Proc.}, vol.~58, no.~10, pp. 5030--5043, Oct. 2010.

\bibitem{07:Elad}
M.~Elad, ``Optimized projections for compressed sensing,'' \emph{IEEE Trans.
  Sig. Proc.}, vol.~55, no.~12, pp. 5695--5702, Dec. 2007.

\bibitem{11:Zelnik}
L.~Zelnik-Manor, K.~Rosenblum, and Y.~Eldar, ``Sensing matrix optimization for
  block-sparse decoding,'' \emph{IEEE Trans. Sig. Proc.}, vol.~59, no.~9, pp.
  4300--4312, Sep. 2011.

\bibitem{10:vahid}
V.~Abolghasemi, S.~Ferdowsi, B.~Makkiabadi, and S.~Sanaei, ``On optimization of
  the measurement matrix for compressive sensing,'' in \emph{European Signal
  Processing Conference}, Aug. 2010.

\bibitem{10:vahid2}
V.~Abolghasemi, D.~Jarchi, and S.~Sanei, ``A robust approach for optimization
  of the measurement matrix in compressed sensing,'' in \emph{Int. Workshop
  Cognitive Inf. Proc.}, June 2010, pp. 388--392.

\bibitem{13:Gang}
G.~Li, Z.~Zhu, D.~Yang, L.~Chang, and H.~Bai, ``On projection matrix
  optimization for compressive sensing systems,'' \emph{IEEE Trans. Sig.
  Proc.}, vol.~61, no.~11, pp. 2887--2898, Jun. 2013.

\bibitem{09:Duarte}
J.~Duarte-Carvajalino and G.~Sapiro, ``Learning to sense sparse signals:
  Simultaneous sensing matrix and sparsifying dictionary optimization,''
  \emph{IEEE Trans. Image Proc.}, vol.~18, no.~7, pp. 1395--1408, July 2009.

\bibitem{11:Aharon}
M.~Aharon, M.~Elad, and A.~Bruckstein, ``{K -SVD:} an algorithm for designing
  overcomplete dictionaries for sparse representation,'' \emph{IEEE Trans. Sig.
  Proc.}, vol.~54, no.~11, pp. 4311--4322, Nov. 2006.

\bibitem{08:Shihao}
S.~Ji, Y.~Xue, and L.~Carin, ``Bayesian compressive sensing,'' \emph{IEEE
  Trans. Sig. Proc.}, vol.~56, no.~6, pp. 2346 --2356, Jun. 2008.

\bibitem{11:Turek}
J.~Turek, I.~Yavneh, and M.~Elad, ``On {MMSE} and map {Denoising} under sparse
  representation modeling over a unitary dictionary,'' \emph{IEEE Trans. Sign.
  Proc.}, vol.~59, no.~8, pp. 3526--3535, Aug. 2011.

\bibitem{12:Peleg}
T.~Peleg, Y.~Eldar, and M.~Elad, ``Exploiting statistical dependencies in
  sparse representations for signal recovery,'' \emph{IEEE Trans. Sign. Proc.},
  vol.~60, no.~5, pp. 2286--2303, May 2012.

\bibitem{14:Turek}
J.~S. Turek, I.~Yavneh, and M.~Elad, ``On {MAP} and {MMSE} estimators for the
  co-sparse analysis model,'' \emph{Digital Signal Processing}, vol.~28, pp.
  57--74, 2014.

\bibitem{11:Zhilin}
Z.~Zhang and B.~Rao, ``Sparse signal recovery with temporally correlated source
  vectors using sparse {Bayesian} learning,'' \emph{IEEE Journal Select. Topics
  Sig. Proc.}, vol.~5, no.~5, pp. 912--926, Sep. 2011.

\bibitem{12:Chen}
W.~Chen, M.~R.~D. Rodrigues, and I.~Wassell, ``On the use of unit-norm tight
  frames to improve the average {MSE} performance in compressive sensing
  applications,'' \emph{IEEE Signal Proc. Lett.}, vol.~19, no.~1, pp. 8--11,
  Jan. 2012.

\bibitem{08:Jin}
J.-J. Xiao, S.~Cui, Z.-Q. Luo, and A.~Goldsmith, ``Linear coherent
  decentralized estimation,'' \emph{IEEE Trans. Sig. Proc.}, vol.~56, no.~2,
  pp. 757--770, Feb. 2008.

\bibitem{07:Schizas}
I.~Schizas, G.~Giannakis, and Z.-Q. Luo, ``Distributed estimation using
  reduced-dimensionality sensor observations,'' \emph{IEEE Trans. Sig. Proc.},
  vol.~55, no.~8, pp. 4284--4299, Aug. 2007.

\bibitem{14:Yuan}
Y.~Wang, H.~Wang, and L.~Scharf, ``Optimum compression of a noisy measurement
  for transmission over a noisy channel,'' \emph{IEEE Trans. Sig. Proc.},
  vol.~62, no.~5, pp. 1279--1289, Mar. 2014.

\bibitem{08:Kovac}
J.~Kovacevic and A.~Chebira, \emph{An Introduction to Frames}, ser. Foundations
  and trends in signal processing.\hskip 1em plus 0.5em minus 0.4em\relax Now
  Publishers, 2008.

\bibitem{03:Spall}
J.~C. Spall, \emph{Introduction to Stochastic Search and Optimization},
  1st~ed.\hskip 1em plus 0.5em minus 0.4em\relax New York, NY, USA: John Wiley
  \& Sons, Inc., 2003.

\bibitem{93:Kay}
S.~Kay, \emph{Fundamentals of Statistical Signal Processing: Estimation
  Theory}.\hskip 1em plus 0.5em minus 0.4em\relax Englewood Cliffs, NJ:
  Prentice Hall, 1993.

\bibitem{12:Lopes}
\BIBentryALTinterwordspacing
M.~E. Lopes, ``Estimating unknown sparsity in compressed sensing,''
  \emph{CoRR}, vol. abs/1204.4227, 2012. [Online]. Available:
  \url{http://arxiv.org/abs/1204.4227}
\BIBentrySTDinterwordspacing

\bibitem{10:Needell}
D.~Needell and J.~A. Tropp, ``Cosamp: Iterative signal recovery from incomplete
  and inaccurate samples,'' \emph{Commun. ACM}, vol.~53, no.~12, pp. 93--100,
  Dec. 2010.

\bibitem{10:Elad_book}
M.~Elad, \emph{Sparse and Redundant Representations: From Theory to
  Applications in Signal and Image Processing}, 1st~ed.\hskip 1em plus 0.5em
  minus 0.4em\relax Springer Publishing Company, Incorporated, 2010.

\bibitem{04:Boyd_book}
S.~Boyd and L.~Vandenberghe, \emph{Convex Optimization}.\hskip 1em plus 0.5em
  minus 0.4em\relax Cambridge University Press, 2004.

\bibitem{11:Kokiopoulou}
E.~Kokiopoulou, J.~Chen, and Y.~Saad, ``Trace optimization and eigenproblems in
  dimension reduction methods,'' \emph{Numerical Linear Algebra with
  Applications}, vol.~18, no.~3, pp. 565--602, 2011.

\bibitem{10:Zhi}
Z.-Q. Luo, W.-K. Ma, A.-C. So, Y.~Ye, and S.~Zhang, ``Semidefinite relaxation
  of quadratic optimization problems,'' \emph{IEEE Sig. Proc. Mag.}, vol.~27,
  no.~3, pp. 20--34, May 2010.

\bibitem{06:Zhi-Quan}
Z.-Q. Luo and W.~Yu, ``An introduction to convex optimization for
  communications and signal processing,'' \emph{IEEE Journal Select. Areas
  Commun.}, vol.~24, no.~8, pp. 1426--1438, Aug. 2006.

\bibitem{cvx}
M.~Grant and S.~Boyd, ``{CVX}: Matlab software for disciplined convex
  programming, version 1.21,'' \url{http://cvxr.com/}, Apr. 2011.

\bibitem{01:Loyka}
S.~Loyka, ``Channel capacity of {MIMO} architecture using the exponential
  correlation matrix,'' \emph{IEEE Commun. Lett.}, vol.~5, no.~9, pp. 369--371,
  Sep. 2001.

\bibitem{11:Marshall}
A.~W. Marshall, I.~Olkin, and B.~C. Arnold, \emph{Inequalities : Theory of
  Majorization and its Applications}.\hskip 1em plus 0.5em minus 0.4em\relax
  New York: Springer Science+Business Media, LLC, 2011.

\bibitem{03:Shengli}
S.~Zhou and G.~Giannakis, ``Optimal transmitter eigen-beamforming and
  space-time block coding based on channel correlations,'' \emph{IEEE Trans.
  Inf. Theory}, vol.~49, no.~7, pp. 1673--1690, 2003.

\end{thebibliography}

\end{document}